\newtheorem{prop}{Proposition}
\newtheorem{thm}{Theorem}
\newtheoremstyle{remark2}{1ex}{1ex}%
      {}
      {}
      {\bf}
      {.}
      {5pt}
      {\thmname{#1}\thmnumber{ #2}\thmnote{ \slshape{(#3)}}} 
\theoremstyle{remark2}
\newtheorem{rem}{Remark}
\newtheorem{defn}{Definition}
\renewenvironment{proof}[1][\bfseries\proofname]{\par
   \pushQED{\qed}%
   \normalfont \topsep6\p@\@plus6\p@\relax
   \trivlist
   \item[\hskip\labelsep
     #1\@addpunct{:}]\ignorespaces
}{%
   \popQED\endtrivlist\@endpefalse
}
\newcommand{\Comments}{1}
\newcommand{\mynote}[2]{\ifnum\Comments=1\textcolor{#1}{#2}\fi}
\newcommand{\mytodo}[2]{\ifnum\Comments=1%
  \todo[linecolor=#1!80!black,backgroundcolor=#1,bordercolor=#1!80!black]{#2}\fi}
\newcommand{\p}{\operatorname{P}}
\newcommand{\D}{\,\mathrm{d}}
\newcommand{\loc}{\operatorname{loc}}
\newcommand{\dGDP}{\Delta\text{GDP}}
\begin{document}

\baselineskip18pt
\renewcommand\floatpagefraction{.9}
\renewcommand\topfraction{.9}
\renewcommand\bottomfraction{.9}
\renewcommand\textfraction{.1}
\setcounter{totalnumber}{50}
\setcounter{topnumber}{50}
\setcounter{bottomnumber}{50}
\abovedisplayskip1.5ex plus1ex minus1ex
\belowdisplayskip1.5ex plus1ex minus1ex
\abovedisplayshortskip1.5ex plus1ex minus1ex
\belowdisplayshortskip1.5ex plus1ex minus1ex

\title{On Testing Equal Conditional Predictive Ability Under Measurement Error\thanks{The authors would like to thank Tobias Fissler, Christoph Hanck, Onno Kleen and Andrew Patton for their very insightful comments on an earlier version of this manuscript.  The authors are also grateful to seminar participants at the University of Bonn and the Heidelberg Institute for Theoretical Studies. The first author thankfully acknowledges support of the German Research Foundation (DFG) through project HO 6305/1-1.}}

\author{
	Yannick Hoga\thanks{\textbf{Corresponding author:} Faculty of Economics and Business Administration, University of Duisburg-Essen, Universit\"atsstra\ss e 12, D--45117 Essen, Germany, \href{mailto:yannick.hoga@vwl.uni-due.de}{yannick.hoga@vwl.uni-due.de}.}
\and
	Timo Dimitriadis\thanks{Alfred Weber Institute of Economics, Heidelberg University, Bergheimer Str. 58, D--69115 Heidelberg, Germany, and Heidelberg Institute for Theoretical Studies, \href{mailto:timo.dimitriadis@awi.uni-heidelberg.de}{timo.dimitriadis@awi.uni-heidelberg.de}.} 
}

\date{\today}
\maketitle

\begin{abstract}
\noindent Loss functions are widely used to compare several competing forecasts. However, forecast comparisons are often based on mismeasured proxy variables for the true target. We introduce the concept of \textit{exact robustness to measurement error} for loss functions and fully characterize this class of loss functions as the Bregman class.
For such exactly robust loss functions, forecast loss differences are on average unaffected by the use of proxy variables and, thus, inference on conditional predictive ability can be carried out as usual.  Moreover, we show that more precise proxies give predictive ability tests higher power in discriminating between competing forecasts. Simulations illustrate the different behavior of exactly robust and non-robust loss functions. An empirical application to US GDP growth rates demonstrates that it is easier to discriminate between forecasts issued at different horizons if a better proxy for GDP growth is used. \\

\noindent \textbf{Keywords:} Equal Predictive Ability, Forecasting, Hypothesis Testing, Measurement Error  \\
\noindent \textbf{JEL classification:} C12 (Hypothesis Testing), C52 (Model Evaluation, Validation, and Selection), C53	(Forecasting and Prediction Methods)

\end{abstract}


\begin{bibunit}

\section{Motivation}

\doublespacing   

Due to the central role of forecasts in economic policy, business, climate research and beyond, forecast comparisons have a long tradition. Such comparisons rely on a (statistically or economically motivated) loss function that measures the loss as a function of the issued forecast and the realization of the target variable.
Since the seminal contribution of \citet{DM95}, tests of equal predictive ability (EPA) have played a central role in comparing competing forecasts. \citet{GW06} extend EPA tests by introducing tests of equal \textit{conditional} predictive ability (ECPA), where the conditioning is, e.g., on current economic conditions. The null hypothesis of ECPA tests is that the conditional mean of the forecast losses are identical.


E(C)PA tests are studied extensively under estimation error in the forecasts (see, e.g., \citealp{Wes96, ClarkMcCracken2001, Pat20}). However, the effect of measurement error in the observed target variable has not received as much attention. Some exceptions---to be discussed below---are the works of \citet{Pat11}, \citet{Laurent2013} and \citet{LP18}.
Nonetheless, measurement error is present in many economic and financial time series. Examples in economics include the gross domestic product (GDP) \citep{Aea16}, inflation rates \citep{CM09,FS16}, and job earnings \citep{AS13}. 
In finance, the conditional variance can only be approximated by the squared return or high-frequency measures such as realized volatility \citep{Aea13}. 
Examples beyond economics and finance include, among others, meteorological applications such as the measurement of precipitation or wind speeds \citep{Fer17}.

As a consequence, many forecast comparisons are carried out with approximated, mismeasured target variables, also called \textit{proxies}.
In such a case, the forecast losses---as measured by some loss function---may be systematically different from those obtained using the actual, but latent, target variable. Hence, differences in predictive ability may be clouded by the use of such proxies. In this paper, we derive conditions under which ECPA tests can be validly carried out if only some (conditionally unbiased) proxy for the target variable is available. If several alternative proxies are available, we further derive conditions which proxy entails the most powerful tests.

To do so, we define a loss function to be \textit{exactly robust to measurement error} if the (conditional) expectation of the forecast loss differences is unchanged when using the proxy instead of the true target variable. Since most of the literature on forecast evaluation is concerned with univariate quantities \citep{Gne11,Pat11}, it is worth stressing that the target variable and the forecasts may be multivariate here. Some work on characterizing strictly consistent loss functions for the specific multivariate mean functional can be found in \citet{BGW05}, \citet{Laurent2013} and \citet{FK15}.

Our first main contribution is to characterize the loss functions that are exactly robust to measurement error. We show that the class of exactly robust loss functions coincides with the Bregman loss functions \citep[Def.~1]{BGW05}, and the class to which \cite{Pat11} refers as ``robust'' loss functions in the univariate case. 
This implies that only conditional mean forecasts can be compared robustly in ECPA tests.
While this is of course rather restrictive, for many economic variables that are measured with error, the conditional mean is precisely the object of interest; e.g., the conditional mean of GDP growth or inflation, or the conditional mean of squared asset returns (which commonly coincides with the conditional return variance).
The importance of the conditional mean is further underscored by the prevalence in economics of (V)ARMA-type models, which are designed to dynamically model the conditional mean.

The studies most closely related to our first contribution are those of \citet{Pat11} (in the univariate case) and \citet{Laurent2013} (in the multivariate case), both of which are devoted to the specific task of comparing conditional variance forecasts for financial returns using high-frequency (HF) proxies.
\citet{Pat11} (later on extended by \citet{Laurent2013}) characterizes loss functions that give consistent \textit{relative} rankings when only a conditionally unbiased proxy for the conditional variance is available. Our exact robustness is a stronger requirement than \citeauthor{Pat11}'s \citeyearpar{Pat11} ordering robustness. While under exact robustness two forecasts have the same conditional predictive ability for the true target and the proxy, ordering robustness merely implies that the ordering of the two forecasts is preserved when a proxy is used instead of the true target; however, the magnitude in predictive ability may be changed. For instance, the average forecast loss differences may be large for the true target, yet very small when the proxy is used. Thus, while the ranking (in population) is preserved, it may be much harder to discriminate between the two forecasts in finite samples when only the proxy is available. In contrast, under exact robustness, the magnitude of the expected differences is identical for the true target and the proxy. Thus, our concept of exact robustness almost immediately implies that ECPA tests can be carried out as usual under measurement error, in particular allowing us to do a local power analysis. While our exact robustness is a stronger requirement than \citeauthor{Pat11}'s \citeyearpar{Pat11} ordering robustness, we show in Theorem~\ref{thm:1} that---surprisingly---the respective classes of loss functions coincide.
By doing so, we also refine the results of \citet{Pat11} along several dimensions; see Remarks~\ref{rem:1} and~\ref{rem:Patton} for details.

Another important aspect of our first main contribution is that---unlike \citet{Pat11} and \citet{Laurent2013}---we do not restrict attention to the mean functional from the outset. Thus, by narrowing down the class of functionals that can be evaluated robustly to the mean functional, we are able to show that (e.g.) the ranking of median forecasts is affected by noisy proxies. 
In Appendix~\ref{sec:NonRobustQuantiles}, we strengthen this result by showing that median (and more generally, quantile) forecasts cannot be evaluated robustly, even when the conditional unbiasedness assumption is replaced by \textit{any} other ``resemblance condition'' on the proxy. In other words, a robust evaluation of quantile forecasts requires the proxy and the true target to coincide, that is, it requires the absence of measurement error. However, the absolute error loss---pertaining to median forecasts---has regularly been used in comparing forecasts for mismeasured variables, such as inflation \citep{Han05,Mea21}, GDP growth \citep{RW09,BK14} and integrated variances \citep{HL05}. In each case, our results suggest that these comparisons should be interpreted with extreme caution, due to the non-robustness of the median.

Our second main contribution is to study the local power of ECPA tests using proxy variables and exactly robust loss functions. We demonstrate that power increases for more accurate proxies. The (infeasible) upper bound for the test power is obtained when evaluating forecasts with the most accurate proxy. In our case, this ``proxy'' is the---generally even ex post---latent target functional, i.e., the conditional mean of the target variable.
This supports the intuition that it is easier to discriminate between competing forecasts if the target is approximated more precisely.


Our simulations show that the asymptotic local power of the proxy-based ECPA test provides a good approximation in finite samples. We further demonstrate the dangers of using non-robust loss functions for comparing predictive accuracy with proxy variables. Specifically, size distortions may arise and, for certain alternatives, a loss of power occurs. These drawbacks, instead of getting less serious in larger samples, get more pronounced as the sample size increases.

We apply our proxy-based ECPA test to GDP growth rates in the US. GDP is a measure of the aggregate real output of the economy and, as such, is perhaps the most important macroeconomic indicator. However, GDP (and, hence, also GDP growth) cannot be measured exactly for various reasons. For instance, tax returns are incorporated into the national accounts only over time, leading to frequent revisions of GDP estimates (and thus different \textit{vintages}, i.e., series of GDP releases). Also, the US Bureau of Economic Analysis relies on economic census data collected only once every five years for computing GDP. Hence, GDP estimates are inherently based on some extrapolation, leading to error. Thus, for comparing forecasts, we can only use approximations of true GDP growth.

Several proxies for true GDP growth (denoted $\dGDP$) are available \citep{LSF08}. The arguably most popular proxy is the expenditure-side approximation $\dGDP_E$, followed by the income-side proxy $\dGDP_I$. 
Our third proxy, $\dGDP_+$ from \citet{Aea16} combines both of these information sources and can be regarded as a more precise proxy of latent GDP growth. 
Informed by our theory, we anticipate that using $\dGDP_+$ in ECPA tests leads to better discrimination between different forecasts. This is indeed what we find when comparing mean predictions from the Survey of Professional Forecasters (SPF) issued at different horizons. Naturally, we expect that $\dGDP$ forecasts for some time $t$ that were issued one quarter ago to be superior to those issued two or even four quarters ago. We confirm this and find that the evidence in favor of shorter horizon SPF forecasts is more convincing, the more precise the proxy. We obtain similar results for Greenbook forecasts, and also for more recent vintages of $\dGDP_E$, where later vintages typically provide more accurate proxies of true GDP growth.

The remainder of the paper proceeds as follows. 
In Section~\ref{Main Results}, we define \textit{exact/ordering robustness to measurement error} for loss functions, characterize these loss functions and show that exact and ordering robustness are equivalent. 
Then, we derive the local power of ECPA tests based on proxy variables and robust loss functions.
Section~\ref{Simulations} numerically illustrates the local power results in Monte Carlo experiments. Section~\ref{Empirical Application} applies our test to US GDP growth forecasts. Finally, Section~\ref{Conclusion} concludes. 
Appendix \ref{sec:NonRobustQuantiles} establishes a non-robustness result for quantile forecasts under any ``resemblance condition'' on the proxy, and the Appendices~\ref{Proofs} and~\ref{Technical Derivations} contain all proofs and some further technical derivations.

\section{Main Results}\label{Main Results}

\subsection{Characterizing Robust Loss Functions}\label{Characterizing Robust Loss Functions}

Let $(\Omega, \mathcal{A}, \p)$ be a probability space. 
If not stated otherwise, all (in-)equalities involving conditional expectations are tacitly assumed to hold $\p$-almost surely (a.s.) in the following.
Let $\mathcal{P}$ denote some class of distribution functions on $\mathbb{R}^{k}$ ($k\in\mathbb{N}$), which is specified later on.
For some time point $t\in\mathbb{N}$, we denote the random variable of interest (e.g., GDP growth) by $Y_t:\Omega\rightarrow\mathsf{O}\subset\mathbb{R}^{k}$, the time-$(t-1)$ information set by $\mathcal{F}_{t-1}$, and the conditional distribution of $Y_{t}$ given $\mathcal{F}_{t-1}$ by $F_t(\omega,\cdot)=\p\{Y_t\leq\cdot\mid\mathcal{F}_{t-1}\}(\omega)$, where we assume that $F_t(\omega,\cdot)\in\mathcal{P}$ for all $\omega\in\Omega$. 
We denote by $F_t(\cdot)$ the random variable defined by the mapping $\omega \mapsto F_t(\omega,\cdot)$.

\begin{rem}
The theory of this article can be extended to $\tau$-step ahead forecasts for $\tau \ge 2$ in a straightforward fashion by considering the information set $\mathcal{F}_{t-\tau}$ instead of $\mathcal{F}_{t-1}$. Since we leave $\mathcal{F}_{t-1}$ unspecified in the following, this can be seen by letting $\mathcal{F}_{t-1}$ contain only information available at time $(t-\tau)$.
We deliberately do not make this explicit in the notation in order to keep the exposition as simple as possible.
\end{rem}

The aim in a forecasting situation is to predict some target functional $T:\mathcal{P}\rightarrow\mathsf{A}\subset\mathbb{R}^{k}$ of $F_t$. We denote the target functional by $x_t^\ast:\Omega\rightarrow\mathsf{A}$, $\omega\mapsto T(F_t(\omega,\cdot))$. For the leading case $k=1$, this may be the conditional mean of GDP growth, $x_t^\ast=\E_{t-1}[Y_t]$, where we write $\E_{t-1}[\,\cdot\,]=\E[\ \cdot\mid\mathcal{F}_{t-1}]$ for short. We assume that there exists a \textit{strictly $\mathcal{P}$-consistent} (or simply \textit{strictly consistent}) loss function $L:\mathsf{O}\times\mathsf{A}\rightarrow\mathbb{R}$ for the functional $T$, that is
\begin{equation}\label{eq:(SC)}
	\E_{Y \sim F}\big[L(Y,T(F))\big]\leq\E_{Y \sim F}\big[L(Y,x)\big], 
\end{equation}
for all $F\in\mathcal{P}$ and all $x\in\mathsf{A}$, and equality in \eqref{eq:(SC)} implies $x=T(F)$; see \citet{Gne11}.
Here, the notation $\E_{Y \sim F}[\,\cdot\,]$ denotes the expectation with respect to $Y$ with distribution $F$.
A functional for which a strictly consistent loss function exists is termed \textit{elicitable}. 
Assuming $T$ to be elicitable is not restrictive in our context, because when no strictly consistent scoring function exists, forecasts cannot be compared validly \citep{Gne11}.
Denote by $x_{it}:\Omega\rightarrow\mathsf{A}$ ($i=1,2$) the two competing, $\mathcal{F}_{t-1}$-measurable forecasts of $x_t^\ast$. The forecast loss difference
\[
	d(Y_t, x_{1t}, x_{2t}) = L(Y_t, x_{1t}) - L(Y_t, x_{2t})
\]
measures the relative performance of $x_{1t}$ and $x_{2t}$. Since the loss function is negatively oriented, a negative (positive) loss difference favors $x_{1t}$ ($x_{2t}$).

As pointed out in the Motivation, the true $Y_t$ may often not be available for comparing forecasts due to measurement error. Instead, one has to rely on a proxy $\widehat{Y}_t:\Omega\rightarrow\mathsf{O}$ for $Y_t$ and use $d(\widehat{Y}_t, x_{1t}, x_{2t})$. Of course, the proxy has to bear some resemblance to the target. To ensure this, we make the assumption that $\widehat{Y}_t$ is \textit{conditionally unbiased} for $Y_t$, i.e., that $\E_{t-1}[\widehat{Y}_t]\overset{\text{a.s.}}{=}\E_{t-1}[Y_t]$.
This also implies that (mean) differences between $Y_t$ and $\widehat{Y}_t$ cannot be predicted from information available at time $(t-1)$. We consider this to be a natural assumption for a proxy variable, since otherwise one could predict the average measurement error $\widehat{Y}_t-Y_t$ based on $\mathcal{F}_{t-1}$. We refer to the empirical application for a modeling framework of GDP growth rates, where the conditional unbiasedness assumption is satisfied for $Y_t=\dGDP_{t}$ and $\widehat{Y}_t\in\big\{\dGDP_{E,t}, \dGDP_{I,t}\big\}$; see in particular \eqref{eq:GDP}.


We denote the conditional distribution function of $\widehat{Y}_t\mid\mathcal{F}_{t-1}$ by $\widehat{F}_{t}(\omega,\cdot)$ for all $\omega \in \Omega$, and the corresponding random variable by $\widehat{F}_{t}(\cdot)$. The following definition ensures that the expected forecast loss differences are the same for $Y_t$ and a conditionally unbiased proxy $\widehat{Y}_t$.

\begin{defn}\label{defn:1}
$L(\cdot,\cdot)$ is \textit{\underline{exactly} robust to measurement error} (or simply: \textit{exactly robust}) with respect to $\mathcal{P}$, if
\[
	\E_{t-1}[d(Y_t, x_{1t}, x_{2t})]=\E_{t-1}[d(\widehat{Y}_t, x_{1t}, x_{2t})]\qquad\text{a.s.}
\]
for all $\mathcal{F}_{t-1}$-measurable forecasts $x_{1t}$ and $x_{2t}$, and all $Y_t$ and all conditionally unbiased proxies $\widehat{Y}_t$ with $F_{t}(\omega,\cdot)\in\mathcal{P}$ and $\widehat{F}_{t}(\omega,\cdot)\in\mathcal{P}$ for all $\omega\in\Omega$.
\end{defn}

Exact robustness to measurement error suggests that we can learn as much about the relative merits of $x_{1t}$ and $x_{2t}$ by observing $d(\widehat{Y}_t, x_{1t}, x_{2t})$ instead of $d(Y_t, x_{1t}, x_{2t})$. However, this is only correct in expectation, or equivalently, asymptotically under a suitable law of large numbers. In finite samples, this is unfortunately not true, since $d(\widehat{Y}_t, x_{1t}, x_{2t})$ may have a larger (or smaller) variance than $d(Y_t, x_{1t}, x_{2t})$.

	The more restricted the class $\mathcal{P}$ in Definition~\ref{defn:1}, the richer the class of exactly robust loss functions. An extreme case arises if $\mathcal{P}=\{F:\mathbb{R}^{k
	}\rightarrow[0,1] \mid F(x)=I_{\{x\geq o\}}\text{ for some }o\in\mathsf{O}\}$ is the class of degenerate distributions. (The inequality $x\geq o$ is to be understood component-wisely if the quantities involved are vector-valued.) Then, by (a.s.) constancy of $Y_t$ and $\widehat{Y}_t$, we necessarily have that $Y_t \overset{\text{a.s.}}{=} \widehat{Y}_t$ by conditional unbiasedness. The latter implies that \textit{any} loss function is exactly robust with respect to this rather restricted class $\mathcal{P}$.

\begin{rem}\label{rem:0}
Consider univariate log-returns $r_t$ on some speculative asset and assume that $\E[r_t\mid\mathcal{F}_{t-1}]=0$, as is common for financial data. In this setting, forecasts of the (latent) conditional variance $x_t^\ast=\E[r_t^2\mid\mathcal{F}_{t-1}] = \Var(r_t\mid\mathcal{F}_{t-1})$ are essential for risk management purposes \citep{Aea13}. Thus, in our setting, the target variable is $Y_t=r_t^2$ and $T$ is the mean functional. To compare two volatility forecasts $x_{1t}$ and $x_{2t}$, the natural choice is then $Y_t=r_t^2$. However, in the---different, but related---context of evaluating volatility forecasts of GARCH models, \citet{AB98} advocate the use of less noisy high-frequency proxies $\widehat{Y}_t$ of the conditional variance (such as realized volatility or the range) satisfying $\E_{t-1}[\widehat{Y}_t]=\E_{t-1}[Y_t]=x_t^\ast$. This example shows that (other than our notation for $Y_t$ and $\widehat{Y}_t$ suggests) $\widehat{Y}_t$ does not necessarily have to be regarded as coming ``close'' to $Y_t$. Instead, we may sometimes interpret $Y_t$ \textit{and} $\widehat{Y}_t$ as providing two different conditionally unbiased estimates of the target functional $x_t^\ast$.
\end{rem}



\begin{rem}\label{rem:1}
In the framework of Remark~\ref{rem:0}, \citet{Pat11} investigates conditions under which loss functions produce consistent \textit{relative rankings} in the sense that
\begin{equation*}
	\E[L(Y_t, x_{1t})]\lesseqgtr\E[L(Y_t, x_{2t})] \quad\Longleftrightarrow\quad \E[L(\widehat{Y}_t, x_{1t})]\lesseqgtr\E[L(\widehat{Y}_t, x_{2t})]
\end{equation*}
for any (volatility) proxy satisfying $\E_{t-1}[\widehat{Y}_t]=\E_{t-1}[Y_t](=x_t^\ast)$. 
This property is conceptually different from \textit{exact} robustness to measurement error in that, first, only the ranking of forecasts is concerned and, second, unconditional means are considered. \citet{Laurent2013} generalize \citeauthor{Pat11}'s \citeyearpar{Pat11} results by considering multivariate $r_t$, where $Y_t=\vech(r_{t}r_t^\prime)$ is the target variable and the target functional $x_t^\ast=\E[Y_t\mid\mathcal{F}_{t-1}]$ is the ($\vech$-transformed) conditional variance-covariance matrix of the returns. Here, $\vech(\cdot)$ stacks the lower triangular part of a matrix into a vector.
\end{rem}

Despite the conceptual differences it will be insightful to transfer the idea of loss functions that produce consistent relative rankings to our framework. To do so, we generalize the definition of \citet{Pat11} as follows:

\begin{defn}\label{defn:2}
$L(\cdot,\cdot)$ is \textit{\underline{ordering} robust to measurement error} (or simply: \textit{ordering robust}) with respect to $\mathcal{P}$, if
\begin{equation*}
	\E_{t-1}[d(Y_t, x_{1t}, x_{2t})]\lesseqgtr0 \quad \text{a.s.}
	\quad\Longleftrightarrow\quad 
	\E_{t-1}[d(\widehat{Y}_t, x_{1t}, x_{2t})]\lesseqgtr0 \quad \text{a.s.}
\end{equation*}
for all $\mathcal{F}_{t-1}$-measurable forecasts $x_{1t}$ and $x_{2t}$, and all $Y_t$ and all conditionally unbiased proxies $\widehat{Y}_t$ with $F_{t}(\omega,\cdot)\in\mathcal{P}$ and $\widehat{F}_{t}(\omega,\cdot)\in\mathcal{P}$ for all $\omega\in\Omega$.
\end{defn}


We argue that exact robustness is a more useful concept in practice than the weaker ordering robustness. To see why, assume that $\E_{t-1}[d(Y_t, x_{1t}, x_{2t})]=\nu_t$ for some large $\nu_t>0$, such that $x_{2t}$ is a much better forecast. Under exact robustness, we then have $\E_{t-1}[d(\widehat{Y}_t, x_{1t}, x_{2t})]=\nu_t$, such that $x_{2t}$ is again clearly superior when judged using the proxy $\widehat{Y}_t$. However, under ordering robustness, we may merely have $\E_{t-1}[d(\widehat{Y}_t, x_{1t}, x_{2t})]=\widetilde{\nu}_t$ for some small $\widetilde{\nu}_t>0$.
 Hence, in finite samples, it may be much harder to identify $x_{2t}$ as the better forecast when only the $\widehat{Y}_t$ are available. In other words, two forecasts may be easier to separate when making an exactly robust comparison instead of an ordering robust comparison. Thus, when the forecasts and the proxy are taken as given in applied work, we advocate the use of exactly robust loss functions. However, this recommendation is based solely on conceptual considerations, because---foreshadowing one of the main results of Theorem~\ref{thm:1}---the classes of exactly robust and ordering robust loss functions coincide.

Before we can state Theorem~\ref{thm:1}, recall the concept of a \textit{subgradient} from convex analysis. To do so, we let $\langle\cdot,\cdot\rangle$ denote the standard scalar product in $\mathbb{R}^{k}$. Then, the subgradient at value $x \in \mathsf{A}$ of some convex function $\phi:\mathsf{A}\rightarrow\mathbb{R}$, denoted $\D\phi(x)$, is any vector $s\in\mathsf{A}$ satisfying $\phi(y)\geq\phi(x)+\langle s, y-x\rangle$ for all $y\in\mathsf{A}$ \citep[Def.~1.2.1]{HL01}. Recall from \citet[Sec.~D]{HL01} that such a vector always exists.

\begin{thm}\label{thm:1}
Let $\mathcal{P}$ be a convex set of distribution functions. Assume that $T:\mathcal{P}\rightarrow\mathsf{A}$ is surjective, $\mathsf{A}$ is convex, and $L(\cdot,\cdot)$ is strictly $\mathcal{P}$-consistent for $T$. Then, the following are equivalent:
\renewcommand{\labelenumi}{(\alph{enumi})}
\begin{enumerate}
\item\label{item:form} $L(\cdot,\cdot)$ is of the form
\begin{equation}\label{eq:L char}
	L(Y, x)=\phi(x) + \langle \D\phi(x), Y-x \rangle + a(Y),
\end{equation}
where $\phi:\mathsf{A}\rightarrow\mathbb{R}$ is strictly convex with subgradient $\D\phi(\cdot)$, and $a:\mathsf{O}\rightarrow\mathbb{R}$ is integrable with respect to all $F\in\mathcal{P}$;
\item $L(\cdot,\cdot)$ is exactly robust with respect to $\mathcal{P}$;
\item $L(\cdot,\cdot)$ is ordering robust with respect to $\mathcal{P}$;
\item For all $Y_t$ with $F_t(\omega,\cdot)\in\mathcal{P}$ for all $\omega\in\Omega$, it holds that $x_t^\ast=T(F_{t}(\cdot))\overset{\text{a.s.}}{=}\E_{t-1}[Y_t]$.
\end{enumerate}
\end{thm}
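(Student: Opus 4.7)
The plan is to establish the four-way equivalence via the cycle (a)$\Rightarrow$(b)$\Rightarrow$(c)$\Rightarrow$(d)$\Rightarrow$(a). For (a)$\Rightarrow$(b), I would substitute the Bregman form \eqref{eq:L char} into the loss difference $d(Y_t, x_{1t}, x_{2t})$ and take the time-$(t-1)$ conditional expectation. The $a(Y_t)$-term cancels in the difference, the $\phi(x_{it})$-terms are $\mathcal{F}_{t-1}$-measurable, and, using $\mathcal{F}_{t-1}$-measurability of the forecasts, the linear terms reduce to $\langle \D\phi(x_{it}), \E_{t-1}[Y_t] - x_{it}\rangle$ for $i=1,2$. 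Conditional unbiasedness $\E_{t-1}[\widehat{Y}_t] = \E_{t-1}[Y_t]$ then renders the whole expression invariant under replacement of $Y_t$ by $\widehat{Y}_t$, giving exact robustness.

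The implication (b)$\Rightarrow$(c) is immediate, since equality of two conditional expectations trivially preserves their common sign. For (c)$\Rightarrow$(d), I would fix arbitrary $F, \widehat{F} \in \mathcal{P}$ with $\E_{Y\sim F}[Y] = \E_{Y\sim\widehat{F}}[Y]$ and work in a degenerate (or trivially filtered) setting in which $F_t \equiv F$ and $\widehat{F}_t \equiv \widehat{F}$; conditional unbiasedness of $\widehat{Y}_t$ for $Y_t$ is then automatic. Taking the constant forecasts $x_{1t} = T(F)$ and $x_{2t} = T(\widehat{F})$, strict $\mathcal{P}$-consistency yields $\E_{Y\sim F}[L(Y, T(F))] < \E_{Y\sim F}[L(Y, T(\widehat{F}))]$ whenever $T(F) \neq T(\widehat{F})$. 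Ordering robustness would transport this strict inequality to $\widehat{F}$, contradicting the strict consistency of $L$ at $\widehat{F}$. Hence $T(F) = T(\widehat{F})$ whenever $F$ and $\widehat{F}$ share a common mean, so $T$ factors through the mean. Surjectivity of $T$ onto the convex set $\mathsf{A}$ then allows me to identify $T$ with the conditional mean functional, yielding $T(F_t(\cdot)) \overset{\text{a.s.}}{=} \E_{t-1}[Y_t]$.

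Finally, (d)$\Rightarrow$(a) follows from the classical Savage--Banerjee--Guo--Wang characterization \citep[Thm.~1]{BGW05}: on a convex class $\mathcal{P}$ with convex image $\mathsf{A}$, every strictly consistent loss for the mean functional admits the Bregman representation \eqref{eq:L char}, with $\phi$ strictly convex, $\D\phi(\cdot)$ a subgradient of $\phi$, and $a$ integrable with respect to every $F \in \mathcal{P}$. The hard part of the whole argument will be (c)$\Rightarrow$(d): I need enough same-mean pairs $(F, \widehat{F}) \in \mathcal{P}^2$ to be realizable as $(F_t, \widehat{F}_t)$-pairs under the problem's a.s.\ conditional formulation, and I need to promote the conclusion ``$T$ depends only on the mean'' to the literal identity $T = \E$ rather than $T = g \circ \E$ for some bijection $g$. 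The convexity of $\mathcal{P}$ and $\mathsf{A}$, together with surjectivity of $T$, are precisely what make both of these steps work.
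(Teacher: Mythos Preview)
Your cycle (a)$\Rightarrow$(b)$\Rightarrow$(c)$\Rightarrow$(d)$\Rightarrow$(a) and the arguments for (a)$\Rightarrow$(b) and (b)$\Rightarrow$(c) coincide with the paper's. For (c)$\Rightarrow$(d), your reduction to a trivially-filtered setting with the particular forecasts $x_{1t}=T(F)$, $x_{2t}=T(\widehat F)$ is more direct than the paper's route, which stays in the conditional framework, uses ordering robustness to transfer both weak and strict minimality of $x_t^\ast=T(F_t)$ from $\E_{t-1}[L(Y_t,\cdot)]$ to $\E_{t-1}[L(\widehat Y_t,\cdot)]$ for an arbitrary $\mathcal F_{t-1}$-measurable competitor $\widetilde x_t$, and then invokes Theorem~1 of \citet{HE14} to conclude $T(F_t)=T(\widehat F_t)$. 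Both arguments reach the same intermediate conclusion: $T=g\circ\E$ for some $g$. For (d)$\Rightarrow$(a), the paper appeals to Theorem~11 of \citet{FK15} rather than \citet{BGW05} and, importantly, appends a short contradiction argument (using surjectivity of $T$ and strict consistency of $L$) showing that $\phi$ must be \emph{strictly} convex; you omit this.

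The remaining gap is the last step of (c)$\Rightarrow$(d): you assert that convexity of $\mathcal P$ and $\mathsf A$ together with surjectivity of $T$ upgrade ``$T=g\circ\E$'' to ``$T=\E$'', but give no argument, and these hypotheses alone do not obviously exclude nontrivial $g$. For instance, with $k=1$, $\mathsf O=\mathsf A=\mathbb R$, $T(F)=2\int y\,\D F(y)$ and $L(Y,x)=(Y-x/2)^2$, all hypotheses of the theorem hold, the loss difference is affine in $Y$ so (b) and (c) are satisfied, yet $T\neq\E$ and $L$ is not of the form \eqref{eq:L char} (matching the $Y$-coefficient would force $\D\phi(x)=-x$, which is not a subgradient of any convex function). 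The paper dispatches this step with a one-line appeal to ``uniqueness of the minimizer (in the probabilistic sense)'', which on close reading is no more transparent than your claim; so this is precisely the point where both your sketch and the paper's write-up warrant a sharper argument or an additional structural assumption linking $\mathsf A$ to the set of means.
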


Theorem~\ref{thm:1} offers three main insights. First, it characterizes the exactly robust loss functions, while making \textit{no} assumption on the functional to be evaluated in advance. Exact robustness is essential to theoretically study ECPA tests under the alternative, where there is a difference in predictive ability, i.e.,~$\E_{t-1}[d(Y_t,x_{1t}, x_{2t})]\neq0$. In the absence of the equivalence of (b) and (c) established in Theorem~\ref{thm:1}, mere ordering robustness would not be sufficient to do so, because the magnitude of the deviation from zero in $\E_{t-1}[d(\widehat{Y}_t,x_{1t}, x_{2t})]\neq0$ may be changed. Thus, a second contribution of Theorem~\ref{thm:1} is the equivalence of exact and ordering robustness, and we merely speak of robust loss functions in the following when no confusion can arise. The third insight is that if, in practice, there is measurement error in the target variable, then \textit{only} conditional mean forecasts can be evaluated robustly. For other functionals, such as the median, the forecast ranking is affected by the use of proxies. In particular, many commonly used loss functions, such as absolute error (AE) loss $L(Y,x)=|Y-x|$, are not robust. Nonetheless, the AE loss has been used in comparing forecasts for mismeasured variables, such as inflation \citep{Han05,Mea21}, GDP growth \citep{RW09,BK14} and integrated variances \citep{HL05}. Thus, Theorem~\ref{thm:1} casts doubt on the rankings obtained by these comparisons.

Theorem~\ref{thm:1} crucially depends on the conditional unbiasedness condition, $\E_{t-1} [ \widehat Y_t ] \overset{\text{a.s.}}{=} \E_{t-1} [ Y_t ]$.
This raises the question if forecasts for other target functionals than the mean can be compared robustly under alternative ``resemblance conditions'' on $Y_t$ and $\widehat Y_t$.
In Appendix \ref{sec:NonRobustQuantiles}, we show that conditional quantile forecasts cannot be evaluated (exactly) robustly, unless one imposes the degenerate resemblance condition that the distributions of $Y_t$ and $\widehat Y_t$ coincide.
This reinforces our interpretation of the mean being the only target functional that allows for robust evaluation.
Thus, we have the surprising result that while the mean is more robust than the median in forecast evaluation, the opposite is well-known to hold in classical estimation theory.


\begin{rem}\label{rem:Patton}
The equivalence of (a) and (c) in Theorem~\ref{thm:1} may be viewed as a generalization of Proposition~1 in \citet{Pat11} for $k=1$ and of Proposition~2 in \citet{Laurent2013} for $k\in\mathbb{N}$. Since \citet{Laurent2013} use very similar regularity conditions as \citet{Pat11}, we only highlight the main improvements on the latter.
	First, while Patton's characterization builds on the \textit{unconditional} expectation, we consider \textit{conditional} expectations, which is crucial for tests of equal \textit{conditional} predictive ability \citep{GW06} and tests of superior \textit{conditional} predictive ability \citep{LLQ21+}.
	Second, similar to the property of strict consistency for loss functions \citep{Gne11}, robustness should be considered with respect to a specified class $\mathcal{P}$ of distributions. While we merely require $\mathcal{P}$ to be convex, \citet[A2]{Pat11} restricts attention to absolutely continuous distributions.
	Third, \cite{Pat11} only considers  to continuously differentiable losses, which ignores important classes of loss functions, such as the generalized piecewise linear (GPL) losses. In contrast, 
	our Theorem~\ref{thm:1} dispenses with \textit{any} regularity conditions on the class of possible loss functions. Thus, our class of loss functions in \eqref{eq:L char} is broader than his \textit{and} we do not rule out non-differentiable losses from the outset. Fourth, by considering $x_t^\ast=\E[r_t^2\mid\mathcal{F}_{t-1}]$ (cf.~Remark~\ref{rem:0}),	\citet[A1 \& A4]{Pat11} specifies $T$ to be the mean functional as an \textit{assumption}, such that implicitly only Bregman loss functions are considered at the outset.	In contrast, we do not specify $T$ in advance, allowing us to show the non-robustness of \textit{any} functional apart from the mean. We view this final improvement as the most important one due to its practical implication: when there is measurement error, one can only evaluate mean forecasts robustly. The forecast ranking of any other other functional (e.g., the median) will be affected.
\end{rem}

\begin{rem}\label{rem:Estimation}
	The classification of loss functions in Theorem~\ref{thm:1} can be employed  (by invoking Theorem~2.5 in \citet{DFZ2020}) for the objective functions in M-\textit{estimation} of semiparametric models, when only a proxy $\widehat Y_t$ of the response variable, $Y_t$, is observable.
	Theorem \ref{thm:1} then implies that for conditionally unbiased proxies, the M-estimator is consistent for, and only for, conditional \textit{mean} models; in contrast to, e.g., conditional \textit{quantile} models.
\end{rem}

\subsection{Testing Equal Predictive Accuracy}\label{Testing Predictive Accuracy}

Section~\ref{Characterizing Robust Loss Functions} shows that for loss functions of the form \eqref{eq:L char}, the expected loss differences are unchanged when using a conditionally unbiased proxy $\widehat{Y}_t$. Here, we consider the implications of Theorem \ref{thm:1} for statistical tests of ECPA, and show that the robustness property leads to valid tests (in the sense that size is kept) whose power increases for more accurate proxies. 

To that end, we outline the framework of \textit{conditional} predictive ability testing pioneered by \citet{GW06}, which extends the classical predictive ability tests of \citet{DM95}. Recall that interest in ECPA tests centers on the null hypothesis
\[
	H_0\ :\quad \E_{t-1}[d(Y_t, x_{1t}, x_{2t})]\overset{\text{a.s.}}{=}0\qquad\text{for all }t=1,2,\ldots. 
\]
To test this \textit{conditional} moment condition based on a finite sample of length $n$, \citet{GW06} propose to test the implication of $H_0$ that $\E[h_{t-1}d(Y_t, x_{1t}, x_{2t})]=0$ for a $\mathcal{F}_{t-1}$-measurable test function $h_{t-1}$, taking values in $\mathbb{R}^{q}$. They do so using the Wald-type test statistic $T_n=n\overline{Z}_n^{\prime}\widetilde{\Omega}_n^{-1}\overline{Z}_n$, where $Z_t=h_{t-1}d(Y_t, x_{1t}, x_{2t})$, $\overline{Z}_n=1/n\sum_{t=1}^{n}Z_t$, and $\widetilde{\Omega}_n$ is an invertible and consistent estimator of $\Var(\sqrt{n} \, \overline{Z}_n)$. Under weak regularity conditions, \citet[Theorem~1]{GW06} show that $T_n$ is asymptotically $\chi^2_{q}$-distributed under $H_0$. 

However, when $Y_t$ is not observed, $T_n$ cannot be computed and we have to rely on the feasible test statistic
\[
	\widehat{T}_n=n\overline{\widehat{Z}}_n^{\prime}\widehat{\Omega}_n^{-1}\overline{\widehat{Z}}_n,
\]
where $\widehat{Z}_{t}=h_{t-1}d(\widehat{Y}_t, x_{1t}, x_{2t})$, $\overline{\widehat{Z}}_n=1/n\sum_{t=1}^{n}\widehat{Z}_t$, and $\widehat{\Omega}_n$ is an invertible and consistent estimator of $\Omega_n=\Var(\sqrt{n}\overline{\widehat{Z}}_n)$. Arguing as before, $\widehat{T}_n$ follows a $\chi^2_{q}$-distribution asymptotically under the proxy hypothesis
\[
	\widehat{H}_0\ :\ \E_{t-1}[d(\widehat{Y}_t, x_{1t}, x_{2t})]\overset{\text{a.s.}}{=}0\qquad\text{for all }t=1,2,\ldots.
\]
Hence, in the presence of measurement error, $\widehat{T}_n$ can validly test $H_0$ if $\E_{t-1}[d(Y_t, x_{1t}, x_{2t})]\overset{\text{a.s.}}{=}\E_{t-1}[d(\widehat{Y}_t, x_{1t}, x_{2t})]$, i.e., if $H_0$ and $\widehat{H}_0$ are equivalent, which is obviously implied by our exact robustness property.

Under exact robustness to measurement error, $H_0$ implies that $\E[d(Y_t,x_{1t}, x_{2t})] = 0 = \E[d(\widehat{Y}_t,x_{1t}, x_{2t})]$ by the law of iterated expectations. Thus, when $L(\cdot,\cdot)$ is of the form given in Theorem~\ref{thm:1}, EPA tests of the hypothesis $\E[d(Y_t,x_{1t}, x_{2t})] = 0$ can also be validly carried out using a conditionally unbiased proxy $\widehat{Y}_t$ of $Y_t$.

\begin{rem}
	\label{rem:LP18}
	For HF proxies in finance, \cite{LP18} establish that equality of the expected loss differences is not strictly necessary for the validity of equal predictive ability tests under measurement error. 
	Instead, it suffices that the expected loss differences converge at rate $o(\sqrt{n})$, which they call the ``convergence-of-hypotheses'' condition.
	However, verification of the latter often requires non-trivial primitive conditions. E.g., when comparing volatility forecasts with high-frequency proxies, the sample size of intraday returns for computing volatility proxies must diverge faster than the number of out-of-sample volatility forecasts. In macroeconomic applications, where the sampling frequency cannot be arbitrarily increased, the ``convergence-of-hypotheses'' condition is not applicable.
	One conclusion of our Theorem~\ref{thm:1} is that the convergence-of-hypotheses condition is not required (because it holds trivially) for mean forecasts when conditionally unbiased proxies are used.
\end{rem}	


Next, we derive the limit of $\widehat{T}_n$ under the local alternative
\begin{align}
	\label{eq:HAloc}
	H_{a,\loc}\ :\ \E[h_{t-1}d(Y_t, x_{1t}, x_{2t})]=\frac{\delta}{\sqrt{n}}\quad\text{for all }t=1,2,\ldots,
\end{align}
where $\delta\in\mathbb{R}^{q}$. The magnitude of the local alternative, $\delta/\sqrt{n}$, converges to the null hypothetical value of zero as $n\to\infty$, thus making it harder for our test to reject $H_0$ for increasing $n$. Since $\E[Z_t]=\delta/\sqrt{n}$ under $H_{a,\loc}$, $Z_t$ depends on $n$. We reflect this in our notation by writing $Z_t=Z_{n,t}=(Z_{n,t}^{(1)},\ldots,Z_{n,t}^{(q)})^\prime$. Note that for $\delta=0$, $H_{a,\loc}$ is strictly speaking not an alternative as it reduces to the null. However, the method of proof for deriving the asymptotic limit of $\widehat{T}_n$ is the same for all $\delta\in\mathbb{R}^{q}$. Hence, we leave $\delta$ unrestricted in $H_{a,\loc}$. Values of $\delta$ with larger norm correspond to local alternatives with larger magnitudes. Note that under fixed alternatives, ECPA tests are consistent, i.e., power converges to one \textit{no matter} which conditionally unbiased proxy is used. Thus, only by considering \textit{local} alternatives,  we are able to derive analytical results on the power of tests that use different proxies.

If the underlying loss function is robust, then it follows under $H_{a,\loc}$ by the law of iterated expectations (LIE) that
\begin{equation}
	\label{eq:(p.8)}
	\frac{\delta}{\sqrt{n}}=\E[h_{t-1}d(Y_t, x_{1t}, x_{2t})] 
	= \E[h_{t-1}\E\{d(Y_t, x_{1t}, x_{2t})\mid\mathcal{F}_{t-1}\}] 
	= \E[h_{t-1}d(\widehat{Y}_t, x_{1t}, x_{2t})]
\end{equation}
for any conditionally unbiased proxy $\widehat{Y}_t$. The property in \eqref{eq:(p.8)}, implied by exact robustness, is essential for our local power results, because it allows to test $\E[h_{t-1}d(Y_t, x_{1t}, x_{2t})]=\delta/\sqrt{n}$ via the $\widehat{Z}_{n,t}=h_{t-1}d(\widehat{Y}_t, x_{1t}, x_{2t})$. Thus, one key benefit of exact robustness is that it allows to compare ``how difficult'' it is to assess the relative merits of two forecasts, when only an imperfect proxy $\widehat{Y}_{t}$ is available. Note that even though for different proxies $\widehat{Y}_t$ the deviations under \eqref{eq:(p.8)} are equal, better proxies may still give less variable $\widehat{Z}_{n,t}$, such that departures from $H_0$ of the form in \eqref{eq:(p.8)} may be detected more easily. To investigate this, we derive the local power of the test based on $\widehat{T}_n$.

To do so, we make the following assumptions, which are similar to those of \citet{GW06}. As only some proxy variable $\widehat{Y}_t$ is available, we specify these assumptions in terms of $\widehat{Z}_{n,t}=h_{t-1}d(\widehat{Y}_t, x_{1t}, x_{2t})$. Nonetheless, the specific (conditionally unbiased) choices $\widehat{Y}_t=Y_t$ and $\widehat{Y}_t=x_t^\ast$ are also allowed. Let $\widehat{W}_t=(\widehat{Y}_t^\prime,X_t^\prime)^\prime$, where the $\mathcal{F}_{t-1}$-measurable $X_t$ is $\mathbb{R}^{s}$-valued and contains all predictors that the forecasts $x_{1t}$ and $x_{2t}$ are based on. For instance, $X_t$ may contain lagged $\widehat{Y}_{t}$'s.

\begin{enumerate}
	\item[D1:] $\big\{(\widehat{W}_t^\prime,h_t^\prime)^\prime\big\}$ is $\alpha$-mixing of size $-2r/(r-2)$ or $\phi$-mixing of size $-r/(r-1)$.
	\item[D2:] $\E|\widehat{Z}_{n,t}^{(i)}|^{2r}\leq\Delta_{Z}<\infty$ for $i=1,\ldots,q$ and $r$ from D1.
	\item[D3:] $\Omega_n=\Var\big({n}^{-1/2}\sum_{t=1}^{n}\widehat{Z}_{n,t}\big)\longrightarrow\Omega$, as $n\to\infty$, where $\Omega$ is positive definite.
	\item[D4:] The forecasts $x_{1t}=f_{n,t}^{(1)}(X_t,\ldots,X_{t-m_1+1})$ and $x_{2t}=f_{n,t}^{(2)}(X_t,\ldots,X_{t-m_2+1})$ are measurable functions of a finite number of predictors, where $m=\max(m_1,m_2)\leq\overline{m}<\infty$.
\end{enumerate}

Our conditions D1--D4 are standard in the literature, and are essentially those of \citet[Theorem~3]{GW06}. Assumption~D4, that forecasts are based on a finite number of lags of the predictor variables, is a technical convenience used to expedite the proof of Theorem~\ref{thm:2}. Nonetheless, this assumption accommodates many different estimation schemes, where $m_1$ and $m_2$ may be deterministic or data-driven (and, hence, stochastic). We refer to \citet{GW06} for more detail. Note that in D4 we suppress the dependence of the forecasts on $n$ for notational brevity.

\begin{thm}\label{thm:2}
Let Assumptions \textup{D1}--\textup{D4} hold. Then, if $\widehat{\Omega}_n-\Omega_n=o_{\p}(1)$, it holds for any exactly robust loss function under $H_{a,\loc}$ that
\[
	\widehat{T}_n\overset{d}{\longrightarrow}\chi_{q}^2(\delta^\prime\Omega^{-1}\delta),\qquad\text{as }n\to\infty,
\]
where $\chi_{q}^2(c)$ denotes a $\chi_q^2$-distribution with non-centrality parameter $c\in\mathbb{R}$.
\end{thm}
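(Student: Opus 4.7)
The plan is to decompose $\widehat{T}_n$ into a quadratic form in an asymptotically normal sample mean plus a deterministic shift inherited from the local alternative, and then assemble the pieces via the continuous mapping theorem and Slutsky's theorem.

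First, by the exact robustness of $L(\cdot,\cdot)$ (Theorem~\ref{thm:1}) and the law-of-iterated-expectations computation in \eqref{eq:(p.8)}, the local alternative $H_{a,\loc}$ yields
\[
    \E[\widehat{Z}_{n,t}]=\E[h_{t-1}d(\widehat Y_t,x_{1t},x_{2t})]=\delta/\sqrt{n},
\]
for every $t$. Centering by $\widetilde{Z}_{n,t}=\widehat{Z}_{n,t}-\delta/\sqrt{n}$ gives a mean-zero triangular array whose variance equals $\Var\bigl(n^{-1/2}\sum_{t=1}^{n}\widehat{Z}_{n,t}\bigr)=\Omega_n\to\Omega$ by D3. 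This step is the one piece of the argument where exact robustness is truly used: it converts the alternative, stated in terms of the latent $Y_t$ in \eqref{eq:HAloc}, into the same drift on the proxy-based sums, and hence pins down the non-centrality parameter independently of which conditionally unbiased proxy is chosen.

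Next, I would verify that $\{\widetilde{Z}_{n,t}\}$ satisfies a central limit theorem for mixing triangular arrays. Because $\widetilde{Z}_{n,t}$ differs from $\widehat{Z}_{n,t}$ only by a deterministic shift, the mixing assumption D1 on $\{(\widehat{W}_t^\prime,h_t^\prime)^\prime\}$ transfers; here D4 is essential, since the forecasts $x_{1t}$ and $x_{2t}$, and hence $\widehat{Z}_{n,t}$, are measurable functions of only finitely many lags of the base process, so mixing sizes on the base process translate to the same mixing sizes for $\widetilde{Z}_{n,t}$. Combined with the uniform $2r$-th moment bound from D2, a standard mixing CLT for triangular arrays (analogous to the one invoked in the proof of Theorem~1 of \citealp{GW06}) delivers
\[
    \frac{1}{\sqrt{n}}\sum_{t=1}^{n}\widetilde{Z}_{n,t}\dto N(0,\Omega),
\]
and adding back the drift gives $\sqrt{n}\,\overline{\widehat{Z}}_n\dto N(\delta,\Omega)$. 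For the weighting matrix, combining $\widehat{\Omega}_n-\Omega_n=o_\p(1)$ with D3 yields $\widehat{\Omega}_n\pto\Omega$, and since $\Omega$ is positive definite, matrix inversion is continuous at $\Omega$, giving $\widehat{\Omega}_n^{-1}\pto\Omega^{-1}$. Slutsky's theorem and the continuous mapping theorem applied to $(z,A)\mapsto z^\prime A z$ then yield
\[
    \widehat{T}_n=(\sqrt{n}\,\overline{\widehat{Z}}_n)^\prime\widehat{\Omega}_n^{-1}(\sqrt{n}\,\overline{\widehat{Z}}_n)\dto W^\prime\Omega^{-1}W,\qquad W\sim N(\delta,\Omega).
\]
Writing $W=\Omega^{1/2}V+\delta$ with $V\sim N(0,I_q)$, the limit equals $\|V+\Omega^{-1/2}\delta\|^{2}\sim\chi_q^2(\delta^\prime\Omega^{-1}\delta)$ by definition of the non-central chi-squared distribution.

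The main obstacle is the rigorous verification of the triangular-array CLT: $\widehat{Z}_{n,t}$ depends on $n$ both through the (possibly $n$-indexed) forecasts permitted by D4 and through the local drift, so a stationary-ergodic CLT is not directly applicable. However, once one checks that the mixing size and moment bounds carry over from $\{(\widehat{W}_t^\prime,h_t^\prime)^\prime\}$ to $\{\widetilde{Z}_{n,t}\}$ via D1, D2 and the finite-memory structure in D4, this reduces to essentially the same CLT already used in \citet{GW06}, and the rest of the argument is a routine Slutsky/continuous-mapping assembly.
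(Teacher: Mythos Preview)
Your proposal is correct and follows essentially the same route as the paper: center $\widehat{Z}_{n,t}$ by $\delta/\sqrt{n}$ using exact robustness via \eqref{eq:(p.8)}, transfer the mixing size from $\{(\widehat{W}_t^\prime,h_t^\prime)^\prime\}$ to the triangular array $\{\widehat{Z}_{n,t}\}$ through the finite-memory structure in D4, apply a mixing CLT for triangular arrays (the paper uses Theorem~5.20 of \citet{Whi01} after a Cram\'er--Wold reduction), and finish with Slutsky and the continuous mapping theorem. The only cosmetic difference is that the paper works with $\Omega^{-1/2}$-standardized sums from the outset, whereas you standardize at the end; the content is the same.
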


In the special case when $Y_t=\widehat{Y}_t$, Theorem~\ref{thm:2} is a refinement of the consistency under fixed alternatives established by \citet[Theorem~2]{GW06}. For $\delta=0$, Theorem~\ref{thm:2} shows that $\widehat{T}_n$ has a standard $\chi^2$-limit under $H_0$. Moreover, it implies that the asymptotic local power (ALP), i.e., the asymptotic probability of rejecting $H_0$ under $H_{a,\loc}$, is
\begin{equation}\label{eq:ALP}
	\p\big\{\chi_q^{2}(\delta^\prime\Omega^{-1}\delta)>\chi_{q,1-\tau}^{2}(0)\big\},
\end{equation}
where $\chi_{q,1-\tau}^{2}(0)$ is the $(1-\tau)$-quantile of the $\chi_{q}^{2}(0)$-distribution and $\tau \in (0,1)$ is the significance level of the test.
This shows that the ALP depends on the magnitude of the local alternative (via $\delta$) \textit{and} on the proxy via the limit $\Omega$ of $\Omega_n=\Var\big({n}^{-1/2} \sum_{t=1}^{n}\widehat{Z}_{n,t}\big)$. 
Thus, more precise proxies that give less variable $\widehat{Z}_{n,t}$ lead to tests with higher local power. We shed more light on this in the next subsection.

To make the test based on $\widehat{T}_n$ operational, we need a consistent estimator $\widehat{\Omega}_n$ of $\Omega_n$. 
To that end, we consider a heteroskedasticity and autocorrelation consistent (HAC) estimator \citep{Newey/West:87a}
\begin{equation}\label{eq:Omega def}
	\widehat{\Omega}_n=\frac{1}{n}\sum_{t=1}^{n}\widehat{Z}_{n,t}\widehat{Z}_{n,t}^\prime + \frac{1}{n}\sum_{h=1}^{m_n}w_{n,h}\sum_{t=h+1}^{n}\Big(\widehat{Z}_{n,t}\widehat{Z}_{n,t-h}^\prime + \widehat{Z}_{n,t-h}\widehat{Z}_{n,t}^\prime\Big),
\end{equation}
where $m_n$ is a sequence of integers, and $w_{n,h}$ is a scalar triangular array of weights. We restrict $m_n$ and $w_{n,h}$ as follows:
\begin{itemize}
	\item[D5:] The sequence of integers $m_n$ satisfies $m_n\rightarrow\infty$ and $m_n=o(n^{1/4})$, as $n\to\infty$.
	\item[D6:] It holds that $|w_{n,h}|\leq\Delta_{w}<\infty$ for all $n\in\mathbb{N}$ and $h\in\{1,\ldots,m_n\}$, and $w_{n,h}\rightarrow1$, as $n\to\infty$, for all $h=1,\ldots,m_n$.
\end{itemize}

\begin{prop}\label{prop:LRV}
Let Assumptions \textup{D1}--\textup{D6} hold. Then, it holds under $H_{a,\loc}$ that $\widehat{\Omega}_n-\Omega_n=o_{\p}(1)$, as $n\to\infty$.
\end{prop}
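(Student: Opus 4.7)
The plan is to show that $\widehat{\Omega}_n$ is consistent for $\Omega_n$ by (i) reducing the problem to the classical consistency of HAC estimators for mean-zero mixing triangular arrays, and (ii) verifying that the nonzero local-alternative mean contributes only an asymptotically negligible term.

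First, I would center the summands. Setting $\mu_{n,t} = \E[\widehat{Z}_{n,t}]$, which equals $\delta/\sqrt{n}$ under $H_{a,\loc}$, and $\widetilde{Z}_{n,t} = \widehat{Z}_{n,t} - \mu_{n,t}$, one obtains the algebraic identity
\begin{equation*}
\widehat{Z}_{n,t}\widehat{Z}_{n,t-h}^\prime = \widetilde{Z}_{n,t}\widetilde{Z}_{n,t-h}^\prime + \widetilde{Z}_{n,t}\mu_{n,t-h}^\prime + \mu_{n,t}\widetilde{Z}_{n,t-h}^\prime + \mu_{n,t}\mu_{n,t-h}^\prime.
\end{equation*}
Plugging this into \eqref{eq:Omega def} yields $\widehat{\Omega}_n = \widetilde{\Omega}_n + R_n$, where $\widetilde{\Omega}_n$ is the HAC estimator computed from the centered $\widetilde{Z}_{n,t}$, and $R_n$ gathers the three remaining types of terms.

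Second, I would appeal to a standard HAC consistency theorem---in the notation of the paper, essentially \citet[Theorem 4]{GW06}---to conclude that $\widetilde{\Omega}_n - \Omega_n = o_{\p}(1)$. Indeed, $\widetilde{Z}_{n,t}$ is mean zero with $\Var\big(n^{-1/2}\sum_{t=1}^{n} \widetilde{Z}_{n,t}\big) = \Omega_n$; Assumptions D1--D2 supply the required mixing rate and uniformly bounded $2r$-th moments, D3 delivers convergence of the target $\Omega_n$ to the positive definite limit $\Omega$, and D5--D6 are the canonical Newey--West bandwidth and kernel conditions.

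Third, I would verify $R_n = o_{\p}(1)$, exploiting $\mu_{n,t} = \delta/\sqrt{n}$. The purely deterministic contribution is
\begin{equation*}
\frac{1}{n}\sum_{t=h+1}^{n} \mu_{n,t}\mu_{n,t-h}^\prime = \frac{n-h}{n^2}\delta\delta^\prime = O(n^{-1})
\end{equation*}
uniformly in $h$; summed over $h \le m_n$ with weights bounded by D6, this is $O(m_n/n) = o(1)$ by D5. For the mixed terms,
\begin{equation*}
\frac{1}{n}\sum_{t=h+1}^{n} \widetilde{Z}_{n,t}\mu_{n,t-h}^\prime = \frac{1}{\sqrt{n}}\bigg(\frac{1}{n}\sum_{t=h+1}^{n} \widetilde{Z}_{n,t}\bigg)\delta^\prime,
\end{equation*}
and since $\Var\big(n^{-1}\sum_{t=1}^n \widetilde{Z}_{n,t}\big) = n^{-1}\Omega_n = O(n^{-1})$, Chebyshev's inequality gives $n^{-1}\sum_{t} \widetilde{Z}_{n,t} = O_{\p}(n^{-1/2})$, so each mixed term is $O_{\p}(n^{-1})$; aggregation over $h \le m_n$ again yields $O_{\p}(m_n/n) = o_{\p}(1)$ by D5.

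The main obstacle is step two: the textbook HAC consistency statements of Newey--West and Andrews are phrased for strictly stationary sequences, whereas $\widetilde{Z}_{n,t}$ is a genuine triangular array because the forecasts $x_{1t}, x_{2t}$ (and hence the losses) may depend on $n$. Nevertheless, the mixing coefficients in D1 and the $2r$-th moment bound in D2 are uniform in $n$, and the finite-lag forecast structure D4 preserves mixing at the same rate, so the triangular-array extension goes through as in \citet[proof of Theorem~4]{GW06}; the remaining work is essentially bookkeeping.
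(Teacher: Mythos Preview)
Your proposal is correct and mirrors the paper's own proof: both center $\widehat{Z}_{n,t}$ by $\delta/\sqrt{n}$, apply a standard HAC consistency theorem for mixing triangular arrays to the centered estimator (the paper invokes \citet[Theorem~6.20]{Whi01} rather than a result from \citet{GW06}), and then show that the un-centering remainder is $o_{\p}(1)$ using $m_n=o(n^{1/4})$ and the bounded kernel weights from D6. The only step that needs a touch more care is your aggregation of the $m_n$ mixed terms, which requires the $O_{\p}(n^{-1})$ bound uniformly in $h$; the paper dispatches this via \citet[Lemma~6.19]{Whi01}, though a direct $L^2$ bound on the partial sums works equally well.
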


The estimator $\widehat{\Omega}_n$ is the omnibus choice. It works for EPA tests (where $H_0$ reduces to $\E[d(Y_t,x_{1t}, x_{2t})] = 0$ by the LIE) but also for ECPA tests. For the latter tests, simpler estimators may be used, because---under $H_0$---the sequences $\{h_{t-1}d(\widehat{Y}_t,x_{1t}, x_{2t}),\mathcal{F}_t\}$ are martingale differences and thus, uncorrelated. This implies that $\Omega_n$ simplifies to $\Omega_n=1/n \sum_{t=1}^{n}\E[\widehat{Z}_{n,t}\widehat{Z}_{n,t}^\prime]$, rendering $\widehat{\Omega}_n= 1/n \sum_{t=1}^{n}\widehat{Z}_{n,t}\widehat{Z}_{n,t}^\prime$ the estimator of choice. However, even for EPA tests of one-step-ahead forecasts considered here, \citet{DM95} recommend to use $m_n=0$ in \eqref{eq:Omega def} (with an empty sum defined to be zero). Thus, using $\widehat{\Omega}_n=1/n\sum_{t=1}^{n}\widehat{Z}_{n,t}\widehat{Z}_{n,t}^\prime$ for both EPA and ECPA tests seems reasonable, and we opt for this choice in the numerical experiments in Section \ref{Simulations}.

\subsection{Finding Optimal Proxies}
\label{sec:Proxies}

Under the local alternative $H_{a,\loc}$, Theorem~\ref{thm:2} shows that the test's asymptotic local power is maximized by choosing a proxy $\widehat Y_t$ that minimizes $\Omega$.
We discuss such choices in the following by utilizing the linearity of the Bregman loss functions in $Y_t$. Specifically, we have

\begin{prop}
	\label{prop:VarYt}
	Suppose that the assumptions of Theorem~\ref{thm:1} hold and that any of (a)--(d) are in force. Then, it holds that
	\begin{align}
	\Omega_n 
	&= \frac{1}{n} \sum_{t=1}^n \Var \Big( h_{t-1} d(x_t^\ast, x_{1t}, x_{2t}) \Big) 
	+ \frac{1}{n} \sum_{t=1}^n \E \Big[  h_{t-1}  b_{t-1}^\prime  \Var_{t-1} \big( \widehat Y_t \big) b_{t-1} h_{t-1}^\prime\Big] \notag\\
	&\qquad+ \frac{2}{n} \sum_{s < t} \Cov \Big( h_{s-1} d(\widehat Y_s, x_{1s}, x_{2s}) ,\; h_{t-1} d(x_t^\ast, x_{1t}, x_{2t}) \Big),\label{eq:help Omega}
	\end{align}
	where $b_{t-1}=\D\phi(x_{1t})-\D\phi(x_{2t})$. Furthermore, if the covariance terms vanish asymptotically, 
	\begin{align}
	\label{eqn:ProxyVariance}
	\Omega_n 
	= \frac{1}{n} \sum_{t=1}^n \Var \Big( h_{t-1} d(x_t^\ast, x_{1t}, x_{2t}) \Big) 
	+ \frac{1}{n} \sum_{t=1}^n \E \Big[  h_{t-1}   b_{t-1}^\prime  \Var_{t-1} \big( \widehat Y_t \big)b_{t-1} h_{t-1}^\prime \Big]+o(1),
	\end{align}
	with an asymptotic lower bound of $\Omega^\ast=\lim_{n\to\infty}1/n \sum_{t=1}^n \Var \left( h_{t-1} d(x_t^\ast, x_{1t}, x_{2t}) \right)$ (in the sense that $\Omega-\Omega^{\ast}$ is positive semi-definite), which is attained if and only if $\widehat Y_t = x_t^\ast$ a.s. 
\end{prop}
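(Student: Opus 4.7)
The plan is to exploit the linearity in $Y$ of Bregman losses guaranteed by Theorem~\ref{thm:1}. Using the representation in \eqref{eq:L char}, direct algebra (the nonlinear term $a(\cdot)$ cancels in the difference) gives
\begin{equation*}
d(\widehat{Y}_t, x_{1t}, x_{2t}) - d(x_t^\ast, x_{1t}, x_{2t}) = b_{t-1}^\prime (\widehat{Y}_t - x_t^\ast),
\end{equation*}
with $b_{t-1}=\D\phi(x_{1t})-\D\phi(x_{2t})$ being $\mathcal{F}_{t-1}$-measurable. This yields the clean decomposition $\widehat{Z}_{n,t} = Z_{n,t}^\ast + U_t$, where $Z_{n,t}^\ast := h_{t-1} d(x_t^\ast, x_{1t}, x_{2t})$ is $\mathcal{F}_{t-1}$-measurable and $U_t := h_{t-1} b_{t-1}^\prime (\widehat{Y}_t - x_t^\ast)$ satisfies $\E_{t-1}[U_t] = 0$ by conditional unbiasedness.

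From here everything reduces to iterated-expectations bookkeeping. I expand $\Omega_n = n^{-1}\sum_t \Var(\widehat{Z}_{n,t}) + 2n^{-1}\sum_{s<t} \Cov(\widehat{Z}_{n,s}, \widehat{Z}_{n,t})$ and treat the diagonal and off-diagonal terms separately. In a diagonal term, $\Cov(Z_{n,t}^\ast, U_t) = 0$ because $Z_{n,t}^\ast$ is $\mathcal{F}_{t-1}$-measurable and $\E_{t-1}[U_t]=0$; and pulling the $\mathcal{F}_{t-1}$-measurable factors $h_{t-1}$ and $b_{t-1}$ out of the inner conditional expectation gives $\Var(U_t) = \E[h_{t-1} b_{t-1}^\prime \Var_{t-1}(\widehat{Y}_t) b_{t-1} h_{t-1}^\prime]$, since $\E_{t-1}[(\widehat{Y}_t-x_t^\ast)(\widehat{Y}_t-x_t^\ast)^\prime]=\Var_{t-1}(\widehat{Y}_t)$. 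For off-diagonal terms with $s<t$, the same argument shows that $U_t$ is uncorrelated with every $\mathcal{F}_{t-1}$-measurable quantity---in particular with $\widehat{Z}_{n,s}$, which is $\mathcal{F}_s\subseteq\mathcal{F}_{t-1}$-measurable---so $\Cov(\widehat{Z}_{n,s}, \widehat{Z}_{n,t}) = \Cov(\widehat{Z}_{n,s}, Z_{n,t}^\ast)$. Summing these pieces gives \eqref{eq:help Omega}, and \eqref{eqn:ProxyVariance} follows at once from the hypothesis that the covariance terms vanish asymptotically.

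For the lower bound, the middle term of \eqref{eqn:ProxyVariance} has the form $\E[h_{t-1} \lambda_t h_{t-1}^\prime]$ with the non-negative scalar weight $\lambda_t = b_{t-1}^\prime \Var_{t-1}(\widehat{Y}_t) b_{t-1}\geq 0$, and is therefore positive semi-definite; this yields $\Omega-\Omega^\ast\succeq 0$. The ``if'' direction is immediate: $\widehat{Y}_t=x_t^\ast$ a.s.\ forces $\Var_{t-1}(\widehat{Y}_t)=0$, killing the middle term. Conversely, $\Omega=\Omega^\ast$ requires the middle term to vanish in the limit, which in the non-degenerate regime (nonzero $h_{t-1}$ and $b_{t-1}$ on a set of positive measure) compels $\Var_{t-1}(\widehat{Y}_t) \overset{\text{a.s.}}{=} 0$, and combined with conditional unbiasedness gives $\widehat{Y}_t = x_t^\ast$ a.s. I do not anticipate a substantive obstacle; the entire argument is driven by the affine dependence of a Bregman loss difference on $Y$, with the mildest bit of care needed in the ``only if'' direction of the lower-bound claim.
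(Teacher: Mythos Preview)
Your proof is correct and follows essentially the same approach as the paper: both exploit the affine structure of Bregman loss differences in $Y$ to split $\widehat{Z}_{n,t}$ into an $\mathcal{F}_{t-1}$-measurable piece plus a conditionally mean-zero innovation, then use iterated expectations to handle the diagonal and off-diagonal (co)variance terms. Your martingale-difference framing $\widehat{Z}_{n,t}=Z_{n,t}^\ast+U_t$ is a slightly tidier packaging of what the paper does via explicit law-of-total-(co)variance computations, and both arguments implicitly rely on $\widehat{Y}_s$ being $\mathcal{F}_{t-1}$-measurable for $s<t$; you also spell out the ``only if'' direction of the lower bound a bit more carefully than the paper does.
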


The covariance terms in \eqref{eq:help Omega} vanish (and, hence, \eqref{eqn:ProxyVariance} holds) if the $\big\{ h_{t-1} d(\widehat{Y}_t, x_{1t}, x_{2t}) \big\}$ are serially uncorrelated, which holds under $H_0$. But \eqref{eqn:ProxyVariance} may also hold under $H_{a,\loc}$ as the derivation of $\Omega$ in Proposition~\ref{prop:Omega} of Appendix~\ref{Technical Derivations} shows. If \eqref{eqn:ProxyVariance} holds, the generally infeasible choice of $\widehat Y_t = x_t^\ast$ gives an upper bound for the local test power as then, $\Var_{t-1}\big( \widehat Y_t\big) \overset{\text{a.s.}}{=} 0$, such that the second (positive semi-definite) term on the right-hand side of \eqref{eqn:ProxyVariance} vanishes. This is very intuitive: It is easiest to assess the relative merits of two forecasts $x_{1t}$ and $x_{2t}$ if they can be compared against the target $x_t^{\ast}$ itself. This supports the intuition that it is easier to distinguish between two forecasts if the target functional is approximated more precisely. We stress once again that this result relies on the newly introduced exact robustness concept. While comparing forecasts with $x_t^{\ast}$ itself is best from a theoretical point of view, we are not aware of a practical situation where $x_t^\ast$ is observable, even ex post at time $t$.

Since the test power decreases with increasing (in terms of the Loewner order) average variances, $\Var_{t-1} \big( \widehat Y_t \big)$, the proxy $ \widehat Y_t$ with smallest possible \textit{conditional} variance should be employed in practice. 
However, these \textit{conditional} variances are generally unknown in practice.
Then, one often has to resort to domain-specific knowledge for choosing the best proxy.
We now discuss this exemplarily in the univariate case ($k=1$) for the volatility forecasting example of Remark~\ref{rem:0}, and for our macroeconomic application in Section~\ref{Empirical Application}.




In the macroeconomic application, $Y_t$ denotes true GDP growth.
As discussed in the Motivation, true GDP growth cannot be observed, even ex post. 
However, several different proxies $\widehat Y_t$ are available.
Suppose, as is plausible, that there is some additive measurement error, such that $\widehat Y_t = Y_t + \widehat{\varepsilon}_t$; see, e.g., \citet{FRW05} for such a modeling approach. Here, the estimation error $\widehat{\varepsilon}_t$ is independent of $\mathcal{F}_{t-1}$, i.e., $\widehat{\varepsilon}_t$ cannot be predicted from information available at time $t-1$. Then, $\Var_{t-1} (\widehat Y_t) = \Var_{t-1}(Y_t) + \Var(\widehat{\varepsilon}_t)$.
Hence, choosing the most accurately estimated proxy $\widehat Y_t$ (i.e., one with smallest possible $\Var(\widehat{\varepsilon}_t)$) gives ECPA tests higher (local) power.
Observing a proxy $\widehat Y_t$ closer to $x_t^\ast$ than $Y_t$---in the sense that $\Var_{t-1}(\widehat Y_t)$ is closer to $\Var_{t-1}(x_t^{\ast})(=0)$ than $\Var_{t-1}(Y_t)$---seems delusive in this application.

While in the previous example, true GDP growth as the target variable is the best proxy, one can sometimes ``get closer'' to the ideal $x_t^{\ast}$.
To illustrate this, consider the classical situation of forecasting the conditional variance $x_t^\ast = \Var_{t-1}(r_t)$ of the univariate log-return $r_t$ on a risky asset with $\E[r_t\mid\mathcal{F}_{t-1}]=0$. In this case, $x_t^\ast=\E[r_t^2\mid\mathcal{F}_{t-1}]$, such that $Y_t = r_t^2$ is the natural target against which to compare forecasts. However, for a standard diffusion process, as in \cite{ABDL03}, one can show that $x_t^{\ast}=\E_{t-1}[IV_t]$ with $IV_t$ the latent integrated variance.
Let $\widehat Y_t = RV_t$ be the realized variance, which estimates $IV_t$ from HF returns.
Then, under \citet[Proposition 3]{ABDL03} and given that $RV_t$ is an unbiased estimator of $IV_t$, it holds that $\E_{t-1} [Y_t] = \E_{t-1} [\widehat Y_t] = x_t^\ast$.
In this setting, it is already well documented that $\widehat Y_t=RV_t$ exhibits much smaller conditional variance than the squared return $Y_t=r_t^2$. Hence, $RV_t$ should be favored for testing ECPA of volatility forecasts as shown by our Theorem~\ref{thm:2} and Proposition~\ref{prop:VarYt}.
Indeed, this is common practice in the literature when comparing volatility forecasts \citep{BPT01,HL05,KJH05,Laurent2013}. However, our Theorem~\ref{thm:2} and Proposition~\ref{prop:VarYt} are the first theoretical results supporting this practice. 
(We mention that \citet{AB98} provide early arguments for the use of HF proxies in the \textit{absolute} evaluation of volatility forecasts via Mincer--Zarnowitz regressions, whereas our focus is on forecast \textit{comparison}.)
Since applications in finance comparing RV to the squared return are already available in the literature, we focus on a macroeconomic application in Section~\ref{Empirical Application}.


Summing up our results of Section \ref{Main Results}, we find that for conditional mean forecasts, ECPA tests based on $\widehat{T}_n$ are asymptotically valid under $H_0$ as long as the proxy $\widehat{Y}_t$ is conditionally unbiased. However, Theorem~\ref{thm:2} and Proposition~\ref{prop:VarYt} show that ECPA tests lose local power when $\Var_{t-1}(\widehat Y_t)>\Var_{t-1}(Y_t)$,  (as is the case in the macroeconomic example) or gain local power when $\Var_{t-1}(\widehat Y_t)<\Var_{t-1}(Y_t)$ (as in the finance application). 
In either case, we urge applied researchers to compare forecasts based on the most accurate proxy available. The next section illustrates the power loss from using imprecise proxies in simulations.

\section{Numerical Experiments}\label{Simulations}

\subsection{Simulations for Robust Loss Functions}\label{Simulations for robust loss functions}

We let $k=1$ and consider the following data-generating process (DGP) and conditional mean forecasts:
\begin{align}
	Y_t 	&= \mu(1-\phi)+\phi Y_{t-1}+\varepsilon_t,&&\varepsilon_t\overset{\text{i.i.d.}}{\sim}N(0,\sigma_{\varepsilon}^2),\label{eq:1}\\
		\widehat{Y}_t &= Y_{t}+\widehat{\varepsilon}_{t},&&\widehat{\varepsilon}_{t}\overset{\text{i.i.d.}}{\sim}N(0,\sigma_{\widehat{\varepsilon}}^2),\label{eq:2}\\
	x_{1t} &= \mu(1-\phi)+\phi Y_{t-1}+\varepsilon_{1,t-1},&&\varepsilon_{1t}\overset{\text{i.i.d.}}{\sim}N(0,\sigma_{1}^2),\label{eq:3}\\
	x_{2t} &= \phi Y_{t-1},&&\label{eq:4}
\end{align}
were $\{\varepsilon_t\}$, $\{\widehat{\varepsilon}_t\}$ and $\{\varepsilon_{1t}\}$ are mutually independent of each other, $|\phi|<1$, $\mu\in\mathbb{R}$, and $\sigma_{1}^2=\mu^2(1-\phi)^2+\xi/\sqrt{n}$ for $\xi\in\mathbb{R}$. Note that the forecasts are measurable with respect to $\mathcal{F}_{t-1}=\sigma(\widehat{Y}_{t-1},\widehat{Y}_{t-2},\ldots;\varepsilon_{1,t-1},\varepsilon_{1,t-2},\ldots;\widehat{\varepsilon}_{t-1},\widehat{\varepsilon}_{t-2},\ldots)$. Of course, it is a convenience (used to render some subsequent computations more tractable) to assume that the latent target $Y_{t-1}$ is known to the forecasters issuing $x_{1t}$ and $x_{2t}$. The main point of these simulations is to investigate the power loss in the comparison from using the proxy $\widehat{Y}_t$. Note that with our choice of $\mathcal{F}_{t-1}$, we have $\E_{t-1}[\widehat{Y}_{t}]=\E_{t-1}[Y_t]$.

We first consider the squared error (SE) as a robust loss function, i.e., $L(Y,x)=(Y-x)^2$, which arises for $\phi(x)=x^2$ and $a(Y)=-Y^2$ in \eqref{eq:L char}. As the SE loss elicits the mean, we have $x_t^\ast=\E_{t-1}[Y_t]=\mu(1-\phi)+\phi Y_{t-1}$. Hence, $x_{1t}$ equals the optimal forecast confounded by some additive noise with variance $\sigma_1^2$. This implies that the larger $\sigma_1^2$, the worse the forecast $x_{1t}$. On the other hand, $x_{2t}$ is a biased forecast, yet with no additive noise. Note that when $\sigma_1^2<\mu^2(1-\phi)^2$ ($\sigma_1^2>\mu^2(1-\phi)^2$), $x_{1t}$ ($x_{2t}$) is the preferred forecast; see \eqref{eq:Sim1}. 

Following \citet{GW06}, we choose the $\mathcal{F}_{t-1}$-predictable test function $h_{t-1}=(1, \widehat{Y}_{t-1})^\prime$. To assess the magnitude of the local alternative, Appendix~\ref{Technical Derivations} shows that 
\begin{equation}\label{eq:Sim1}
\E[h_{t-1}d(Y_t, x_{1t}, x_{2t})] = \begin{pmatrix}\E[d(Y_t, x_{1t}, x_{2t})]\\ \E[\widehat{Y}_{t-1}d(Y_t, x_{1t}, x_{2t})]\end{pmatrix} =\begin{pmatrix}\sigma_1^2 - \mu^2(1-\phi)^2\\ \E[\widehat{Y}_{t-1}]\E[d(Y_t, x_{1t}, x_{2t})]\end{pmatrix} = \begin{pmatrix}\xi/\sqrt{n}\\ \mu\xi/\sqrt{n}\end{pmatrix}.
\end{equation}
Thus, we simulate under the local alternative 
\[
	H_{a,\loc}\ :\ \E[h_{t-1}d(Y_t,x_{1t}, x_{2t})]=\delta/\sqrt{n}, 
\]
where $\delta=(\xi, \mu\xi)^\prime$. For $\xi=0$, our results correspond to size. To compute the theoretical ALP in \eqref{eq:ALP}, we calculate $\Omega$ from D3 in Proposition~\ref{prop:Omega} of Appendix~\ref{Technical Derivations}.

We run our simulations on a grid of $\xi$ values in the interval $[-4,4]$, and compare the empirical rejection frequencies based on $10,000$ simulation replications with the theoretical ALP at a significance level of $5\%$.
We do so for $\mu=1$, $\phi=0.2$, and $\sigma_{\varepsilon}^2=1$. To investigate the impact of noise in the proxy, we consider several values of $\sigma_{\widehat{\varepsilon}}^2$. Specifically, we choose $\sigma_{\widehat{\varepsilon}}^2$ such that $\zeta=\Var(Y_t)/\sigma_{\widehat{\varepsilon}}^2\in\{1/5,\, 1/2,\, 1,\, 2,\, 5,\, \infty\}$ with $\zeta=\infty$ corresponding to the case of no noise, i.e., $\sigma_{\widehat{\varepsilon}}^2=0$.
The ratio $\zeta$ can be interpreted as the \textit{signal to noise ratio} (SNR) of the noisy target; for $\zeta=\infty$, we accurately observe the target, while for $\zeta=1$, the signal $Y_t$ is of the same magnitude as the noise $\widehat{\varepsilon}_t$.

\begin{figure}
	\centering
	\caption{Rejection frequencies for ECPA tests using SE loss and different $\xi$.}		
	\includegraphics[width=\textwidth]{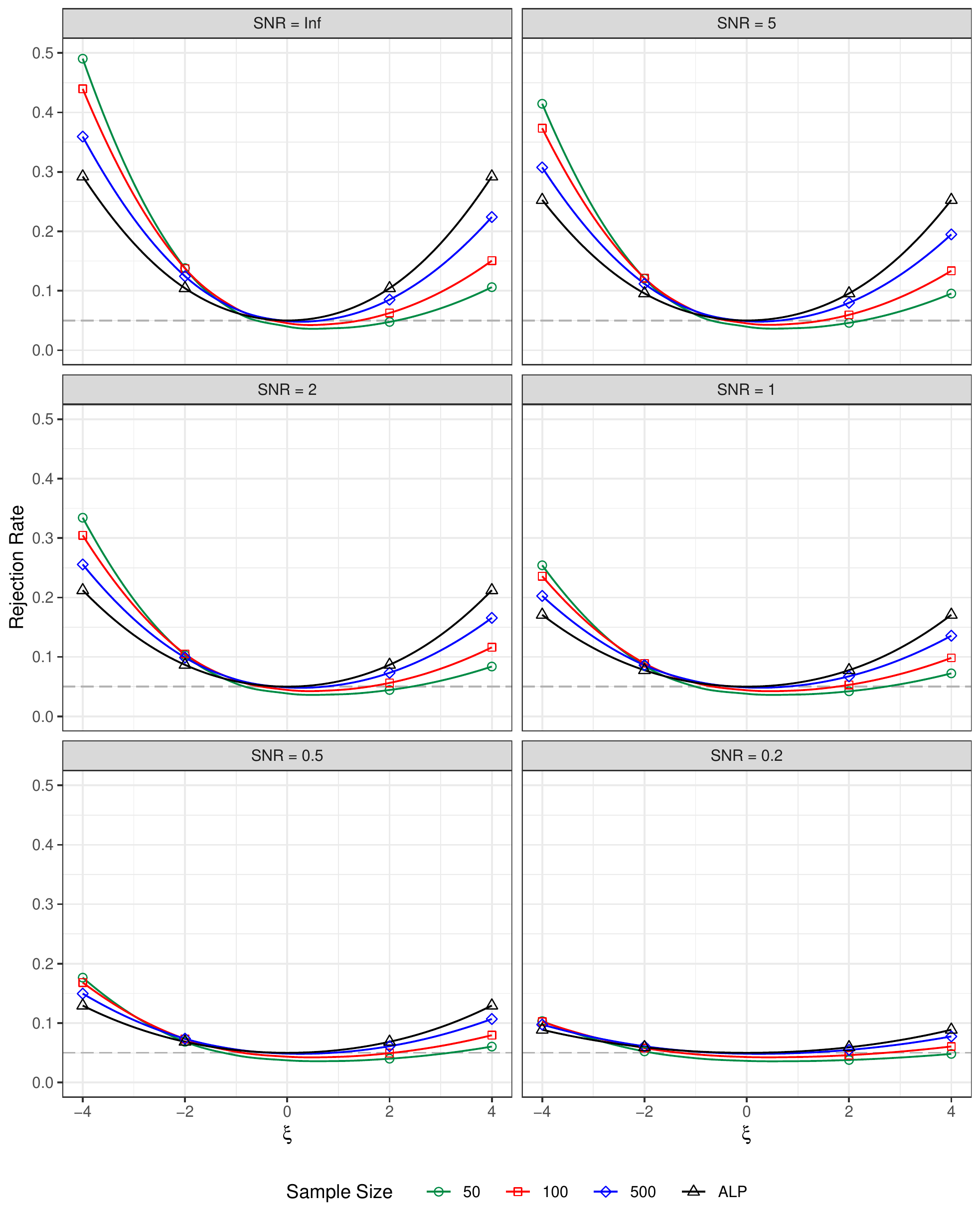}
	\caption*{\footnotesize \textit{Notes:} Empirical rejection frequencies of $\widehat{T}_n$-based ECPA test for sample sizes $n\in\{50,\, 100,\, 500\}$. Results based on DGP and forecasts from \eqref{eq:1}--\eqref{eq:4} with $\mu=1$, $\phi=0.2$, $\sigma_{\varepsilon}^2=1$, $\sigma_1^2=\mu^2(1-\phi)^2+\xi/\sqrt{n}$, and $\sigma_{\widehat{\varepsilon}}^2$ chosen to yield SNR of $\zeta\in\{1/5,\, 1/2,\, 1,\, 2,\, 5,\, \infty\}$. Forecasts are evaluated using SE loss.}
	\label{fig:SimMSE}
\end{figure}

The six panels in Figure~\ref{fig:SimMSE} correspond to the six different values of $\zeta$. In each panel, the empirical rejection frequencies are displayed as a function of $\xi$ for sample sizes $n\in\{50,\ 100,\ 500\}$. The theoretical ALP is displayed as the black reference line. Note that by \eqref{eq:ALP} the ALP curve is symmetric around $\delta=0$ and, hence, also around $\xi=0$. We draw two conclusions from Figure~\ref{fig:SimMSE}.

\begin{enumerate}
	\item As predicted by Theorem~\ref{thm:2}, the empirical rejection frequencies converge to the ALP curve as $n\to\infty$. In particular, no matter how noisy the proxy, size is approximately accurate. As could be expected from a local power result, the rejection frequencies do not increase monotonically with the sample size as is typically the case for fixed alternatives. Here, while for larger $n$ rejections for $\xi>0$ are more frequent in Figure~\ref{fig:SimMSE}, for smaller $n$ rejections for $\xi<0$ occur more often.
	
	\item As also expected from Theorem~\ref{thm:2}, more precise proxies (i.e., with higher $\zeta$) lead to easier discrimination between forecasts in the sense of higher power; see in particular the upper two panels. On the other hand, when $\sigma_{\widehat{\varepsilon}}^2\to\infty$, the signal in $Y_t$ is eventually buried by the noise in $\widehat{\varepsilon}_t$, giving only close to trivial power in the lower two panels.
\end{enumerate}

To summarize, our simulations show that for noisy target variables, ECPA tests are \textit{valid} when using \textit{robust} loss functions in the sense of unaffected rejection rates under the null hypothesis. However, even for robust loss functions, noisy targets negatively influence the test power, which suggests that applied researchers should utilize the most accurate proxy in forecast comparisons.

\subsection{Simulations for Non-Robust Loss Functions}\label{Simulations for non-robust loss functions}

Here, we illustrate the behavior of the ECPA test based on $\widehat{T}_n$ for a non-robust loss function. We again consider the DGP and the two forecasts from \eqref{eq:1}--\eqref{eq:4}.  However, we now use the AE as our loss function $L(\cdot, \cdot)$, which is well-known to elicit the median \citep[Theorem~9]{Gne11}. Since the mean and the median coincide for the symmetric conditional distribution of our employed DGP, the optimal forecast is again $x_t^\ast=\mu(1-\phi)+\phi Y_{t-1}$. We deliberately choose a simulation setting where the mean coincides with the median and, hence, both the SE and AE loss elicit the mean. This allows us to specifically focus on \textit{robustness} of the losses, detached from their elicited functional.

\vspace{0.5cm}

For the AE loss differences we obtain from straightforward calculations using properties of the folded normal distribution that
\begin{align}
	\E[d(Y_t, x_{1t}, x_{2t})] &= (\sigma_{\varepsilon}^2+\sigma_{1}^2) \sqrt{\frac{2}{\pi}} - \sigma_{\varepsilon}^2\sqrt{\frac{2}{\pi}}\exp\left\{-\frac{\mu^2(1-\phi)^2}{2\sigma_{\varepsilon}^2}\right\}\notag\\
	&\hspace{5cm}-\mu(1-\phi)\left[1-2\Phi\left(-\frac{\mu(1-\phi)}{\sigma_{\varepsilon}}\right)\right],\label{eq:d MAE}
\end{align}
where $\Phi(\cdot)$ denotes the standard normal distribution function. Except for $\sigma_1^2$ and $\sigma_{\widehat{\varepsilon}}^2$, we choose the parameters as in the previous subsection, i.e., $\mu=1$, $\phi=0.2$, and $\sigma_{\varepsilon}^2=1$. We vary $\sigma_{\widehat{\varepsilon}}^2$ to yield SNR parameters $\zeta\in\{2,\infty\}$. 
For $\sigma_1^2=0.70...$, we obtain $\E[d(Y_t, x_{1t}, x_{2t})]=0$ by \eqref{eq:d MAE}. Thus, for $\sigma_1^2=0.70...+\xi/\sqrt{n}$ we get that
\begin{align*}
\E[h_{t-1}d(Y_t, x_{1t}, x_{2t})]&=\begin{pmatrix}\E[d(Y_t, x_{1t}, x_{2t})]\\
\E[\widehat{Y}_{t-1}d(Y_t, x_{1t}, x_{2t})]\end{pmatrix}= \begin{pmatrix}\sqrt{2/\pi}\xi/\sqrt{n}\\
\sqrt{2/\pi}\mu\xi/\sqrt{n}\end{pmatrix}=:\delta/\sqrt{n}.
\end{align*}
This again amounts to simulating under $H_{a,\loc}$, with $\xi=0$ corresponding to the null. We vary $\xi$ in the interval $[-4, 4]$.

\begin{figure}[t!]
	\centering
	\caption{Rejection frequencies for ECPA tests using AE loss and different $\xi$.}
		\includegraphics[width=\textwidth]{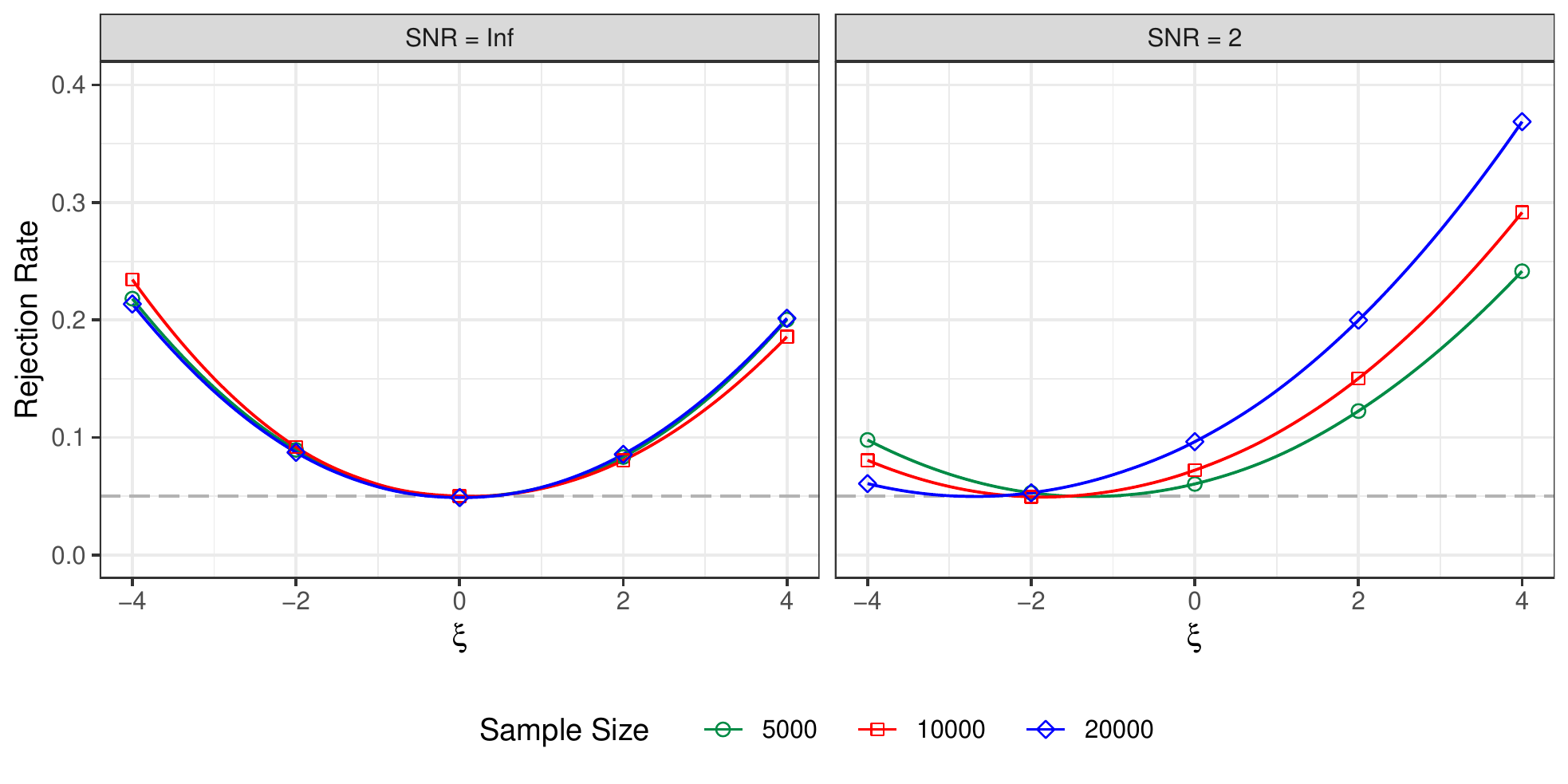}
	\caption*{\footnotesize \textit{Notes:} Empirical rejection frequencies of $\widehat{T}_n$-based ECPA test for sample sizes $n\in\{5000,\, 10000,\, 20000\}$. Results based on DGP and forecasts from \eqref{eq:1}--\eqref{eq:4} with $\mu=1$, $\phi=0.2$, $\sigma_{\varepsilon}^2=1$, $\sigma_1^2=0.70...+\xi/\sqrt{n}$, and $\sigma_{\widehat{\varepsilon}}^2$ chosen to yield SNR of $\zeta\in\{2,\infty\}$. Forecasts are evaluated using AE loss.}
	\label{fig:SimMAE}
\end{figure}

Due to the non-robustness of the AE, we obtain different values for the expectations when $\widehat{Y}_t$ is used. E.g., for $\sigma_1^2=0.70...$ we obtain for $\zeta=2$ that
\begin{equation}\label{eq:(13p)}
	\E[d(\widehat{Y}_t, x_{1t}, x_{2t})]=0.0050... \neq 0=\E[d(Y_t, x_{1t}, x_{2t})].
\end{equation}
Since the difference is rather small, we carry out the simulations for large sample sizes $n\in\{5000,\ 10000,\ 20000\}$ to highlight the implications of \eqref{eq:(13p)} under the null.

If the AE loss were robust, a test based on $\widehat{T}_n$ would have ALP given in \eqref{eq:ALP}. Thus, the empirical local power curve would closely resemble the U-shape of the ALP curves in Figure~\ref{fig:SimMSE}, with a minimum in $\xi=0$ at the nominal level. As reference lines, the left panel of Figure~\ref{fig:SimMAE} shows the test results based on $T_n$, which uses the true value $Y_t$ instead of the proxy $\widehat{Y}_t$, i.e., $\zeta=\infty$. The results based on $T_n$ have the characteristic U-shape, as could be expected from Theorem~\ref{thm:2} applied for $Y_t=\widehat{Y}_t$. However, as the AE loss is non-robust, qualitative differences emerge when using $\widehat{T}_n$, as shown in the right panel of Figure~\ref{fig:SimMAE} ($\zeta=2$). First, the tests are oversized under the null ($\xi=0$), which only gets worse for increasing $n$. Thus, even in this setting where the AE is a judicious choice for evaluating median forecasts, employing the noisy observations $\widehat{Y}_t$ impairs the empirical test size, which contrasts with the SE loss results in Figure~\ref{fig:SimMSE}. Second, and perhaps more seriously, the non-robust loss function leads to decreased power for negative $\xi$, i.e., for $\sigma_1^2<0.7$. We again have the undesirable result that this power loss \textit{increases} in $n$.

\section{Comparing Forecasts for US GDP Growth}
\label{Empirical Application}

As a measure of total real activity, GDP is arguably the most important macroeconomic indicator. However, as pointed out in the Motivation, GDP---and, by extension, GDP growth---can only be measured with error. 
In this application, we focus on continuously-compounded growth rates for US GDP, denoted by $Y_t=\dGDP_t$.
Specifically, we compare SPF and Greenbook forecasts issued at different horizons, yet for the same quarter $t$. 
We evaluate the forecasts using six different GDP growth proxies and we investigate, if---as indicated by Theorem~\ref{thm:2} and Proposition \ref{prop:VarYt}---it is indeed easier to discriminate between two forecasts when a more precise proxy is used.

To describe the growth proxies, recall that there are three ways to compute GDP: the production, income, and expenditures approach; see \citet[Table~1]{LSF08}.
All these methods may be regarded as providing estimates of the true latent value of GDP. 
We use proxies based on the income or expenditures approach, or a combination of the two.

Our first four $\dGDP$ proxies are based on expenditure-side estimates, obtained from \url{https://tinyurl.com/43fk9ms2}.
For this approach, the Federal Reserve Bank of Philadelphia provides data for all available \textit{vintages}, i.e., in each quarter, they report updated $\dGDP$ estimates for all previous quarters.
As information on past GDP accumulates over time, it is reasonable to assume that more recent vintages estimate true growth $\dGDP$ more accurately.
Here, we use the first, second, third, and most recent vintage as our $\widehat{Y}_t$'s, and denote them by $\dGDP_{E1}$, $\dGDP_{E2}$, $\dGDP_{E3}$ and $\dGDP_{E}$, respectively.
The most recent vintage refers to the latest available data as of September 30, 2020. 

Our fifth growth proxy $\dGDP_I$ is based on the most recent vintage of the income approach, and is obtained from \url{https://tinyurl.com/4tjvcb8w}.
While proxies based on the income method feature less prominently in economics, \citet{Nal10} nonetheless finds $\dGDP_I$ to better reflect the growth in real economic activity during the business cycle. 
As \cite{Aea16} note in their conclusion that early vintages of $\dGDP_I$ often provide less accurate information than their counterparts from the expenditure side (mainly due to the delayed availability of accurate tax returns), we only use the most recent vintage $\dGDP_I$.

As our sixth proxy, we use the ``GDP Plus‘‘ approach of \citet{Aea16}, which combines estimates from the income and expenditure side, and which is available at \url{https://tinyurl.com/4tjvcb8w}.
The authors argue that $\dGDP_I$ and $\dGDP_E$ provide complementary information on GDP growth.
The recent popularity of $\dGDP_+$ is documented by the Federal Reserve Bank of Philadelphia reporting it alongside the more classical measures $\dGDP_E$ and $\dGDP_I$.
The methodology behind $\dGDP_+$ has also spurred the development of new GDP proxies \citep{Almuzara2021,Jacobs2020}.

In more detail, \citet{Aea16} propose a dynamic factor model, where $\dGDP_E$ and $\dGDP_I$ both load on the single (latent) factor $\dGDP$:
\begin{equation}\label{eq:GDP}
	\begin{pmatrix}
		\dGDP_{E,t}\\ \dGDP_{I,t}
	\end{pmatrix}=\begin{pmatrix}
		1\\ 1
	\end{pmatrix}\dGDP_{t}+\begin{pmatrix}
		\varepsilon_{E,t}\\ \varepsilon_{I,t}
	\end{pmatrix},
\end{equation}
where $(\varepsilon_{E,t}, \varepsilon_{I,t})^\prime$ are assumed to be i.i.d.~with mean zero, implying in particular conditional unbiasedness of the growth proxies, i.e., $\E_{t-1}[\dGDP_{E,t}]=\E_{t-1}[\dGDP_{I,t}]=\E_{t-1}[\dGDP_{t}]$.
The authors extract GDP growth estimates, denoted $\dGDP_+$, by using the Kalman smoother. 
\citet{Aea16} argue that the Kalman filter extractions $\dGDP_+$ can be regarded as more precise approximations of true growth than either $\dGDP_E$ or $\dGDP_I$. 
As for $\dGDP_I$, we only use the most recent vintage of $\dGDP_+$.

\begin{figure}[tb]
	\begin{center}
		\caption{{GDP Measurements}}
		\begin{subfigure}{\linewidth}
			\centering
			\caption{GDP Vintages based on Expenditure Approach}
			\includegraphics[width=\textwidth]{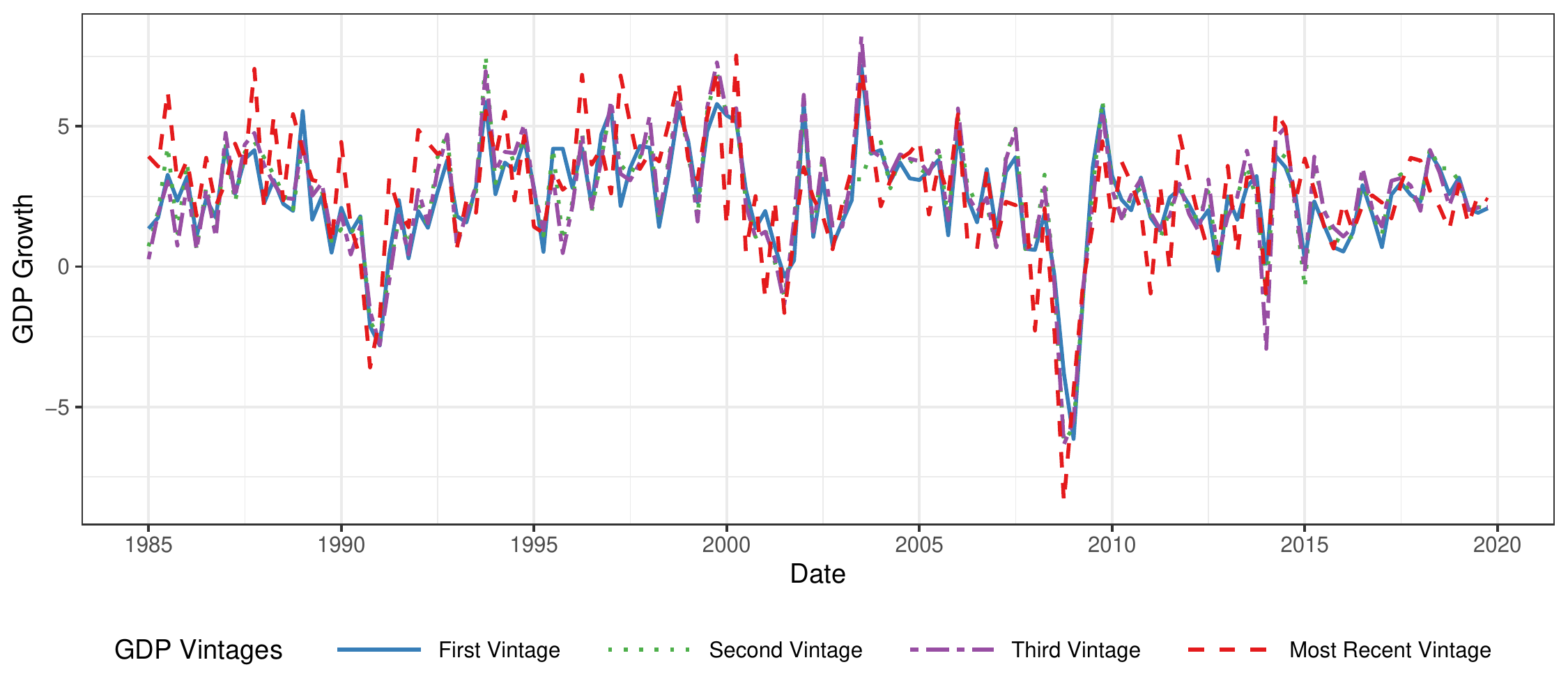}
			\label{fig:GDP_vintages}
		\end{subfigure}
		\begin{subfigure}{\linewidth}
			\centering
			\caption{GDP Measurement Approaches}
			\includegraphics[width=\textwidth]{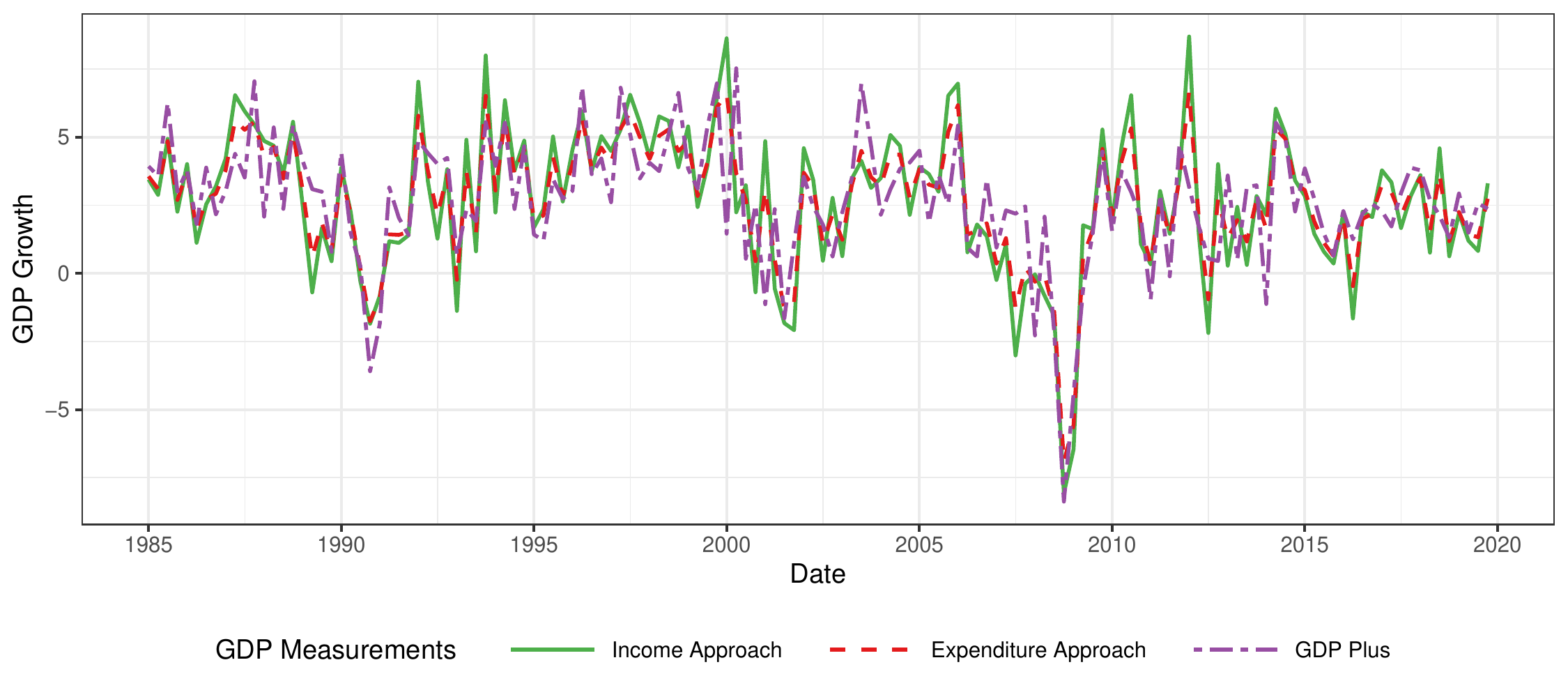}
			\label{fig:GDP_measurements}
		\end{subfigure}
		\label{fig:GDP_plots}
	\end{center}
\end{figure}



We consider the six GDP proxies from 1985Q1 until 2019Q3 to avoid structural breaks in our sample due to the Great Moderation and the Corona crisis. 
%
One may wonder whether the differences in our proxies are large enough during this period to suspect substantially altered results of ECPA tests. To shed light on this, Figure~\ref{fig:GDP_plots} displays the six GDP proxies over time. The upper panel displays the four vintages of the expenditure approach. The lower panel shows $\dGDP_{I}$, $\dGDP_{E}$ and $\dGDP_+$. In both panels, we see a clear joint behavior while the exact measurements differ---sometimes even substantially. For instance, in 2008Q1 in panel~(a) the first three vintages suggest a growing economy, whereas the most recent vintage indicates an almost 2.5\% decline. Hence, the results of ECPA tests may plausibly depend on which proxy is used in the comparison.

Valid ECPA tests using Theorem~\ref{thm:2} rely on \textit{conditionally unbiased} proxies satisfying
\begin{equation}\label{eq:ubi proxy}
	\E_{t-1}[\dGDP_t - \dGDP_{M,t}]\overset{\text{a.s.}}{=}0\qquad\text{for all}\quad M\in\mathcal{M}:=\{E1,\, E2,\, E3,\, E,\, I,\, +\}.
\end{equation}
As true $\dGDP$ is latent, this cannot be tested directly.
Instead, we test the implication of \eqref{eq:ubi proxy} that the proxy differences have zero conditional mean
\begin{equation}\label{eq:proxy impl}
	\E_{t-1}[\dGDP_{M,t}-\dGDP_{N,t}]\overset{\text{a.s.}}{=}0\qquad\text{for all}\quad M\neq N,\ M,N\in\mathcal{M}.
\end{equation}
Of course, it may be the case that all proxies $\dGDP_{M}$ ($M\in\mathcal{M}$), while satisfying \eqref{eq:proxy impl}, are biased in the same direction, thus invalidating \eqref{eq:ubi proxy}. However, as our proxies are based on inherently different approaches to GDP measurement, a common bias seems unlikely. Thus, we view a passed test of \eqref{eq:proxy impl} as a strong indication for \eqref{eq:ubi proxy} to hold as well.

We test \eqref{eq:proxy impl} by using multiple sets of instruments in a standard conditional moment test.
As the sequence $\{\dGDP_{M,t}-\dGDP_{N,t}\}$ is a martingale difference under the null hypothesis in \eqref{eq:proxy impl}, we follow \citet[Comment~5]{GW06} and base our test on the sample variance estimator (instead of on a HAC estimator). 
We use the following five instrument choices: (1) a constant; (2) a constant and the lagged first GDP proxy; (3) a constant and the lagged second GDP proxy; (4) a constant plus the difference of the lagged GDP proxies; and (5) a constant, the first lagged GDP proxy, and  the difference of the lagged GDP proxies.


\begin{table}[tb]
	\caption{$p$-Values of Tests for Conditional Unbiasedness of GDP Proxies}
	\label{tab:GDP_CondUnbiased}
	\centering
	\small
	\begin{tabularx}{0.76\linewidth}{XX @{\hspace{0.5cm}} l rrrrr}
		\toprule
		& & & \multicolumn{5}{c}{Instruments}  \\
		\cmidrule(lr){3-8}
		$GDP_M$	& $GDP_N$ & & Inst.~1 &  Inst.~2 &  Inst.~3 &  Inst.~4 &  Inst.~5 \\
		\midrule
		$\dGDP_{E1}$ & $\dGDP_{E2}$ &   & 0.363 & 0.516 & 0.527 & 0.825 & 0.706\\
$\dGDP_{E1}$ & $\dGDP_{E3}$ &   & 0.227 & 0.379 & 0.269 & 1.000 & 0.590\\
$\dGDP_{E1}$ & $\dGDP_{E}$  &   & 0.106 & 0.089 & 0.276 & 0.551 & 0.635\\
$\dGDP_{E1}$ & $\dGDP_{I}$  &   & 0.171 & 0.061 & 0.106 & 0.978 & 0.524\\
$\dGDP_{E1}$ & $\dGDP_{+}$  &   & 0.097 & 0.045 & 0.031 & 0.982 & 0.350\\
$\dGDP_{E2}$ & $\dGDP_{E3}$ &   & 0.649 & 0.892 & 0.878 & 0.987 & 0.962\\
$\dGDP_{E2}$ & $\dGDP_{E}$  &   & 0.244 & 0.289 & 0.532 & 0.627 & 0.797\\
$\dGDP_{E2}$ & $\dGDP_{I}$  &   & 0.306 & 0.205 & 0.226 & 0.998 & 0.622\\
$\dGDP_{E2}$ & $\dGDP_{+}$  &   & 0.237 & 0.208 & 0.105 & 0.937 & 0.486\\
$\dGDP_{E3}$ & $\dGDP_{E}$  &   & 0.308 & 0.421 & 0.613 & 0.825 & 0.918\\
$\dGDP_{E3}$ & $\dGDP_{I}$  &   & 0.377 & 0.350 & 0.321 & 0.993 & 0.778\\
$\dGDP_{E3}$ & $\dGDP_{+}$  &   & 0.321 & 0.389 & 0.199 & 0.843 & 0.698\\
$\dGDP_{E}$  & $\dGDP_{I}$  &   & 0.862 & 0.407 & 0.974 & 0.246 & 0.986\\
$\dGDP_{E}$  & $\dGDP_{+}$  &   & 0.935 & 0.574 & 0.962 & 0.282 & 0.988\\
$\dGDP_{I}$  & $\dGDP_{+}$  &   & 0.736 & 0.612 & 0.335 & 0.646 & 0.808\\
	
		\bottomrule			
		\addlinespace
		\multicolumn{8}{p{.73\linewidth}}{\footnotesize \textit{Notes:} This table presents $p$-values for the tests of conditional unbiasedness of all 15 combinations of the employed GDP proxies. The five columns refer to the instrument choices given in the main text.
	}
	\end{tabularx}
\end{table}

Table~\ref{tab:GDP_CondUnbiased} reports $p$-values of these conditional moment restriction tests for all 15 distinct pairwise combinations of the six GDP proxies.
We find that at the $5\%$ ($10\%$) significance level, the null hypothesis can only be rejected in 2 (5) out of the 75 cases, which is below the nominal level for both choices.

We now describe the forecasts that we compare using our six proxies $\widehat{Y}_t$. The first set of growth forecasts is taken from the SPF. Since 1968, the SPF publishes quarterly forecasts---prepared by private sector economists---of macroeconomic variables in the US. The SPF is widely used for forecast comparisons in the academic literature \citep[e.g.,][]{Car03,Cam07,EMW09}. Specifically, we consider the (cross-respondent) mean GDP growth predictions of the SPF, available at \url{https://tinyurl.com/y9p8oylx}.

Second, we employ the so-called “Greenbook” forecasts of GDP growth, available at \url{https://tinyurl.com/y7b6pm2f}.
By using substantial resources, the staff of the Board of Governors of the Federal Reserve prepares these forecasts for each meeting of the Federal Open Market Committee \citep{RomerRomer2000}.
We use the forecast closest to the middle of each of the respective quarters.
Notice that the Greenbook forecasts are only available to the public after a five-year lag.

Our goal is to show that ECPA tests can more easily discriminate between two forecasts $x_{1t}$ and $x_{2t}$ if a more precise proxy is used. To this end, we require one forecast to be clearly superior to the other. For this, we employ forecasts for the same quarter $t$ (from \textit{either} the SPF \textit{or} Greenbook), however with varying horizons, i.e., forecasts for $\dGDP_t$ issued at different time points $t-\tau$. \citet{HE14} formally show that forecasts based on larger information sets are superior to those based on smaller information sets. 
The forecast with the shorter horizon ($x_{1t}$) naturally nests the information set of the longer horizon forecast ($x_{2t}$) that was issued a longer time ago, implying superiority of the former.
As we consider multi-step ahead forecasts, we employ a HAC covariance estimator \citep{Newey/West:87a, Andrews:91}.

\begin{table}[t!]
	\caption{Loss Differences and $p$-Values of ECPA Tests for Multiple GDP Measurements}
	\label{tab:GDP_ECPA_Results}
	\centering
	\small
	\begin{tabularx}{1\linewidth}{X @{\hspace{1cm}} lrrr @{\hspace{0.6cm}} lrrr @{\hspace{0.6cm}} lrrr}
		\toprule
		& \multicolumn{4}{c}{1Q vs.\ 2Q Ahead} & \multicolumn{4}{c}{1Q vs.\ 4Q Ahead}  & \multicolumn{4}{c}{2Q vs.\ 4Q Ahead} \\
		\cmidrule(lr){2-5}	\cmidrule(lr){6-9}	 \cmidrule(lr){10-13}	
		&& Loss & \multicolumn{2}{c}{$p$-value} && Loss & \multicolumn{2}{c}{$p$-value} && Loss & \multicolumn{2}{c}{$p$-value} \\
		\cmidrule(lr){4-5}	\cmidrule(lr){8-9}	 \cmidrule(lr){12-13}	
		&& Diff. & Inst.~$1$ & Inst.~$2$ && Diff. & Inst.~$1$ & Inst.~$2$ && Diff. & Inst.~$1$ & Inst.~$2$ \\
		\midrule
		\\
		& & \multicolumn{11}{l}{Panel A: Greenbook Forecasts and GDP Vintages} \\
		\cmidrule(lr){2-13}
		$\dGDP_{E1}$ &   & -0.565 & 0.115 & 0.265 &   & -0.874 & 0.075 & 0.208 &   & -0.278 & 0.194 & 0.374\\
$\dGDP_{E2}$ &   & -0.611 & 0.081 & 0.209 &   & -0.835 & 0.106 & 0.301 &   & -0.179 & 0.404 & 0.462\\
$\dGDP_{E3}$ &   & -0.570 & 0.096 & 0.255 &   & -0.823 & 0.107 & 0.310 &   & -0.218 & 0.329 & 0.406\\
$\dGDP_{E}$  &   & -0.554 & 0.091 & 0.206 &   & -0.911 & 0.070 & 0.151 &   & -0.412 & 0.087 & 0.196\\

		\midrule
		\\
		& & \multicolumn{11}{l}{Panel B: SPF Forecasts and GDP Vintages} \\
		\cmidrule(lr){2-13}
		$\dGDP_{E1}$ &   & -0.551 & 0.127 & 0.344 &   & -1.146 & 0.111 & 0.158 &   & -0.595 & 0.116 & 0.221\\
$\dGDP_{E2}$ &   & -0.508 & 0.157 & 0.406 &   & -1.234 & 0.101 & 0.135 &   & -0.725 & 0.075 & 0.078\\
$\dGDP_{E3}$ &   & -0.464 & 0.197 & 0.432 &   & -1.156 & 0.126 & 0.267 &   & -0.692 & 0.098 & 0.145\\
$\dGDP_{E}$  &   & -0.509 & 0.118 & 0.251 &   & -1.286 & 0.094 & 0.129 &   & -0.777 & 0.091 & 0.151\\
	
		\midrule
		\\
		& & \multicolumn{11}{l}{Panel C: Greenbook Forecasts and GDP Measurements} \\
		\cmidrule(lr){2-13}
		$\dGDP_I$ &   & -0.665 & 0.168 & 0.414 &   & -0.770 & 0.178 & 0.441 &   & -0.178 & 0.427 & 0.302\\
$\dGDP_E$ &   & -0.554 & 0.091 & 0.206 &   & -0.911 & 0.070 & 0.151 &   & -0.412 & 0.087 & 0.196\\
$\dGDP_{+}$ &   & -0.555 & 0.152 & 0.286 &   & -0.747 & 0.160 & 0.391 &   & -0.250 & 0.189 & 0.202\\
	
		\midrule
		\\
		& & \multicolumn{11}{l}{Panel D: SPF Forecasts and GDP Measurements} \\
		\cmidrule(lr){2-13}
		$\dGDP_I$ &   & -0.557 & 0.117 & 0.245 &   & -1.304 & 0.092 & 0.096 &   & -0.747 & 0.093 & 0.078\\
$\dGDP_E$ &   & -0.509 & 0.118 & 0.251 &   & -1.286 & 0.094 & 0.129 &   & -0.777 & 0.091 & 0.151\\
$\dGDP_{+}$ &   & -0.531 & 0.093 & 0.182 &   & -1.258 & 0.086 & 0.063 &   & -0.727 & 0.086 & 0.129\\
			
		\bottomrule			
		\addlinespace
		\multicolumn{13}{p{.98\linewidth}}{\footnotesize \textit{Notes:} This table reports the loss differences (column``Loss Diff.'') together with the $p$-values of the tests of ECPA based on the two instrument choices given in the text for the different comparisons given in Panels A--D.
		The first column panel reports results for the comparison of one- against two-quarter ahead forecasts, whereas the second and third column panels compare one- against four-, and two- against four-quarter ahead forecasts.
		}
	\end{tabularx}
\end{table}

Table~\ref{tab:GDP_ECPA_Results} shows $p$-values of ECPA tests (based on $\widehat{T}_n$) for the three combinations of one-, two-, and four-quarter ahead GDP growth forecasts. It does so individually for the SPF and the Greenbook forecasts.
As test functions $h_{t-1}$, we use a constant only (Inst.~1), and a constant jointly with the loss difference of the forecast, lagged by the horizon of the shorter of the two forecast horizons (Inst.~2).
Panels~A and~B of Table~\ref{tab:GDP_ECPA_Results} report results for the four different vintages, while Panels~C and~D are for $\dGDP_{E}$, $\dGDP_{I}$ and $\dGDP_{+}$.

We find that all loss differences in Table~\ref{tab:GDP_ECPA_Results} are negative, implying---as expected---better predictive ability of the shorter horizon forecasts.
For the upper two panels based on the different GDP vintages, we find substantially lower $p$-values for more recent vintages, which we explain by the theoretically higher test power.
E.g., the most recent vintage exhibits a smaller $p$-value than the first vintage in all 12 instances; and it attains overall the smallest $p$-value in nine of the 12 cases.

The results are qualitatively similar---but less pronounced---for the lower two panels of Table~\ref{tab:GDP_ECPA_Results}, which analyze $\dGDP_{E}$, $\dGDP_{I}$ and $\dGDP_{+}$.
Here, the $\dGDP_+$ approach, which is claimed to be superior in the literature, exhibits the lowest $p$-value only in five instances, but it seems to be generally lower, especially compared to the tests based on $\dGDP_{I}$.

\section{Conclusion}\label{Conclusion}

In this paper, we answer the following question: what can we learn from  forecast comparisons if the target variable cannot be observed precisely, and only mismeasured proxy variables are available?
We show that the classical forecast evaluation tools of loss functions and inference thereon can be used without modification if (a) a conditionally unbiased proxy is available and (b) the target functional is the conditional \textit{mean}.
In contrast, for other target functionals such as e.g., the conditional median, the use of approximated proxy variables generally distorts the loss differences and hence, inference in tests for ECPA.

This leads to the perhaps surprising conclusion that when evaluating forecasts in the presence of measurement error, the mean is ``more robust'' than the median, whereas the converse is well-known in classical estimation theory.
Hence, using the mean as target functional is particularly attractive in forecasting settings that are prone to measurement error, such as in our empirical application on GDP growth.
Further applications are widespread and include, among others, macroeconomic variables as inflation rates, volatility forecasts in finance, meteorological quantities as precipitation or wind speeds, and case or death counts in infectious disease forecasting.

We further demonstrate that even though standard inference in classical ECPA tests for mean forecasts is valid under measurement error, the test's (local) power decreases with the magnitude of the error.
This gives theoretical content to the empirical observation that ``[a]lthough consistency of the ordering is ensured by an appropriate choice of the loss function independently of the quality of the proxy, a high precision proxy allows to efficiently discriminate between models'' \citep[p.~7]{Laurent2013}. 
We also confirm this in Monte Carlo experiments and provide an empirical illustration on GDP growth.
We emphasize that this increase in power afforded by more precise proxies is particularly important in economics, where sample sizes are often limited---due to low-frequency data collection, structural breaks, etc.

\singlespacing
\putbib[thebib]
\end{bibunit}

\newpage
\setcounter{page}{1}

\doublespacing

\appendix
\appendixpage

\begin{bibunit}

\section{(Non-)Robust Evaluation of Quantile Forecasts}
\label{sec:NonRobustQuantiles}

As already noted after Theorem \ref{thm:1}, it remains an open question if forecasts for other target functionals than the mean can be evaluated robustly in the presence of measurement error when the conditional unbiasedness condition is replaced by some other ``resemblance condition''.
In this section, we focus on the most prominent class of target functionals besides the mean, that is, on conditional quantiles at levels $\alpha \in (0,1)$.

For technical simplicity (in the proof of Proposition \ref{prop:QuantileRobustness} below), we restrict attention to the univariate case ($k=1$), and to target variables with continuous conditional distribution functions and bounded support.
Formally, let $Y_t, \widehat Y_t:\Omega\rightarrow\mathsf{O}\subset\mathbb{R}$, where $\mathsf{O}$ is a bounded interval, and where for all $\omega \in \Omega$, the distribution functions $F_t(\omega, \cdot)$ and $\widehat F_t(\omega, \cdot)$ are continuous.
As customary for conditional quantile forecasts, $x_{t}:\Omega\rightarrow\mathsf{A}$, we set $\mathsf{A} = \mathsf{O}$.

For any strictly increasing function $g: \mathsf{O} \to \mathbb{R}$, we denote by $\mathcal{P}_g$ the class of compactly supported continuous distribution functions, $F$, such that $\E_{Y \sim F}[g(Y)]$ exists and is finite.
Then, the loss functions
\begin{align}
	\label{eqn:QuantLoss}
	L_{g,\alpha}(Y_t, x_t) =  \big( \mathds{1}(x_t \ge Y_t) - \alpha \big) \cdot \big( g(x_t) - g(Y_t) \big), 
\end{align}
are strictly $\mathcal{P}_g$-consistent for the $\alpha$-quantile.
Subject to further technical conditions, \textit{all} strictly consistent loss functions are of the form \eqref{eqn:QuantLoss}; see \citet[Theorem 9]{Gne11} for details.


For two competing forecasts $x_{1t}, x_{2t}: \Omega\rightarrow\mathsf{A}$, let $d_{g,\alpha}({Y}_t, x_{1t}, x_{2t})  = L_{g,\alpha}(Y_t, x_{1t})  - L_{g,\alpha}(Y_t, x_{2t})$.
Straight-forward calculations yield that the expected difference of loss differences (DLD) is
\begin{align}
\begin{aligned}
\label{eqn:QuantLossDiffDiff}
&\E_{t-1} \left[ d_{g,\alpha}({Y}_t, x_{1t}, x_{2t}) -  d_{g,\alpha}(\widehat{Y}_t, x_{1t}, x_{2t}) \right] \\
= \;&\E_{t-1} \big[   g(Y_t)  \big( \mathds{1}(Y_t \le x_{2t})  -  \mathds{1}(Y_t \le x_{1t}) \big)   \big]  
-  \E_{t-1} \big[ g(\widehat Y_t)  \big( \mathds{1}(\widehat Y_t \le x_{2t})  -  \mathds{1}(\widehat Y_t \le x_{1t}) \big)   \big] \\
&\qquad+ g(x_{1t}) \big[ F_t(x_{1t})  -  \widehat F_t(x_{1t}) \big]  - g(x_{2t}) \big[ F_t(x_{2t})  - \widehat F_t(x_{2t}) \big].
\end{aligned}
\end{align}

An exact robustness property would mean that the expected DLD must be zero \textit{for all} forecasts $x_{1t}$ and $x_{2t}$.
E.g., for any Bregman loss function in \eqref{eq:L char}, the expected DLD is linear in $\E_{t-1} [Y_t - \widehat Y_t]$, readily implying exact robustness for conditionally unbiased proxies.
In contrast, the form in \eqref{eqn:QuantLossDiffDiff} is much more complicated. We have the following result, whose proof is in Appendix~\ref{Proofs}:

\begin{prop}
	\label{prop:QuantileRobustness}
	Let $g: \mathsf{O} \to \mathbb{R}$ be a strictly increasing function and let $F_t(\omega, \cdot), \widehat F_t(\omega, \cdot) \in \mathcal{P}_g$ for all $\omega \in \Omega$.
	If
	\begin{align*}
	\E_{t-1} \left[ d_{g,\alpha}({Y}_t, x_{1t}, x_{2t}) -  d_{g,\alpha}(\widehat{Y}_t, x_{1t}, x_{2t}) \right] = 0 \; \text{ a.s.\ for all $\mathcal{F}_{t-1}$-measurable $x_{1t}$ and $x_{2t}$},
	\end{align*}
	then 
	$Y_t = \widehat Y_t$ in distribution.
\end{prop}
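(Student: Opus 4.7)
My plan is to show that the hypothesized vanishing of expected DLDs forces $\widehat F_t - F_t$ to integrate to zero against $dg$ on every initial subinterval of $\mathsf{O}$, and then to extract pointwise equality $\widehat F_t \equiv F_t$ from this via an elementary argument that exploits the strict monotonicity of $g$ together with the continuity of the two conditional distribution functions.

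The first step is a bookkeeping rearrangement of \eqref{eqn:QuantLossDiffDiff}. Introducing
$$
R(\omega, x) \;:=\; \E_{t-1}\!\big[g(Y_t)\mathds{1}(Y_t \le x)\big] \;-\; \E_{t-1}\!\big[g(\widehat Y_t)\mathds{1}(\widehat Y_t \le x)\big] \;-\; g(x)\big[F_t(\omega, x) - \widehat F_t(\omega, x)\big],
$$
a direct comparison shows that the right-hand side of \eqref{eqn:QuantLossDiffDiff} equals $R(\omega, x_{2t}(\omega)) - R(\omega, x_{1t}(\omega))$. Taking constant forecasts $x_{1t}\equiv a$ and $x_{2t}\equiv b$ for arbitrary rationals $a,b$ in a countable dense subset of $\mathsf{O}$, the hypothesis yields a single null set outside of which $R(\omega, a) = R(\omega, b)$ for all such rational endpoints. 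One-sided continuity of $R(\omega,\cdot)$ (an a posteriori property following from the integration-by-parts identity derived below) then extends this to all of $\mathsf{O}$, so $R(\omega, \cdot)$ is a.s.\ constant; letting $x\downarrow\inf\mathsf{O}$ shows this constant is zero.

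The second step is the central algebraic move. Since $F_t(\omega,\cdot)$ is continuous and $g$ has bounded variation on the bounded set $\mathsf{O}$, $F_t$ and $g$ share no common jumps, and Young's integration-by-parts formula gives
$$
\int_{\inf\mathsf{O}}^{x} g(y)\,dF_t(\omega,y) \;=\; g(x)F_t(\omega,x) \;-\; \int_{\inf\mathsf{O}}^{x} F_t(\omega,y)\,dg(y),
$$
and analogously for $\widehat F_t$. Substituting these into the definition of $R(\omega,x)$, the $g(x)F_t$ and $g(x)\widehat F_t$ terms cancel exactly against the $-g(x)[F_t-\widehat F_t]$ piece, leaving the clean expression
$$
R(\omega, x) \;=\; \int_{\inf\mathsf{O}}^{x}\big[\widehat F_t(\omega, y) - F_t(\omega, y)\big]\,dg(y).
$$
Combined with step one, this reads $\int_{\inf\mathsf{O}}^{x}[\widehat F_t - F_t]\,dg = 0$ for every $x\in\mathsf{O}$, a.s.

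Finally, set $h(\omega,y):=\widehat F_t(\omega,y) - F_t(\omega,y)$, which is continuous in $y$. Differencing over arbitrary intervals yields $\int_a^b h(\omega,y)\,dg(y) = 0$ for all $(a,b)\subset\mathsf{O}$. If $h(\omega,y_0)\neq 0$ for some $y_0$, continuity forces $h$ to have constant sign on a neighbourhood $(y_0-\epsilon, y_0+\epsilon)$; choosing $\epsilon$ so that $y_0\pm\epsilon$ avoid the at most countably many jumps of $g$, strict monotonicity of $g$ gives $dg\big((y_0-\epsilon, y_0+\epsilon)\big) > 0$, so the corresponding integral is strictly signed, a contradiction. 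Hence $F_t(\omega,\cdot) \equiv \widehat F_t(\omega,\cdot)$ a.s., which in particular yields $Y_t \stackrel{d}{=} \widehat Y_t$. I expect the integration-by-parts manipulation to be the main obstacle: identifying the right form of $R(\omega,x)$ so that all boundary contributions collapse is what makes the argument work, and this cancellation relies crucially on the continuity hypothesis on the conditional distributions, without which surviving cross-terms of the form $\Delta g\cdot\Delta F_t$ would prevent the clean reduction to a vanishing Stieltjes integral of $\widehat F_t - F_t$.
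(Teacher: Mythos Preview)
Your proof is correct and takes a genuinely different route from the paper's. The paper argues by contradiction: assuming $F_t(\omega,z)<\widehat F_t(\omega,z)$ at some point $z$, it locates the maximal interval $[x_{1t},x_{2t}]\ni z$ on whose interior the two CDFs differ but at whose endpoints they agree; on that interval the \emph{restricted} conditional distributions are comparable in the usual stochastic order, and the hypothesis forces $\E_{t-1}[g(Y_{R,t})]=\E_{t-1}[g(\widehat Y_{R,t})]$, whence equality in distribution follows from \citet[Theorem~1.A.8]{shaked2007stochastic}. Your approach instead reorganises \eqref{eqn:QuantLossDiffDiff} as a telescoping difference $R(\omega,x_{2t})-R(\omega,x_{1t})$, applies Young's integration by parts to obtain $R(\omega,x)=\int^{x}[\widehat F_t-F_t]\,\D g$, and then extracts $\widehat F_t\equiv F_t$ from the vanishing of all such integrals via a direct continuity-plus-strict-monotonicity argument. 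The trade-off is that the paper's proof is short once one is willing to cite the stochastic-ordering lemma, whereas yours is fully self-contained and makes transparent exactly where the continuity of $F_t,\widehat F_t$ enters (to kill the $\Delta g\cdot\Delta F$ correction in the parts formula) and where strict monotonicity of $g$ enters (to ensure $\D g$ charges every subinterval). A small presentational remark: since the distributions in $\mathcal{P}_g$ are compactly supported inside $\mathsf{O}$, it is cleanest to run the integration by parts on that compact support rather than from $\inf\mathsf{O}$, which avoids any worry about the behaviour of $g$ near the boundary of $\mathsf{O}$; this does not affect the validity of your argument.
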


Proposition \ref{prop:QuantileRobustness} shows that for quantile forecasts and their corresponding class of strictly consistent loss function in \eqref{eqn:QuantLoss}, an exhaustive exact robustness property---which holds for all possible forecasts $x_{1t}$ and $x_{2t}$---can only hold under the degenerate condition that the entire distributions of $Y_t$ and $\widehat Y_t$ coincide.

However, the specification of the expected DLD in \eqref{eqn:QuantLossDiffDiff} gives rise to a weaker form of robustness:
Equivalence of the conditional distributions $F_t(\cdot)$ and $\widehat F_t(\cdot)$ on the interval between $x_{1t}$ and $x_{2t}$, i.e., on the relevant region between the competing forecasts, is sufficient to guarantee that the expected DLD equals zero.
E.g., for forecasts for the conditional median, measurement error sufficiently far in the tails of the distributions leaves the expected DLD unchanged.
Analogously, for forecasts of quantiles in the left tail of the distribution, as in value-at-risk forecasting, measurement error in the right tail does not affect the loss differences.
However, this weaker form of robustness entails that particularly poor forecasts require stronger relations between $Y_t$ and $\widehat Y_t$, i.e., equivalence of the conditional distributions on larger intervals.

\begin{rem}
	Besides point predictions, interest is often on predictive $\mathcal{F}_{t-1}$-measurable cumulative distribution functions (CDF) of $Y_t$.
	We denote the competing CDF-valued forecasts by $G_{1t}$ and $G_{2t}$.
	A popular (strictly) proper scoring rule is the threshold-weighted CRPS (twCRPS), based on some non-negative weight function $u(z)$ \citep{GR11},
	\begin{align*}
	L(Y_t, G_t) = \int_\mathbb{R} u(z) \left[ G_t(z) - \mathds{1}(Y_t \le z) \right]^2 \mathrm{d}z.
	\end{align*}
	For this scoring rule, the expected DLDs (based on forecasts $G_{1t}$ and $G_{2t}$) is given by
	\begin{align*}
		2 \int_\mathbb{R} u(z) \Big[ G_{2t}(z) - G_{1t}(z) \Big]  \cdot \Big[ F_t(z) - \widehat F_t(z) \Big] \mathrm{d}z,
	\end{align*}
	which is strictly positive almost surely for $G_{1t} = \widehat F_t$ and $G_{2t} = F_t$, unless the conditional distributions of $Y_t$ and $\widehat Y_t$ coincide almost surely (or unless $u(z) = 0$ on the region where the conditional distributions differ, which however violates the score's strict propriety).
	
	This implies that even for the most natural forecasts, i.e., ideal probabilistic forecasts for $F_t$ and $\widehat F_t$, the expected DLD is non-zero.
	Hence, any form of exact robustness seems out of reach for the twCRPS, and more general for probabilistic forecasts, which might explain the absence of strong theoretical results for the evaluation of probability forecasts under measurement error; also see \cite{Fer17}, \cite{Kleen2019} and \citet[Example 6]{BrehmerGneiting2020}.
\end{rem}


\section{Proofs}\label{Proofs}

\renewcommand{\theequation}{B.\arabic{equation}}	
\setcounter{equation}{0}

This appendix collects the proofs of all theoretical results in the main paper and the proof of Proposition~\ref{prop:QuantileRobustness} from Appendix~\ref{sec:NonRobustQuantiles}.

\begin{proof}[{\textbf{Proof of Theorem~\ref{thm:1}:}}]
The theorem is established upon proving $(a)\Longrightarrow(b)\Longrightarrow(c)\Longrightarrow(d)\Longrightarrow(a)$.

We first show $(a)\Longrightarrow(b)$. Write
\begin{align*}
	\E_{t-1}[d(Y_t, x_{1t}, x_{2t})] &= \E_{t-1}[L(Y_t, x_{1t})-L(Y_t, x_{2t})]\\
		&= \E_{t-1}[\phi(x_{1t})-\phi(x_{2t}) + \langle\D\phi(x_{1t}), Y_t-x_{1t}\rangle - \langle\D\phi(x_{2t}), Y_t-x_{2t}\rangle].
\end{align*}
Hence,
\begin{align*}
	\E_{t-1}[d(Y_t, x_{1t}, x_{2t})]-\E_{t-1}[d(\widehat{Y}_t, x_{1t}, x_{2t})] &= \E_{t-1}[\langle\D\phi(x_{1t}), Y_t-\widehat{Y}_t\rangle-\langle\D\phi(x_{2t}), Y_t-\widehat{Y}_t\rangle]\\
	&= \langle\D\phi(x_{1t}),\E_{t-1}[Y_t-\widehat{Y}_t]\rangle-\langle\D\phi(x_{2t}), \E_{t-1}[Y_t-\widehat{Y}_t]\rangle\overset{\text{a.s.}}{=}0,
\end{align*}
by conditional unbiasedness of $\widehat{Y}_t$. This implies $(b)$.

The implication $(b)\Longrightarrow(c)$ is immediate, so we prove $(c)\Longrightarrow(d)$ next. Consider some $Y_t$ and some conditionally unbiased $\widehat{Y}_t$ with $F_t(\omega,\cdot)\in\mathcal{P}$ and $\widehat{F}_t(\omega,\cdot)\in\mathcal{P}$ for all $\omega\in\Omega$. Let $\widetilde{x}_t:\Omega\rightarrow\mathsf{A}$ be a $\mathcal{F}_{t-1}$-measurable random variable that is different from $\omega\mapsto x_t^\ast(\omega)=T(F_{t}(\omega,\cdot))$ with positive probability.
By (a straightforward multivariate extension of) Theorem~1 of \citet{HE14}, $x_t^\ast$ uniquely minimizes $\E_{t-1}[L(Y_t,\cdot)]$ in the probabilistic sense, that is,
\begin{equation}
	\label{eq:prxy le}
	\E_{t-1}[L(Y_t,x_t^\ast)] < \E_{t-1}[L(Y_t,\widetilde{x}_t)]
\end{equation}
with positive probability, and $\E_{t-1}[L(Y_t,x_t^\ast)] \overset{\text{a.s.}}{\leq} \E_{t-1}[L(Y_t,\widetilde{x}_t)]$. The latter inequality and ordering robustness imply
\begin{equation}\label{eq:prxy leq}
	\E_{t-1}[L(\widehat{Y}_t,x_t^\ast)]\overset{\text{a.s.}}{\leq}\E_{t-1}[L(\widehat{Y}_t,\widetilde{x}_t)].
\end{equation}
Proceed by contradiction and suppose that equality in \eqref{eq:prxy leq} holds almost surely. Then,
\[
	\E_{t-1}[L(\widehat{Y}_t,x_t^\ast)]\overset{\text{a.s.}}{\leq}\E_{t-1}[L(\widehat{Y}_t,\widetilde{x}_t)]\qquad\text{and}\qquad\E_{t-1}[L(\widehat{Y}_t,x_t^\ast)]\overset{\text{a.s.}}{\geq}\E_{t-1}[L(\widehat{Y}_t,\widetilde{x}_t)].
\]
By ordering robustness, this is equivalent to 
\[
	\E_{t-1}[L(Y_t,x_t^\ast)]\overset{\text{a.s.}}{\leq}\E_{t-1}[L(Y_t,\widetilde{x}_t)]\qquad\text{and}\qquad\E_{t-1}[L(Y_t,x_t^\ast)]\overset{\text{a.s.}}{\geq}\E_{t-1}[L(Y_t,\widetilde{x}_t)],
\]
which in turn is equivalent to $\E_{t-1}[L(Y_t,x_t^\ast)]\overset{\text{a.s.}}{=}\E_{t-1}[L(Y_t,\widetilde{x}_t)]$. 
However, this contradicts \eqref{eq:prxy le} and thus, the inequality in \eqref{eq:prxy leq} must hold in the strict sense with positive probability.
This implies that $x_t^\ast$ also uniquely minimizes $\E_{t-1}[L(\widehat{Y}_t,\cdot)]$ in the probabilistic sense.

The remainder of the proof of this step is adapted from arguments kindly given to us by Tobias Fissler. Collecting our results so far, we have for $x_t^\ast(\omega)=T(F_{t}(\omega,\cdot))$ that
\begin{equation}\label{eq:star}
	\E_{t-1} \big[ L(Y_t, x_t^\ast) \big] \overset{\text{a.s.}}{\le} \E_{t-1} \big[ L(Y_t, \tilde x_t) \big]  \qquad \text{for all}\ \mathcal{F}_{t-1}\text{-measurable}\ \tilde x_t:\Omega\rightarrow\mathsf{A}.
\end{equation}
Furthermore, for \textit{any} conditionally unbiased proxy $\widehat{Y}_t$ with $\E_{t-1}[\widehat{Y}_t]=\E_{t-1}[Y_t]$ we have that
\begin{equation}\label{eq:star1}
	\E_{t-1} \big[ L(\widehat{Y}_t, x_t^\ast) \big] \overset{\text{a.s.}}{\le} \E_{t-1} \big[ L(\widehat{Y}_t, \tilde x_t) \big]  \qquad \text{for all}\ \mathcal{F}_{t-1}\text{-measurable}\ \tilde x_t:\Omega\rightarrow\mathsf{A},
\end{equation}
and almost sure equality in \eqref{eq:star1} again implies $\tilde x_t \overset{\text{a.s.}}{=} x_t^\ast$. By Theorem~1 of \citet{HE14}, \eqref{eq:star1} implies $x_t^\ast(\omega)=T(\widehat{F}_t(\omega,\cdot))$ for $\p$-almost every (a.e.) $\omega\in\Omega$. Hence, $x_t^\ast(\omega)=T(F_t(\omega,\cdot))=T(\widehat{F}_t(\omega,\cdot))$ for all conditionally unbiased proxies. One functional satisfying this is the mean functional
\begin{equation}\label{eq:meanfunctional}
	T:\mathcal{P}\rightarrow\mathsf{A},\quad F\mapsto T(F)=\int y\D F(y),
\end{equation}
because then $x_t^\ast=\E_{t-1}[Y_t]=\E_{t-1}[\widehat{Y}_t]$ for all conditionally unbiased proxies. 
By uniqueness of the minimizer (in the probabilistic sense), no other functional which deviates from \eqref{eq:meanfunctional} satisfies \eqref{eq:star} and \eqref{eq:star1}. Thus, $(d)$ follows.


Finally, we show that $(d)\Longrightarrow(a)$. From $(d)$ we know that $L(\cdot,\cdot)$ elicits the mean, given in \eqref{eq:meanfunctional}.
Theorem~11 of \citet{FK15} then implies that $L(\cdot,\cdot)$ is of the form given in \eqref{eq:L char} with convex $\phi:\mathsf{A}\rightarrow\mathbb{R}$.

It remains to show that $\phi:\mathsf{A}\rightarrow\mathbb{R}$ is necessarily strictly convex. If $\mathsf{A}$ is a singleton, there is nothing to show. So we assume that $\mathsf{A}$ is not degenerate. We proceed by contradiction and assume that $\phi(\cdot)$ is merely convex, but not strictly convex. Then, by Propositions~6.1.1 and~6.1.3 in \citet{HL01}, there exist $x_1,x_2\in\mathsf{A}$ with $x_1\neq x_2$, such that
\[
	\phi(x_1)-\phi(x_2)+\langle\D\phi(x_1), x_2-x_1\rangle=0.
\]
Now, by surjectivity of $T$, there exists $F\in\mathcal{P}$, such that $T(F)=x_2$. Then, for $Y\sim F$,
\[
	\E_{Y \sim F}[L(Y,x_1)]-\E_{Y \sim F}[L(Y,x_2)] = \phi(x_1)-\phi(x_2)+\langle\D\phi(x_1), x_2-x_1\rangle=0,
\]
which contradicts the strict consistency of $L(\cdot,\cdot)$ for $T$. Hence, $\phi(\cdot)$ must be strictly convex, concluding the proof.
\end{proof}

\begin{proof}[{\textbf{Proof of Theorem~\ref{thm:2}:}}]
The proof follows along similar lines as that of Theorems~1 and~2 in \citet{GW06}. The main difference is that we need to consider triangular arrays. We first prove
\begin{equation}\label{eq:Conv}
 \Omega^{-1/2}\frac{1}{\sqrt{n}}\sum_{t=1}^{n}\widehat{Z}_{n,t}\overset{d}{\longrightarrow}N(\Omega^{-1/2}\delta,I_{q\times q}),\quad\text{as }n\to\infty,
\end{equation}
where $I_{q\times q}$ denotes the $(q\times q)$-identity matrix. 

Before turning to the proof, observe that by D4 there exists some function $f_{n,t}$, such that $\widehat{Z}_{n,t}=f_{n,t}(h_{t-1},\widehat{W}_{t},\ldots,\widehat{W}_{t-m})$. Suppose that $\alpha$-mixing holds. Then, the $\alpha$-mixing coefficients of $\widehat{Z}_{n,t}$ are defined by
\begin{equation}\label{eq:alpha}
	\sup_{n}\alpha_n(h)=\sup_n \sup_{t}\sup_{A\in\mathcal{A}_{n,t},\ B\in\mathcal{B}_{n,t+h}} \big|\p\{A\cap B\}- \p\{A\}\p\{B\}\big|,
\end{equation}
where $\mathcal{A}_{n,t}=\sigma(\ldots,\widehat{Z}_{n,t-1},\widehat{Z}_{n,t})$ and $\mathcal{B}_{n,t+h}=\sigma(\widehat{Z}_{n,t+h},\widehat{Z}_{n,t+h+1},\ldots)$. 
The inclusions 
\begin{align*}
	\mathcal{A}_{n,t}&\subseteq\sigma\big(\ldots,(\widehat{W}_{t-1}^\prime, h_{t-1}^\prime)^\prime, (\widehat{W}_t^\prime, h_t^\prime)^\prime\big),\\
	\mathcal{B}_{n,t+h}&\subseteq\sigma\big((\widehat{W}_{t+h-m}^\prime, h_{t+h-m}^\prime)^\prime, (\widehat{W}_{t+h-m+1}^\prime, h_{t+h-m+1}^\prime)^\prime,\ldots\big)
\end{align*}
hold since $\widehat{Z}_{n,t}=f_{n,t}(h_{t-1},\widehat{W}_{t},\ldots,\widehat{W}_{t-m})$.
Therefore, by \eqref{eq:alpha}, $\sup_{n}\alpha_n(h)\leq\alpha(h-m)$, where $\alpha(\cdot)$ denote the $\alpha$-mixing coefficients of $\{(\widehat{W}_{t}^\prime, h_{t}^\prime)^\prime\}$. This implies that $\widehat{Z}_{n,t}$ inherits the $\alpha$-mixing rate of $\{(\widehat{W}_{t}^\prime, h_{t}^\prime)^\prime\}$. Using similar arguments, we may also show that the $\widehat{Z}_{n,t}$ inherit the $\phi$-mixing rate from $\{(\widehat{W}_{t}^\prime, h_{t}^\prime)^\prime\}$. In both cases, $\widehat{Z}_{n,t}$ is mixing of the same size as $\{(\widehat{W}_{t}^\prime, h_{t}^\prime)^\prime\}$.

We now verify \eqref{eq:Conv}. Let $\lambda\in\mathbb{R}^{q}$ such that $\lambda^\prime\lambda=1$. Write
\begin{align*}
\lambda^{\prime}\Omega^{-1/2}\frac{1}{\sqrt{n}}\sum_{t=1}^{n}\widehat{Z}_{n,t} &= \lambda^{\prime}\Omega^{-1/2}\frac{1}{\sqrt{n}}\sum_{t=1}^{n}\left[\widehat{Z}_{n,t}-\frac{\delta}{\sqrt{n}}\right] + \lambda^{\prime}\Omega^{-1/2}\frac{1}{\sqrt{n}}\sum_{t=1}^{n}\frac{\delta}{\sqrt{n}}\\
&=\lambda^{\prime}\Omega^{-1/2}\frac{1}{\sqrt{n}}\sum_{t=1}^{n}\left[\widehat{Z}_{n,t}-\frac{\delta}{\sqrt{n}}\right] + \lambda^{\prime}\Omega^{-1/2}\delta.
\end{align*}
Next, we show that $\big\{\lambda^{\prime}\Omega^{-1/2}\big(\widehat{Z}_{n,t}-\frac{\delta}{\sqrt{n}}\big)\big\}_{n,t}$ is a zero-mean triangular array satisfying the conditions of Theorem~5.20 of \citet{Whi01}. The fact that it has mean zero follows from \eqref{eq:(p.8)}. Using D3, we have that
\begin{align*}
	\overline{\sigma}_n^2 &:= \Var\Big[\lambda^\prime\Omega^{-1/2}\frac{1}{\sqrt{n}}\sum_{t=1}^{n}\Big(\widehat{Z}_{n,t}-\frac{\delta}{\sqrt{n}}\Big)\Big]\\
	&= \lambda^\prime\Omega^{-1/2}\Var\Big[\frac{1}{\sqrt{n}}\sum_{t=1}^{n}\widehat{Z}_{n,t}\Big]\Omega^{-1/2}\lambda\\
	&=\lambda^\prime\Omega^{-1/2}\Omega_n\Omega^{-1/2}\lambda\underset{(n\to\infty)}{\longrightarrow}\lambda^\prime\lambda=1>0.
\end{align*}
Furthermore, by D2 there exists some $\Delta>0$, such that
\[
	\E\Big|\lambda^{\prime}\Omega^{-1/2}\Big(\widehat{Z}_{n,t}-\frac{\delta}{\sqrt{n}}\Big)\Big|^{2r}\leq\Delta<\infty.
\]
As pointed out above, $\widehat{Z}_{n,t}$ is mixing of the same size as $\{(W_{t}^\prime, h_{t}^\prime)^\prime\}$, so---in particular---it is $\alpha$-mixing of size $-2r/(2r-2)$ (since $-2r/(2r-2)$ implies weaker dependence than the actual $\alpha$-mixing rate of $-2r/(r-2)$) or $\phi$-mixing of size $-2r/[2(2r-2)]$ (since $-2r/[2(2r-2)]$ implies weaker dependence than the actual $\phi$-mixing rate of $-r/(r-1)$). Thus, all conditions of Theorem~5.20 in \citet{Whi01} are satisfied, and we conclude
\[
	\lambda^{\prime}\Omega^{-1/2}\frac{1}{\sqrt{n}}\sum_{t=1}^{n}\Big(\widehat{Z}_{n,t}-\frac{\delta}{\sqrt{n}}\Big)\overset{d}{\longrightarrow}N(0,1),\qquad\text{as }n\to\infty. 
\]
Since this holds for all $\lambda\in\mathbb{R}^{q}$ with $\lambda^\prime\lambda=1$, we get from a Cram\'{e}r--Wold device \citep[Proposition~5.1]{Whi01} that 
\[
	\Omega^{-1/2}\frac{1}{\sqrt{n}}\sum_{t=1}^{n}\Big(\widehat{Z}_{n,t}-\frac{\delta}{\sqrt{n}}\Big)\overset{d}{\longrightarrow}N(0,I_{q\times q}),\qquad\text{as }n\to\infty,
\]
or, equivalently, \eqref{eq:Conv}, that is,
\[
	\Omega^{-1/2}\frac{1}{\sqrt{n}}\sum_{t=1}^{n}\widehat{Z}_{n,t}\overset{d}{\longrightarrow}N(\Omega^{-1/2}\delta,I_{q\times q}),\qquad\text{as }n\to\infty.
\]
By assumption, $\widehat{\Omega}_n-\Omega_n=o_{\p}(1)$ and $\Omega_n-\Omega=o(1)$, whence $\widehat{\Omega}_n\overset{\p}{\longrightarrow}\Omega$. Due to this,
\[
	\widehat{\Omega}_n^{-1/2}\frac{1}{\sqrt{n}}\sum_{t=1}^{n}\widehat{Z}_{n,t}\overset{d}{\longrightarrow}N(\Omega^{-1/2}\delta,I_{q\times q}),\qquad\text{as }n\to\infty.
\]
The continuous mapping theorem implies that
\[
	\widehat{T}_n=n\left(\frac{1}{n}\sum_{t=1}^{n}\widehat{Z}_{n,t}\right)^\prime\widehat{\Omega}_n^{-1}\left(\frac{1}{n}\sum_{t=1}^{n}\widehat{Z}_{n,t}\right)\overset{d}{\longrightarrow}\chi_q^2(\delta^\prime\Omega^{-1}\delta),\qquad\text{as }n\to\infty.
\]
This concludes the proof.
\end{proof}

\begin{proof}[{\textbf{Proof of Proposition~\ref{prop:LRV}:}}]
By arguments similar to those used in the proof of Theorem~\ref{thm:2}, $\{\widehat{Z}_{n,t}\widehat{Z}_{n,t}^\prime\}$ is also mixing of the same size as $\{(\widehat{W}_{t}^\prime, h_{t}^\prime)^\prime\}$. Thus, $\{\widehat{Z}_{n,t}\widehat{Z}_{n,t}^\prime\}$ is $\alpha$-mixing of size $-2r/(r-2)$ or $\phi$-mixing of size $-r/(r-1)$. By the Cauchy--Schwarz inequality and D2,
\[
	\E|\widehat{Z}_{n,t}^{(i)}\widehat{Z}_{n,t}^{(j)}|^{r}\leq\sqrt{\E|\widehat{Z}_{n,t}^{(i)}|^{2r}}\sqrt{\E|\widehat{Z}_{n,t}^{(j)}|^{2r}}\leq\Delta_Z^{1/2}\Delta_Z^{1/2}<\infty.
\]
Hence, Theorem~6.20 in \citet{Whi01} shows for
\begin{align*}
\overline{\Omega}_n &:=\frac{1}{n}\sum_{t=1}^{n}\Big(\widehat{Z}_{n,t}-\frac{\delta}{\sqrt{n}}\Big)\Big(\widehat{Z}_{n,t}-\frac{\delta}{\sqrt{n}}\Big)^\prime\\
	&\hspace{1cm}+\frac{1}{n}\sum_{h=1}^{m_n}w_{n,h}\sum_{t=h+1}^{n}\Bigg\{\Big(\widehat{Z}_{n,t}-\frac{\delta}{\sqrt{n}}\Big)\Big(\widehat{Z}_{n,t-h}-\frac{\delta}{\sqrt{n}}\Big)^\prime + 
	\Big(\widehat{Z}_{n,t-h}-\frac{\delta}{\sqrt{n}}\Big)\Big(\widehat{Z}_{n,t}-\frac{\delta}{\sqrt{n}}\Big)^\prime\Bigg\}
\end{align*}
that
\begin{equation}\label{eq:(24.1)}
\overline{\Omega}_n-\Omega_n\overset{p}{\longrightarrow}0.
\end{equation}
Use that $|w_{n,h}|\leq\Delta_w<\infty$ and $m_n=o(n^{1/4})$ to decompose
\begin{align*}
	\overline{\Omega}_n&=\widehat{\Omega}_n-\frac{\delta}{n}\Bigg\{\frac{1}{n}\sum_{t=1}^{n}\widehat{Z}_{n,t}^\prime + \frac{1}{n}\sum_{h=1}^{m_n}w_{n,h}\sum_{t=h+1}^{n}\widehat{Z}_{n,t-h}^\prime + \frac{1}{n}\sum_{h=1}^{m_n}w_{n,h}\sum_{t=h+1}^{n}\widehat{Z}_{n,t}^\prime\Bigg\}\\
	&\hspace{1.3cm}-\Bigg\{\frac{1}{n}\sum_{t=1}^{n}\widehat{Z}_{n,t}+\frac{1}{n}\sum_{h=1}^{m_n}w_{n,h}\sum_{t=h+1}^{n}\widehat{Z}_{n,t-h} + \frac{1}{n}\sum_{h=1}^{m_n}w_{n,h}\sum_{t=h+1}^{n}\widehat{Z}_{n,t}\Bigg\}\frac{\delta^\prime}{\sqrt{n}}\\
	&\hspace{1.3cm}+\Bigg\{\frac{2}{n}\sum_{h=1}^{m_n}w_{n,h}\sum_{t=h+1}^{n}\frac{\delta\delta^\prime}{n}\Bigg\}+\frac{\delta\delta^\prime}{n}\\
	&=\widehat{\Omega}_n - \frac{\delta}{n}\Bigg\{\frac{1}{n}\sum_{t=1}^{n}\Big(\widehat{Z}_{n,t}^\prime-\frac{\delta^\prime}{\sqrt{n}}\Big) + \frac{\delta^\prime}{\sqrt{n}} \\
	&\hspace{3cm}+ \frac{1}{n}\sum_{h=1}^{m_n}w_{n,h}\sum_{t=h+1}^{n}\Big(\widehat{Z}_{n,t-h}^\prime-\frac{\delta^\prime}{\sqrt{n}}\Big)+\frac{1}{n}\sum_{h=1}^{m_n}w_{n,h}(n-h)\frac{\delta^\prime}{\sqrt{n}} \\
	&\hspace{3cm}+ \frac{1}{n}\sum_{h=1}^{m_n}w_{n,h}\sum_{t=h+1}^{n}\Big(\widehat{Z}_{n,t}^\prime-\frac{\delta^\prime}{\sqrt{n}}\Big)+\frac{1}{n}\sum_{h=1}^{m_n}w_{n,h}(n-h)\frac{\delta^\prime}{\sqrt{n}}\Bigg\}\\
	&\hspace{1.4cm}- \Bigg\{\frac{1}{n}\sum_{t=1}^{n}\Big(\widehat{Z}_{n,t}-\frac{\delta}{\sqrt{n}}\Big) + \frac{\delta}{\sqrt{n}}  \\
	&\hspace{3cm}+ \frac{1}{n}\sum_{h=1}^{m_n}w_{n,h}\sum_{t=h+1}^{n}\Big(\widehat{Z}_{n,t-h}-\frac{\delta}{\sqrt{n}}\Big)+\frac{1}{n}\sum_{h=1}^{m_n}w_{n,h}(n-h)\frac{\delta}{\sqrt{n}} \\
	&\hspace{3cm}+ \frac{1}{n}\sum_{h=1}^{m_n}w_{n,h}\sum_{t=h+1}^{n}\Big(\widehat{Z}_{n,t}-\frac{\delta}{\sqrt{n}}\Big)+\frac{1}{n}\sum_{h=1}^{m_n}w_{n,h}(n-h)\frac{\delta}{\sqrt{n}}\Bigg\}\frac{\delta^\prime}{n}+o(1)\\
	&=\widehat{\Omega}_n -\frac{\delta}{\sqrt{n}}o_{\p}(1) - o_{\p}(1)\frac{\delta^\prime}{\sqrt{n}}+o(1)\\
	&=\widehat{\Omega}_n + o_{\p}(1),
\end{align*}
where we have also used Lemma~6.19 of \citet{Whi01} in the second to last step. Together with \eqref{eq:(24.1)} this implies that $\widehat{\Omega}_n-\Omega_n=o_{\p}(1)$, as desired.
\end{proof}

\begin{proof}[{\textbf{Proof of Proposition~\ref{prop:VarYt}:}}]
First, use \eqref{eq:L char} to write
\begin{equation}\label{eq:decomp Breg}
	d(\widehat Y_t, x_{1t}, x_{2t}) =	a_{t-1} + b_{t-1}^\prime \widehat Y_t,
\end{equation}
where the $\mathcal{F}_{t-1}$-measurable $a_{t-1}$ and $b_{t-1}$ do not depend on $\widehat Y_t$ and are of the form
\begin{align*}
	a_{t-1} &= \phi(x_{1t}) - \phi(x_{2t}) - \langle x_{1t}, \D\phi(x_{1t})\rangle + \langle x_{2t}, \D\phi(x_{2t})\rangle, \\
	b_{t-1} &= \D\phi(x_{1t}) - \D\phi(x_{2t}).
\end{align*}

	For the first statement of the proposition, use \eqref{eq:decomp Breg} to obtain
	\begin{align}
\Omega_n 
	&= \Var \Big( n^{-1/2} \sum_{t=1}^n h_{t-1} d\big(\widehat{Y}_t, x_{1t}, x_{2t}\big) \Big)
	= \Var \Big( n^{-1/2} \sum_{t=1}^n h_{t-1} \big( a_{t-1} + b_{t-1}^\prime \widehat Y_t \big) \Big) \notag\\
	&= \frac{1}{n} \sum_{t=1}^n \Var \Big(  h_{t-1} \big( a_{t-1} + b_{t-1}^\prime \widehat Y_t \big) \Big) \notag\\
	&\qquad + \frac{2}{n} \sum_{s < t} \Cov \Big( h_{s-1} \big( a_{s-1} + b_{s-1}^\prime \widehat Y_s \big) ,\; h_{t-1} \big( a_{t-1} + b_{t-1}^\prime \widehat Y_t \big)  \Big).\label{eq:Omega decomp}
	\end{align}
Now, exploit that $x_t^\ast = \E_{t-1} [Y_t]$ and the law of total (co)variance to get
\begin{align}
&\Var \Big(  h_{t-1} \big( a_{t-1} + b_{t-1}^\prime \widehat Y_t \big) \Big)\notag\\
&=\Var\Big(\E_{t-1}\Big[h_{t-1} \big( a_{t-1} + b_{t-1}^\prime \widehat Y_t \big)\Big]\Big) + \E\Big[\Var_{t-1}\Big(  h_{t-1} \big( a_{t-1} + b_{t-1}^\prime \widehat Y_t \big) \Big)\Big]\notag\\
&=\Var\Big(\E_{t-1}\Big[h_{t-1} \big( a_{t-1} + b_{t-1}^\prime x_t^\ast \big) + h_{t-1}b_{t-1}^\prime(\widehat{Y}_t- x_t^\ast)\Big]\Big) + \E\Big[h_{t-1}b_{t-1}^\prime\Var_{t-1}\Big( \widehat Y_t \big)b_{t-1} h_{t-1}^\prime \Big)\Big]\notag\\
&=\Var \Big(  h_{t-1} \big( a_{t-1} + b_{t-1}^\prime x_t^{\ast}\big) \Big) + \E\Big[  h_{t-1}b_{t-1}^\prime \Var_{t-1}\big(\widehat{Y}_t\big)b_{t-1}h_{t-1}^\prime \Big]\label{eq:decomp Var}
\end{align}
and, additionally exploiting $\Cov_{t-1} \big( \widehat Y_s ,\;   \widehat Y_t \big) = 0$ and $\widehat Y_s = \E_{t-1} [\widehat Y_s]$ for $s<t$,
\begin{align}
\Cov &\Big( h_{s-1} \big( a_{s-1} + b_{s-1}^\prime \widehat Y_s \big) ,\; h_{t-1} \big( a_{t-1} + b_{t-1}^\prime \widehat Y_t \big)  \Big)\notag\\
&= \Cov \Big( h_{s-1} \big( a_{s-1} + b_{s-1}^\prime \widehat Y_s \big) ,\; h_{t-1} \big( a_{t-1} + b_{t-1}^\prime \{\widehat Y_t- \E_{t-1}[\widehat Y_t]\}\big) + h_{t-1}b_{t-1}^\prime\E_{t-1}[\widehat Y_t]   \Big)\notag\\
&= \Cov \Big( h_{s-1} \big( a_{s-1} + b_{s-1}^\prime \widehat Y_s \big) ,\; h_{t-1} \big( a_{t-1} + b_{t-1}^\prime \{\widehat Y_t- \E_{t-1}[\widehat Y_t]\}\big)\Big)\notag\\
&\qquad +\Cov\Big( h_{s-1} \big( a_{s-1} + b_{s-1}^\prime \widehat Y_s \big) ,\;  h_{t-1}b_{t-1}^\prime\E_{t-1}[\widehat Y_t] \big)  \Big)\notag\\
&= \Cov \Big( h_{s-1} \big( a_{s-1} + b_{s-1}^\prime \widehat Y_s \big) ,\; h_{t-1} a_{t-1} \Big)\notag\\
&\qquad +\Cov\Big( h_{s-1} \big( a_{s-1} + b_{s-1}^\prime \widehat Y_s \big) ,\;  h_{t-1}b_{t-1}^\prime x_t^\ast  \Big)\notag\\
&= \Cov \Big( h_{s-1} d(\widehat Y_s,x_{1s}, x_{2s}) ,\; h_{t-1} d(x_t^\ast,x_{1t}, x_{2t}) \Big).\label{eq:decomp Covar}
\end{align}
Plugging \eqref{eq:decomp Var} and \eqref{eq:decomp Covar} into \eqref{eq:Omega decomp} gives
	\begin{align*}
\Omega_n
	&= \frac{1}{n} \sum_{t=1}^n \Var \Big( h_{t-1} \big( a_{t-1} + b_{t-1}^\prime x_t^\ast \big) \Big) 
	+ \frac{1}{n} \sum_{t=1}^n \E \Big[  h_{t-1}  b_{t-1}^\prime \Var_{t-1} \big( \widehat Y_t \big)b_{t-1} h_{t-1}^\prime \Big] \\
	&\qquad+ \frac{2}{n} \sum_{s < t} \Cov \Big( h_{s-1} \big( a_{s-1} + b_{s-1}^\prime  \widehat Y_s \big) ,\; h_{t-1} \big(  a_{t-1} + b_{t-1}^\prime  x_t^\ast \big) \Big).
	\end{align*}
	
	The next statement of the proposition in \eqref{eqn:ProxyVariance} is immediate.
	Hence, it remains to show the lower bound, i.e., that $\frac{1}{n} \sum_{t=1}^n \E \left[h_{t-1}    b_{t-1}^\prime \Var_{t-1} \big( \widehat Y_t \big)b_{t-1} h_{t-1}^\prime\right]$ is positive semi-definite.
	For any $v \in \mathbb{R}^q$, $v \not= 0$, it holds that
	\begin{align*}
	v^\prime \Bigg( \frac{1}{n} \sum_{t=1}^n \E \big[  h_{t-1} h_{t-1}^\prime b_{t-1}^\prime \Var_{t-1} \big( \widehat Y_t \big) b_{t-1} \big] \Bigg) v 
	&= \frac{1}{n} \sum_{t=1}^n \E \Big[ \big(v^\prime h_{t-1}\big) \Var_{t-1} \big(b_{t-1}^\prime \widehat Y_t \big) \big(h_{t-1}^\prime v)\big)  \Big]  \\
	& =\frac{1}{n} \sum_{t=1}^n \E \Big[ \Var_{t-1} \Big(  \big(v^\prime h_{t-1}\big)  b_{t-1} \widehat Y_t \Big) \Big] \ge 0,
	\end{align*}
	which concludes this proof.
\end{proof}

\begin{proof}[{\textbf{Proof of Proposition~\ref{prop:QuantileRobustness}:}}]
	In this proof, we show the stronger statement that if for any $\omega \in \Omega$,
	\begin{align*}
	\E_{t-1} \left[ d_{g,\alpha}({Y}_t, x_{1t}, x_{2t}) -  d_{g,\alpha}(\widehat{Y}_t, x_{1t}, x_{2t}) \right] (\omega) = 0 \; \text{ for all  $x_{1t}(\omega), x_{2t}(\omega) \in \mathsf{A}$},
	\end{align*}
	then, $F_t(\omega, \cdot) = \widehat F_t(\omega, \cdot)$.
	
	We prove this statement by contradiction.
	For this, let $\omega \in \Omega$ and assume that there exists $z = z(\omega) \in \mathbb{R}$ such that $F_t(\omega, z) < \widehat F_t(\omega, z)$.
	(The proof for the case that $F_t(\omega, z) > \widehat F_t(\omega, z)$ proceeds analogously by interchanging $F_t$ and $\widehat F_t$.)
	Then, we set $x_{1t} \equiv x_{1t}(\omega) \in \mathsf{A}$ as the largest point smaller than $z$ such that $F_t \big(\omega, x_{1t} \big) = \widehat F_t \big( \omega, x_{1t} \big)$, and $x_{2t} \equiv x_{2t}(\omega) \in \mathsf{A}$ as the smallest point larger than $z$ such that $F_t \big(\omega, x_{2t} \big) = \widehat F_t \big( \omega, x_{2t} \big)$. 
	Formally,
	\begin{align*}
	x_{1t}(\omega) &:= \sup \left\{ x \in \mathsf{A}, x \le z : \; F_t \big(\omega, x \big) = \widehat F_t \big( \omega, x \big) \right\},
	\qquad \text{ and } \\
	x_{2t}(\omega) &:= \inf \left\{ x \in \mathsf{A}, x \ge z : \; F_t \big(\omega, x \big) = \widehat F_t \big( \omega, x \big) \right\},
	\end{align*}
	which are well-defined as the conditional distributions $F_t \big(\omega, \cdot \big)$ and  $\widehat F_t \big( \omega, \cdot \big)$ are continuous by assumption and have bounded support.
	Hence, \eqref{eqn:QuantLossDiffDiff} simplifies to
	\begin{multline*}
	\E_{t-1} \left[ d_{g,\alpha}({Y}_t, x_{1t}, x_{2t}) -  d_{g,\alpha}(\widehat{Y}_t, x_{1t}, x_{2t}) \right] \\
	=  \E_{t-1} \left[ g(Y_t)  \mathds{1}(x_{1t} < Y_t \le x_{2t})  - g(\widehat Y_t)  \mathds{1}(x_{1t} < \widehat Y_t \le x_{2t}) \right].
	\end{multline*}
	
	We now consider the random variables $Y_{R,t}$ and $\widehat Y_{R,t}$, which are versions of the random variables $Y_t$ and $\widehat Y_t$, restricted to the interval $[x_{1t} , x_{2t}]$, with conditional distribution functions
	\begin{align*}
	F_{R,t}(\omega,y) = \frac{F_t(\omega,y) - F_t(\omega,x_{1t})}{F_t(\omega,x_{2t}) - F_t(\omega,x_{1t})} 
	\quad \text{ and } \quad
	\widehat F_{R,t}(\omega,y) = \frac{\widehat F_t(\omega,y) - \widehat F_t(\omega,x_{1t})}{\widehat F_t(\omega,x_{2t}) - \widehat F_t(\omega,x_{1t})}
	\end{align*}
	for all $y \in [x_{1t}, x_{2t}]$.
	(Notice that $ F_t(\omega, x_{2t}) -  F_t(\omega, x_{1t}) > 0$ and $\widehat F_t(\omega, x_{2t}) - \widehat F_t(\omega, x_{1t}) > 0$ follows from the conditions above.)
	Then, the conditional distribution of $\widehat Y_{R,t}$ is smaller than the one of $Y_{R,t}$ in the usual stochastic order \citep{shaked2007stochastic}. Invoking \citet[Theorem 1.A.8]{shaked2007stochastic} implies that  if
	\begin{align*}
	\E_{t-1} \left[ g(Y_t) \mathds{1}(x_{1t} < Y_t \le x_{2t})   \right]  
	= \E_{t-1} \left[   g(Y_{R,t}) \right]  
	\stackrel{!}{=} \E_{t-1} \left[   g(\widehat Y_{R,t}) \right]  
	= \E_{t-1} \left[ g(\widehat Y_t) \mathds{1}(x_{1t} < \widehat Y_t \le x_{2t})  \right]
	\end{align*}
	holds for \textit{some} strictly increasing function $g(\cdot)$, this already implies equality of the conditional distributions of $Y_{R,t}$ and $\widehat Y_{R,t}$.
	Thus, restricted to the interval $[x_{1t} , x_{2t}]$,  the conditional distributions of $Y_t$ and $\widehat Y_t$ coincide, which contradicts the presumed existence of $z = z(\omega) \in \mathbb{R}$ such that $F_t(\omega, z) < \widehat F_t(\omega, z)$. This completes the proof.
\end{proof}

\section{Technical Derivations for Section~\ref{Simulations for robust loss functions}}
\label{Technical Derivations}

\renewcommand{\theequation}{C.\arabic{equation}}	
\setcounter{equation}{0}

This appendix collects the proof of equation~\eqref{eq:Sim1} in the main paper. It also derives the formula for the variance-covariance matrix $\Omega$ from Assumption~D3 in Proposition~\ref{prop:Omega}.

\begin{proof}[{\textbf{Proof of Equation~\eqref{eq:Sim1}:}}]
Straightforward calculations exploiting \eqref{eq:1}--\eqref{eq:4} yield that
\begin{align}
	d(Y_t, x_{1t}, x_{2t})&=(Y_t - x_{1t})^2 - (Y_t - x_{2t})^2 
	=(\varepsilon_t-\varepsilon_{1,t-1})^2-(\mu(1-\phi)+\varepsilon_t)^2\notag\\
	&=-2\varepsilon_t\varepsilon_{1,t-1}+\varepsilon_{1,t-1}^2-[\mu^2(1-\phi)^2 + 2\mu(1-\phi)\varepsilon_t].\label{eq:(11m)}
\end{align}
Using the mutual independence of $\{\varepsilon_t\}$ and $\{\varepsilon_{1t}\}$, we get
\begin{equation*}
	\E[d(Y_t, x_{1t}, x_{2t})] = \sigma_1^2 - \mu^2(1-\phi)^2 = \xi/\sqrt{n}.
\end{equation*}
Moreover, $\widehat{Y}_{t-1}$ and $d(Y_t, x_{1t}, x_{2t})$ are independent by \eqref{eq:(11m)}, implying
\begin{align*}
	\E[\widehat{Y}_{t-1}d(Y_t, x_{1t}, x_{2t})] &= \E[\widehat{Y}_{t-1}]\E[d(Y_t, x_{1t}, x_{2t})]\\
	&= \big\{\E[Y_{t-1}]+\E[\widehat{\varepsilon}_{t-1}]\big\}\E[d(Y_t, x_{1t}, x_{2t})]\\
	&= \mu\xi/\sqrt{n}.
\end{align*}
This concludes the proof.
\end{proof}

\begin{prop}\label{prop:Omega}
Under the conditions of Section~\ref{Simulations for robust loss functions} in the main paper, it holds that
\begin{equation}\label{eq:Omega}
	\Omega = c_{\mu,\phi,\sigma_{\varepsilon}^2,\sigma_{\widehat{\varepsilon}}^2}\begin{pmatrix}1 & \mu\\
												\mu & \frac{\sigma_{\varepsilon}^2}{1-\phi^2}+\mu^2+\sigma_{\widehat{\varepsilon}}^2\end{pmatrix},
\end{equation}
where $c_{\mu,\phi,\sigma_{\varepsilon}^2,\sigma_{\widehat{\varepsilon}}^2} =2\mu^4(1-\phi)^4 + 8\mu^2(1-\phi)^2 (\sigma_{\varepsilon}^2 + \sigma_{\widehat{\varepsilon}}^2)$.
\end{prop}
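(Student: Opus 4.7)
The plan is to compute $\Omega$ directly from the distributional structure of the design \eqref{eq:1}--\eqref{eq:4}. Under the local alternative, $\sigma_1^2 = \mu^2(1-\phi)^2 + \xi/\sqrt n$ converges to $c^2 := \mu^2(1-\phi)^2$, so, by continuity of the joint moments of $\widehat Z_{n,t}$ in $\sigma_1^2$, it suffices to evaluate $\Omega_n$ in the limiting stationary law with $\sigma_1^2 = c^2$. In that regime $\E[\widehat Z_{n,t}] = 0$ by the argument already used for \eqref{eq:Sim1}, so I only need second-moment computations.

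The key structural step is to show that $\{\widehat Z_{n,t}\}$ is serially uncorrelated, which reduces $\Omega$ to $\E[\widehat Z_{n,t}\widehat Z_{n,t}']$. Substituting \eqref{eq:1}--\eqref{eq:4} into $d_t$ yields
\begin{equation*}
d(\widehat Y_t, x_{1t}, x_{2t}) = (\varepsilon_{1,t-1}^2 - c^2) - 2(\varepsilon_t + \widehat\varepsilon_t)(\varepsilon_{1,t-1} + c),
\end{equation*}
so $d_t$ depends only on $(\varepsilon_t, \widehat\varepsilon_t, \varepsilon_{1,t-1})$. For any $s < t$, these three innovations are jointly independent of the innovations generating $(d_s, \widehat Y_{s-1}, \widehat Y_{t-1})$, so every cross-product in $\E[\widehat Z_{n,s}\widehat Z_{n,t}']$ factorises with $\E[d_t] = 0$. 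The same disjointness argument shows that $d_t$ and $\widehat Y_{t-1}$ are independent in stationarity, so the single-index covariance matrix factorises as
\begin{equation*}
\Omega = \E[d_t^2]\begin{pmatrix} 1 & \E[\widehat Y_{t-1}] \\ \E[\widehat Y_{t-1}] & \E[\widehat Y_{t-1}^2]\end{pmatrix},
\end{equation*}
and the stationary moments $\E[\widehat Y_{t-1}] = \mu$ and $\E[\widehat Y_{t-1}^2] = \mu^2 + \sigma_\varepsilon^2/(1-\phi^2) + \sigma_{\widehat\varepsilon}^2$ follow from the AR(1) representation of $Y_t$ together with the independent additive noise $\widehat\varepsilon_{t-1}$.

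It remains to compute $\E[d_t^2]$, which is the only calculation of substance. Squaring the expansion above and taking expectations, the linear cross term vanishes by $\E[\varepsilon_t + \widehat\varepsilon_t] = 0$ and the independence of $\varepsilon_{1,t-1}$ from $(\varepsilon_t, \widehat\varepsilon_t)$, leaving
\begin{equation*}
\E[d_t^2] = \E\bigl[(\varepsilon_{1,t-1}^2 - c^2)^2\bigr] + 4(\sigma_\varepsilon^2 + \sigma_{\widehat\varepsilon}^2)\,\E\bigl[(\varepsilon_{1,t-1}+c)^2\bigr].
\end{equation*}
Normality of $\varepsilon_{1,t-1} \sim N(0,c^2)$ gives $\E[(\varepsilon_{1,t-1}^2 - c^2)^2] = \Var(\varepsilon_{1,t-1}^2) = 2c^4$ and $\E[(\varepsilon_{1,t-1}+c)^2] = 2c^2$, whence $\E[d_t^2] = 2c^4 + 8c^2(\sigma_\varepsilon^2 + \sigma_{\widehat\varepsilon}^2) = c_{\mu,\phi,\sigma_\varepsilon^2,\sigma_{\widehat\varepsilon}^2}$, which reproduces \eqref{eq:Omega}. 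I do not anticipate any real obstacle beyond careful bookkeeping of which innovations enter which quantity; serial uncorrelatedness is the only step that requires care, and it follows immediately from the mutual independence of the three innovation sequences across time.
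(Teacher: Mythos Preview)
Your approach is correct and considerably more streamlined than the paper's, but the continuity shortcut at the top needs one extra line to be airtight.

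The paper works directly at the $n$-dependent $\sigma_{1,n}^2=c^2+\xi/\sqrt n$: it splits $\Omega_n$ into a diagonal piece $(I)$ and autocovariance pieces $(II)+(III)$, obtains $(I)$ by a lengthy fourth-moment expansion of $\E[d_t^2]$ for normal sums, and then shows $(II)+(III)=o(1)$ by explicitly evaluating every $\Cov(\widehat Z_t,\widehat Z_{t-h})$ and checking it is of order $\phi^h\cdot\xi/\sqrt n$. Your route is cleaner on both fronts. The algebraic identity $d_t=(\varepsilon_{1,t-1}^2-c^2)-2(\varepsilon_t+\widehat\varepsilon_t)(\varepsilon_{1,t-1}+c)$ turns $\E[d_t^2]$ into a two-line computation, and the structural observation that $d_t$ is measurable with respect to $(\varepsilon_t,\widehat\varepsilon_t,\varepsilon_{1,t-1})$ and hence independent of $\sigma(d_s,\widehat Y_{s-1},\widehat Y_{t-1})$ for $s<t$ kills the autocovariances in one stroke once $\E[d_t]=0$.

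The only soft spot is the sentence ``by continuity of the joint moments \ldots\ it suffices to evaluate $\Omega_n$ in the limiting stationary law with $\sigma_1^2=c^2$.'' Pointwise continuity of each $\Cov(\widehat Z_s,\widehat Z_t)$ in $\sigma_1^2$ does not by itself let you interchange the limit with the Ces\`aro sum $\tfrac1n\sum_h(n-h)\Gamma(h)$, because the number of summands grows with $n$. At $\sigma_{1,n}^2\neq c^2$ the autocovariances are \emph{not} zero (the paper in fact computes them), so you must bound their contribution. Fortunately your own factorisation already delivers the prefactor $\E[d_t]=\xi/\sqrt n$; to close the gap, note that the remaining factor $\Cov(h_{s-1}d_s,\widehat Y_{t-1})$ is $O(|\phi|^{t-s})$ by the AR(1) representation $\E[Y_{t-1}\mid\mathcal G_s]=\mu(1-\phi^{h-1})+\phi^{h-1}Y_s$, whence $\sum_h\tfrac{n-h}{n}|\Gamma(h)|=O(1/\sqrt n)=o(1)$. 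With that one-line patch the argument is complete and strictly shorter than the paper's brute-force calculation.
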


\begin{proof}
Decompose
\begin{align*}
\Omega_n &= \Var\Big(\frac{1}{\sqrt{n}}\sum_{t=1}^{n}\widehat{Z}_{n,t}\Big)\\
	&= \frac{1}{n}\sum_{t=1}^{n}\E\Big[\Big(\widehat{Z}_{n,t}-\frac{\delta}{\sqrt{n}}\Big)\Big(\widehat{Z}_{n,t}-\frac{\delta}{\sqrt{n}}\Big)^\prime\Big]\\
	& \hspace{1.3cm}+\frac{1}{n}\sum_{h=1}^{n}\sum_{t=h+1}^{n}\E\Big[\Big(\widehat{Z}_{n,t}-\frac{\delta}{\sqrt{n}}\Big)\Big(\widehat{Z}_{n,t-h}-\frac{\delta}{\sqrt{n}}\Big)^\prime\Big]\\
	& \hspace{1.3cm}+\frac{1}{n}\sum_{h=1}^{n}\sum_{t=h+1}^{n}\E\Big[\Big(\widehat{Z}_{n,t-h}-\frac{\delta}{\sqrt{n}}\Big)\Big(\widehat{Z}_{n,t}-\frac{\delta}{\sqrt{n}}\Big)^\prime\Big]\\
	&=(I) + (II) + (III).
\end{align*}
Consider the terms $(I)$, $(II)$, and $(III)$ separately. For $(I)$, we obtain under $H_{a,\loc}$ that
\begin{align*}
	(I) &= \frac{1}{n}\sum_{t=1}^{n}\Big\{\E\big[\widehat{Z}_{n,t}\widehat{Z}_{n,t}^\prime\big] - \E\big[\widehat{Z}_{n,t}\big]\frac{\delta^\prime}{\sqrt{n}} - \frac{\delta}{\sqrt{n}}\E\big[\widehat{Z}_{n,t}^\prime\big]+\frac{\delta\delta^\prime}{n}\Big\}\\
	&= \Big\{\frac{1}{n}\sum_{t=1}^{n}\E\big[\widehat{Z}_{n,t}\widehat{Z}_{n,t}^\prime\big]\Big\} - \frac{\delta\delta^\prime}{n}\\
	&= \Big\{\frac{1}{n}\sum_{t=1}^{n}\E\big[\widehat{Z}_{n,t}\widehat{Z}_{n,t}^\prime\big]\Big\} +o(1).
\end{align*}
Recall that
\[
	\widehat{Z}_{n,t}=h_{t-1}d(\widehat{Y}_t, x_{1t}, x_{2t}) = \begin{pmatrix} d(\widehat{Y}_t, x_{1t}, x_{2t}) \\ \widehat{Y}_{t-1}d(\widehat{Y}_t, x_{1t}, x_{2t})\end{pmatrix}.
\]
Then, 
\[
	\E[\widehat{Z}_{n,t}\widehat{Z}_{n,t}^\prime]=\E\left[\begin{pmatrix} d^2(\widehat{Y}_t, x_{1t}, x_{2t}) & \widehat{Y}_{t-1}d^2(\widehat{Y}_t, x_{1t}, x_{2t}) \\ 
	\widehat{Y}_{t-1}d^2(\widehat{Y}_t, x_{1t}, x_{2t}) & \widehat{Y}_{t-1}^2d^2(\widehat{Y}_t, x_{1t}, x_{2t})\end{pmatrix}\right].
\]
Consider each matrix element separately. Straightforward calculations yield that
\begin{equation}\label{eq:(p.3)}
	d(\widehat{Y}_t, x_{1t}, x_{2t}) = (\varepsilon_t+\widehat{\varepsilon}_t-\varepsilon_{1,t-1})^2 - (\mu(1-\phi) + \varepsilon_t + \widehat{\varepsilon}_t)^2.
\end{equation}
Hence, exploiting the independence of the disturbances,
\begin{align}
	\E[d^2(\widehat{Y}_t, x_{1t}, x_{2t})] &= \E[(\varepsilon_t+\widehat{\varepsilon}_t-\varepsilon_{1,t-1})^4] \notag\\
	&\hspace{2cm} - 2\E[(\varepsilon_t+\widehat{\varepsilon}_t-\varepsilon_{1,t-1})^2(\mu(1-\phi) + \varepsilon_t + \widehat{\varepsilon}_t)^2] \notag\\
	&\hspace{2cm} + \E[(\mu(1-\phi) + \varepsilon_t + \widehat{\varepsilon}_t)^4].\label{eq:d^2 prel}
\end{align}	
Since $(\varepsilon_t+\widehat{\varepsilon}_t-\varepsilon_{1,t-1})\sim N(0,\sigma_{\varepsilon}^2+\sigma_{\widehat{\varepsilon}}^2+\sigma_{1}^2)$ and $(\mu(1-\phi) + \varepsilon_t + \widehat{\varepsilon}_t)\sim N\big(\mu(1-\phi), \sigma_{\varepsilon}^2+\sigma_{\widehat{\varepsilon}}^2\big)$, we have
\begin{align}
	\E[(\varepsilon_t+\widehat{\varepsilon}_t-\varepsilon_{1,t-1})^4] &= 3(\sigma_{\varepsilon}^2+\sigma_{\widehat{\varepsilon}}^2+\sigma_{1}^2)^2,\label{eq:e1}\\
	\E[(\mu(1-\phi) + \varepsilon_t + \widehat{\varepsilon}_t)^4] &= \mu^4(1-\phi)^4 + 6\mu^2(1-\phi)^2(\sigma_{\varepsilon}^2+\sigma_{\widehat{\varepsilon}}^2) + 3(\sigma_{\varepsilon}^2+\sigma_{\widehat{\varepsilon}}^2)^2.\label{eq:e2}
\end{align}
Furthermore,
\begin{align}
& \E[(\varepsilon_t+\widehat{\varepsilon}_t-\varepsilon_{1,t-1})^2(\mu(1-\phi) + \varepsilon_t + \widehat{\varepsilon}_t)^2] \notag\\
&= \E\left[\left\{(\varepsilon_t+\widehat{\varepsilon}_t)^2 - 2\varepsilon_{1,t-1}(\varepsilon_t+\widehat{\varepsilon}_t) + \varepsilon_{1,t-1}^2\right\}\left\{\mu^2(1-\phi)^2+2\mu(1-\phi)(\varepsilon_t+\widehat{\varepsilon}_t)+(\varepsilon_t+\widehat{\varepsilon}_t)^2\right\}\right]\notag\\
&= \E\left[\left\{(\varepsilon_t+\widehat{\varepsilon}_t)^2 + \varepsilon_{1,t-1}^2\right\}\left\{\mu^2(1-\phi)^2+2\mu(1-\phi)(\varepsilon_t+\widehat{\varepsilon}_t)+(\varepsilon_t+\widehat{\varepsilon}_t)^2\right\}\right]\notag\\
&= \mu^2(1-\phi)^2\left\{\E[(\varepsilon_t+\widehat{\varepsilon}_t)^2]+\E[\varepsilon_{1,t-1}^2]\right\}\notag\\
&\hspace{2cm} +2\mu(1-\phi)\left\{\E[(\varepsilon_t+\widehat{\varepsilon}_t)^3]+\E[\varepsilon_{1,t-1}^2(\varepsilon_t+\widehat{\varepsilon}_t)]\right\}\notag\\
&\hspace{2cm} + \E[(\varepsilon_t+\widehat{\varepsilon}_t)^4]+ \E[(\varepsilon_t+\widehat{\varepsilon}_t)^2]\E[\varepsilon_{1,t-1}^2]\notag\\
&= \mu^2(1-\phi)^2(\sigma_{\varepsilon}^2 + \sigma_{\widehat{\varepsilon}}^2+\sigma_{1}^2) + 3(\sigma_{\varepsilon}^2 + \sigma_{\widehat{\varepsilon}}^2)^2+ (\sigma_{\varepsilon}^2 + \sigma_{\widehat{\varepsilon}}^2)\sigma_1^2.\label{eq:e3}
\end{align}
Inserting \eqref{eq:e1}--\eqref{eq:e3} into \eqref{eq:d^2 prel} gives
\begin{align}
	\E[d^2(\widehat{Y}_t, x_{1t}, x_{2t})] &= 3(\sigma_{\varepsilon}^2+\sigma_{\widehat{\varepsilon}}^2+\sigma_{1}^2)^2 - 2\mu^2(1-\phi)^2(\sigma_{\varepsilon}^2 + \sigma_{\widehat{\varepsilon}}^2+\sigma_{1}^2)-6(\sigma_{\varepsilon}^2 + \sigma_{\widehat{\varepsilon}}^2)^2\notag\\
	& \hspace{1cm}-2 (\sigma_{\varepsilon}^2 + \sigma_{\widehat{\varepsilon}}^2)\sigma_1^2 +\left[\mu^4(1-\phi)^4 + 6\mu^2(1-\phi)^2(\sigma_{\varepsilon}^2+\sigma_{\widehat{\varepsilon}}^2) + 3(\sigma_{\varepsilon}^2+\sigma_{\widehat{\varepsilon}}^2)^2\right]\notag\\
	&=:c_{\mu,\phi,\sigma_{\varepsilon}^2,\sigma_{\widehat{\varepsilon}}^2,\sigma_1^2}.\label{eq:d^2}
\end{align}

For the remaining terms in $\E[\widehat{Z}_{n,t}\widehat{Z}_{n,t}^\prime]$, note that $\widehat{Y}_{t-1}$ is independent of $d(\widehat{Y}_t, x_{1t}, x_{2t})$, which can easily be seen from \eqref{eq:(p.3)}. Therefore,
\begin{align}
	\E[\widehat{Y}_{t-1}d^2(\widehat{Y}_t, x_{1t}, x_{2t})] &= \E[\widehat{Y}_{t-1}]\E[d^2(\widehat{Y}_t, x_{1t}, x_{2t})] \notag\\
	&=  \mu \cdot c_{\mu,\phi,\sigma_{\varepsilon}^2,\sigma_{\widehat{\varepsilon}}^2,\sigma_1^2}\label{eq:Yd^2}
\end{align}
and
\[
	\E[\widehat{Y}_{t-1}^2d^2(\widehat{Y}_t, x_{1t}, x_{2t})] = \E[\widehat{Y}_{t-1}^2]\E[d^2(\widehat{Y}_t, x_{1t}, x_{2t})].
\]
We have that
\begin{align*}
	\E[\widehat{Y}_{t-1}^2] &= \E[Y_{t-1}+\widehat{\varepsilon}_{t-1}]^2\\
	&= \E[Y_{t-1}^2]+2\E[Y_{t-1}]\E[\widehat{\varepsilon}_{t-1}] + \E[\widehat{\varepsilon}_{t-1}^2]\\
	&= \E[Y_{t-1}^2] + 2\cdot\mu\cdot 0 + \sigma_{\widehat{\varepsilon}}^2\\
	&= \E[Y_{t-1}^2] + \sigma_{\widehat{\varepsilon}}^2.
\end{align*}
For our Gaussian AR(1), it is well-known that $Y_t\sim N(\mu, \sigma_{\varepsilon}^2/(1-\phi^2))$ and, therefore,
\begin{equation}\label{eq:(p.30)}
	\E[Y_{t-1}^2]=\Var(Y_{t-1})+\{\E[Y_{t-1}]\}^2 = \frac{\sigma_{\varepsilon}^2}{1-\phi^2}+\mu^2. 
\end{equation}
Hence, with \eqref{eq:d^2},
\begin{equation}\label{eq:Y^2d^2}
\E[\widehat{Y}_{t-1}^2d^2(\widehat{Y}_t, x_{1t}, x_{2t})] = \left(\frac{\sigma_{\varepsilon}^2}{1-\phi^2}+\mu^2+ \sigma_{\widehat{\varepsilon}}^2\right)c_{\mu,\phi,\sigma_{\varepsilon}^2,\sigma_{\widehat{\varepsilon}}^2,\sigma_1^2}.
\end{equation}
Combining \eqref{eq:d^2}--\eqref{eq:Y^2d^2}, we obtain that
\begin{equation*}
	\frac{1}{n}\sum_{t=1}^{n}\E[\widehat{Z}_{n,t}\widehat{Z}_{n,t}^\prime] = c_{\mu,\phi,\sigma_{\varepsilon}^2,\sigma_{\widehat{\varepsilon}}^2,\sigma_1^2}\begin{pmatrix}1 & \mu\\
												\mu & \frac{\sigma_{\varepsilon}^2}{1-\phi^2}+\mu^2+\sigma_{\widehat{\varepsilon}}^2\end{pmatrix}.
\end{equation*}
Since $\sigma_1^2=\mu^2(1-\phi)^2+\xi/\sqrt{n}\rightarrow\mu^2(1-\phi)^2$, as $n\to\infty$, we have $c_{\mu,\phi,\sigma_{\varepsilon}^2,\sigma_{\widehat{\varepsilon}}^2,\sigma_1^2}\rightarrow c_{\mu,\phi,\sigma_{\varepsilon}^2,\sigma_{\widehat{\varepsilon}}^2}$. Thus, $(I)\rightarrow\Omega$, as $n\to\infty$.

It remains to show that $(II) + (III)=o(1)$. As $(III)=(II)^\prime$, we only have to show that $(II)=o(1)$. Write $\delta=(\delta_1,\delta_2)^\prime$ and consider the four elements of $\E\Big[\Big(\widehat{Z}_{n,t}-\frac{\delta}{\sqrt{n}}\Big)\Big(\widehat{Z}_{n,t-h}-\frac{\delta}{\sqrt{n}}\Big)^\prime\Big]$ separately. For the upper-left element, we have that
\begin{align*}
\E&\Big[\Big(d(\widehat{Y}_{t},x_{1t}, x_{2t})-\frac{\delta_1}{\sqrt{n}}\Big)\Big(d(\widehat{Y}_{t-h},x_{1t-h}, x_{2t-h})-\frac{\delta_1}{\sqrt{n}}\Big)\Big] \\
&=\E\Big[\Big(d(\widehat{Y}_{t},x_{1t}, x_{2t})-\frac{\delta_1}{\sqrt{n}}\Big)\Big]\E\Big[\Big(d(\widehat{Y}_{t-h},x_{1t-h}, x_{2t-h})-\frac{\delta_1}{\sqrt{n}}\Big)\Big]\\
&=0 \cdot0=0 
\end{align*}
by \eqref{eq:(p.3)} and independence of the disturbances. By identical arguments, we also obtain for the upper-right element that
\begin{align*}
	\E&\Big[\Big(d(\widehat{Y}_{t},x_{1t}, x_{2t})-\frac{\delta_1}{\sqrt{n}}\Big)\Big(\widehat{Y}_{t-h-1}d(\widehat{Y}_{t-h},x_{1t-h}, x_{2t-h})-\frac{\delta_2}{\sqrt{n}}\Big)\Big] \\
	&=\E\Big[\Big(d(\widehat{Y}_{t},x_{1t}, x_{2t})-\frac{\delta_1}{\sqrt{n}}\Big)\Big]\E\Big[\Big(\widehat{Y}_{t-h-1}d(\widehat{Y}_{t-h},x_{1t-h}, x_{2t-h})-\frac{\delta_2}{\sqrt{n}}\Big)\Big] \\
	&= 0\cdot 0 =0.
\end{align*}

For the lower-left element, we get that
\begin{align}
\E&\Big[\Big(\widehat{Y}_{t-1}d(\widehat{Y}_{t},x_{1t}, x_{2t})-\frac{\delta_2}{\sqrt{n}}\Big)\Big(d(\widehat{Y}_{t-h},x_{1t-h}, x_{2t-h})-\frac{\delta_1}{\sqrt{n}}\Big)\Big]\notag\\
&= \E\Big[\widehat{Y}_{t-1}d(\widehat{Y}_{t},x_{1t}, x_{2t})d(\widehat{Y}_{t-h},x_{1t-h}, x_{2t-h})\Big] - \frac{\delta_1}{\sqrt{n}}\E\Big[\widehat{Y}_{t-1}d(\widehat{Y}_{t},x_{1t}, x_{2t})\Big]\notag\\
&\hspace{7.5cm} - \frac{\delta_2}{\sqrt{n}}\E\Big[d(\widehat{Y}_{t-h},x_{1t-h}, x_{2t-h})\Big] + \frac{\delta_1\delta_2}{n}\notag\\
&=\E\Big[d(\widehat{Y}_{t},x_{1t}, x_{2t})\Big]\E\Big[\widehat{Y}_{t-1}d(\widehat{Y}_{t-h},x_{1t-h}, x_{2t-h})\Big] - \frac{\delta_1\delta_2}{n} - \frac{\delta_2\delta_1}{n} + \frac{\delta_1\delta_2}{n}\notag\\
&= \frac{\delta_1}{\sqrt{n}}\E\Big[\widehat{Y}_{t-1}d(\widehat{Y}_{t-h},x_{1t-h}, x_{2t-h})\Big]-\frac{\delta_1\delta_2}{n}.\label{eq:Yd}
\end{align}
Exploit \eqref{eq:1} and \eqref{eq:2} to deduce that
\begin{align}
	\widehat{Y}_{t-1}&=Y_{t-1}+\widehat{\varepsilon}_{t-1}\notag\\
	&= \mu(1-\phi)\Big(\sum_{i=1}^{h}\phi^{i-1}\Big) + \phi^{h}Y_{t-h-1} + \phi^{h}\varepsilon_{t-h} + \sum_{i=1}^{h-1}\phi^{i}\varepsilon_{t-i} +\widehat{\varepsilon}_{t-1}.\label{eq:Yhat decomp}
\end{align}
Hence, we obtain for $h>1$ that
\begin{align}
\E&\Big[\widehat{Y}_{t-1}d(\widehat{Y}_{t-h},x_{1t-h}, x_{2t-h})\Big]\notag\\
&=\E\Big[\Big\{\mu(1-\phi^h) + \phi^{h}Y_{t-h-1}+ \phi^{h}\varepsilon_{t-h} + \sum_{i=1}^{h-1}\phi^{i}\varepsilon_{t-i} +\widehat{\varepsilon}_{t-1}\Big\}d(\widehat{Y}_{t-h},x_{1t-h}, x_{2t-h})\Big]\label{eq:h=1}\\
&=\mu(1-\phi^h)\frac{\delta_1}{\sqrt{n}}+\phi^{h}\frac{\delta_2}{\sqrt{n}}+\phi^{h}\E\Big[\varepsilon_{t-h}d(\widehat{Y}_{t-h},x_{1t-h}, x_{2t-h})\Big]+0\notag\\
&=\frac{\delta_2}{\sqrt{n}} + \phi^h\E\Big[\varepsilon_{t-h}d(\widehat{Y}_{t-h},x_{1t-h}, x_{2t-h})\Big],\notag
\end{align}
where we have used \eqref{eq:(p.3)} in the third step, and $\mu\delta_1=\delta_2$ in the fourth. Exploiting the mutual independence of the disturbances and the fact that $\E[\varepsilon_{t-h}^3]=0$, we get that
\begin{align}
	\E&\Big[\varepsilon_{t-h}d(\widehat{Y}_{t-h},x_{1t-h}, x_{2t-h})\Big]\notag\\
	&\overset{\eqref{eq:(p.3)}}{=}\E\Big[\varepsilon_{t-h}(\varepsilon_{t-h}+\widehat{\varepsilon}_{t-h}-\varepsilon_{1,t-h-1})^2\Big] - \E\Big[\varepsilon_{t-h}(\mu(1-\phi) + \varepsilon_{t-h} + \widehat{\varepsilon}_{t-h})^2\Big]\notag\\
	&\hspace{0.22cm}=0-2\E\Big[\varepsilon_{t-h}^2\big(\mu(1-\phi) + \widehat{\varepsilon}_{t-h}\big)\Big]\notag\\
	&\hspace{0.22cm}=-2\mu(1-\phi)\sigma_{\varepsilon}^2.\label{eq:p.3.4}
\end{align}
Thus, 
\[
	\E\Big[\widehat{Y}_{t-1}d(\widehat{Y}_{t-h},x_{1t-h}, x_{2t-h})\Big]=\frac{\delta_2}{\sqrt{n}}-2\phi^{h}\mu(1-\phi)\sigma_{\varepsilon}^2
\]
for $h>1$. For $h=1$, $\widehat{\varepsilon}_{t-1}$ and $d(\widehat{Y}_{t-h},x_{1t-h}, x_{2t-h})$ in \eqref{eq:h=1} are not independent, giving rise to the additional term
\begin{align}
	\E&\Big[\widehat{\varepsilon}_{t-1}d(\widehat{Y}_{t-1},x_{1t-1}, x_{2t-1})\Big]\notag\\
	&\overset{\eqref{eq:(p.3)}}{=}\E\Big[\widehat{\varepsilon}_{t-1}\Big\{(\varepsilon_{t-1}+\widehat{\varepsilon}_{t-1}-\varepsilon_{1,t-2})^2 - (\mu(1-\phi) + \varepsilon_{t-1} + \widehat{\varepsilon}_{t-1})^2\Big\}\Big]\notag\\
	&\hspace{0.23cm}=\E\Big[\widehat{\varepsilon}_{t-1}\Big\{\widehat{\varepsilon}_{t-1}^2+2\widehat{\varepsilon}_{t-1}(\varepsilon_{t-1}-\varepsilon_{1,t-2}) + (\varepsilon_{t-1}-\varepsilon_{1,t-2})^2\Big\}\Big] \notag\\
	&\hspace{2cm}-\E\Big[\widehat{\varepsilon}_{t-1}\Big\{\widehat{\varepsilon}_{t-1}^2+2\widehat{\varepsilon}_{t-1}\big(\varepsilon_{t-1}+\mu(1-\phi)\big) + \big(\varepsilon_{t-1}+\mu(1-\phi)\big)^2\Big\}\Big]\notag\\
	&\hspace{0.23cm}=-2\E\Big[\widehat{\varepsilon}_{t-1}^2\big(\varepsilon_{t-1}+\mu(1-\phi)\big)\Big]\notag\\
	&\hspace{0.23cm}=-2\mu(1-\phi)\sigma_{\widehat{\varepsilon}}^2.\label{eq:p.4+}
\end{align}
So for general $h\geq1$, we obtain that
\[
	\E\Big[\widehat{Y}_{t-1}d(\widehat{Y}_{t-h},x_{1t-h}, x_{2t-h})\Big]=\frac{\delta_2}{\sqrt{n}}-2\phi^{h}\mu(1-\phi)\big(\sigma_{\varepsilon}^2+\sigma_{\widehat{\varepsilon}}^2I_{\{h=1\}}\big).
\]
Combining this with \eqref{eq:Yd}, it follows that
\begin{multline*}
	\E\Big[\Big(\widehat{Y}_{t-1}d(\widehat{Y}_{t},x_{1t}, x_{2t})-\frac{\delta_2}{\sqrt{n}}\Big)\Big(d(\widehat{Y}_{t-h},x_{1t-h}, x_{2t-h})-\frac{\delta_1}{\sqrt{n}}\Big)\Big]\\
	=-2\mu(1-\phi)\big(\sigma_{\varepsilon}^2+\sigma_{\widehat{\varepsilon}}^2I_{\{h=1\}}\big)\phi^{h}\frac{\delta_1}{\sqrt{n}}=:c_1\phi^{h}\frac{\delta_1}{\sqrt{n}}.
\end{multline*}

Finally, for the lower-right element, we get that 
\begin{align}
\E&\Big[\Big(\widehat{Y}_{t-1}d(\widehat{Y}_{t},x_{1t}, x_{2t})-\frac{\delta_2}{\sqrt{n}}\Big)\Big(\widehat{Y}_{t-h-1}d(\widehat{Y}_{t-h},x_{1t-h}, x_{2t-h})-\frac{\delta_2}{\sqrt{n}}\Big)\Big]\notag\\
&= \E\Big[\widehat{Y}_{t-1}d(\widehat{Y}_{t},x_{1t}, x_{2t})\widehat{Y}_{t-h-1}d(\widehat{Y}_{t-h},x_{1t-h}, x_{2t-h})\Big] - \frac{\delta_2}{\sqrt{n}}\E\Big[\widehat{Y}_{t-1}d(\widehat{Y}_{t},x_{1t}, x_{2t})\Big]\notag\\
&\hspace{7.5cm} - \frac{\delta_2}{\sqrt{n}}\E\Big[\widehat{Y}_{t-h-1}d(\widehat{Y}_{t-h},x_{1t-h}, x_{2t-h})\Big] + \frac{\delta_2^2}{n}\notag\\
&=\E\Big[d(\widehat{Y}_{t},x_{1t}, x_{2t})\Big]\E\Big[\widehat{Y}_{t-1}\widehat{Y}_{t-h-1}d(\widehat{Y}_{t-h},x_{1t-h}, x_{2t-h})\Big] - \frac{\delta_2^2}{n}\notag\\
&= \frac{\delta_1}{\sqrt{n}}\E\Big[\widehat{Y}_{t-1}\widehat{Y}_{t-h-1}d(\widehat{Y}_{t-h},x_{1t-h}, x_{2t-h})\Big]-\frac{\delta_2^2}{n}.\label{eq:YYd}
\end{align}
Using \eqref{eq:Yhat decomp}, we obtain for $h>1$ that
\begin{align}
\E&\Big[\widehat{Y}_{t-1}\widehat{Y}_{t-h-1}d(\widehat{Y}_{t-h},x_{1t-h}, x_{2t-h})\Big]\notag\\
&=\E\Big[\Big\{\mu(1-\phi^h) + \phi^{h}Y_{t-h-1}+ \phi^{h}\varepsilon_{t-h} + \sum_{i=1}^{h-1}\phi^{i}\varepsilon_{t-i} +\widehat{\varepsilon}_{t-1}\Big\}\widehat{Y}_{t-h-1}d(\widehat{Y}_{t-h},x_{1t-h}, x_{2t-h})\Big]\label{eq:h=1.2}\\
&=\mu(1-\phi^h)\frac{\delta_2}{\sqrt{n}}+\phi^{h}\E\Big[Y_{t-h-1}\widehat{Y}_{t-h-1}d(\widehat{Y}_{t-h},x_{1t-h}, x_{2t-h})\Big] \notag\\
&\hspace{2.8cm} + \phi^{h}\E\Big[\varepsilon_{t-h}\widehat{Y}_{t-h-1}d(\widehat{Y}_{t-h},x_{1t-h}, x_{2t-h})\Big]+0\notag\\
&=\mu(1-\phi^h)\frac{\delta_2}{\sqrt{n}}+\phi^{h}\E\Big[Y_{t-h-1}(Y_{t-h-1}+\widehat{\varepsilon}_{t-h-1})\Big]\E\Big[d(\widehat{Y}_{t-h},x_{1t-h}, x_{2t-h})\Big] \notag\\
&\hspace{2.8cm} + \phi^{h}\E\Big[\widehat{Y}_{t-h-1}\Big]\E\Big[\varepsilon_{t-h}d(\widehat{Y}_{t-h},x_{1t-h}, x_{2t-h})\Big]\notag\\
&=\mu(1-\phi^h)\frac{\delta_2}{\sqrt{n}}+\phi^{h}\E\Big[Y_{t-h-1}^2\Big]\frac{\delta_1}{\sqrt{n}} +\phi^{h}\mu(-2)\mu(1-\phi)\sigma_{\varepsilon}^2\notag\\
&=\mu(1-\phi^h)\frac{\delta_2}{\sqrt{n}}+\phi^{h}\Big[\frac{\sigma_{\varepsilon}^2}{1-\phi^2}+\mu^2\Big]\frac{\delta_1}{\sqrt{n}} -2\phi^{h}\mu^2(1-\phi)\sigma_{\varepsilon}^2\notag\\
&=\mu\frac{\delta_2}{\sqrt{n}}+\phi^{h}\frac{\sigma_{\varepsilon}^2}{1-\phi^2}\frac{\delta_1}{\sqrt{n}}-2\phi^{h}\mu^2(1-\phi)\sigma_{\varepsilon}^2,\notag
\end{align}
where we have used \eqref{eq:p.3.4} in the fourth step, \eqref{eq:(p.30)} in the fifth, and $\mu\delta_1=\delta_2$ in the sixth. For $h=1$, $\widehat{\varepsilon}_{t-1}$ and $d(\widehat{Y}_{t-h},x_{1t-h}, x_{2t-h})$ in \eqref{eq:h=1.2} are not independent, giving rise to the additional term
\begin{align*}
\E\big[\widehat{\varepsilon}_{t-1}\widehat{Y}_{t-2}d(\widehat{Y}_{t-1}, x_{1t-1}, x_{2t-1})\big] &= \E[\widehat{Y}_{t-2}]\E\big[\widehat{\varepsilon}_{t-1}d(\widehat{Y}_{t-1}, x_{1t-1}, x_{2t-1})\big]\\
&=-2\mu^2(1-\phi)\sigma_{\widehat{\varepsilon}}^2,
\end{align*}
where we have used \eqref{eq:p.4+}. Hence, for general $h\geq1$,
\begin{multline*}
	\E\Big[\widehat{Y}_{t-1}\widehat{Y}_{t-h-1}d(\widehat{Y}_{t-h},x_{1t-h}, x_{2t-h})\Big] \\
	= \mu\frac{\delta_2}{\sqrt{n}}+\phi^{h}\frac{\sigma_{\varepsilon}^2}{1-\phi^2}\frac{\delta_1}{\sqrt{n}}-2\phi^{h}\mu^2(1-\phi)\big(\sigma_{\varepsilon}^2 + \sigma_{\widehat{\varepsilon}}^2I_{\{h=1\}}\big).
\end{multline*}
Combining with \eqref{eq:YYd} and once again using $\mu\delta_1=\delta_2$ gives
\begin{align*}
\E&\Big[\Big(\widehat{Y}_{t-1}d(\widehat{Y}_{t},x_{1t}, x_{2t})-\frac{\delta_2}{\sqrt{n}}\Big)\Big(\widehat{Y}_{t-h-1}d(\widehat{Y}_{t-h},x_{1t-h}, x_{2t-h})-\frac{\delta_2}{\sqrt{n}}\Big)\Big]\\
&=\frac{\delta_1}{\sqrt{n}}\Bigg\{\mu\frac{\delta_2}{\sqrt{n}}+\phi^{h}\frac{\sigma_{\varepsilon}^2}{1-\phi^2}\frac{\delta_1}{\sqrt{n}}-2\phi^{h}\mu^2(1-\phi)\big(\sigma_{\varepsilon}^2+\sigma_{\widehat{\varepsilon}}^2I_{\{h=1\}}\big)\Bigg\}-\frac{\delta_2^2}{n}\\
&=\Bigg\{\frac{\sigma_{\varepsilon}^2}{1-\phi^2}\frac{\delta_1}{\sqrt{n}}-2\mu^2(1-\phi)\big(\sigma_{\varepsilon}^2+\sigma_{\widehat{\varepsilon}}^2I_{\{h=1\}}\big)\bigg\}\phi^{h}\frac{\delta_1}{\sqrt{n}}\\
&=:c_{2}\phi^{h}\frac{\delta_1}{\sqrt{n}}.
\end{align*}
Overall, we obtain that
\[
	\E\Big[\Big(\widehat{Z}_{n,t}-\frac{\delta}{\sqrt{n}}\Big)\Big(\widehat{Z}_{n,t-h}-\frac{\delta}{\sqrt{n}}\Big)^\prime\Big]=\begin{pmatrix}0 & 0 \\ c_{1}\phi^{h}\frac{\delta_1}{\sqrt{n}} & c_{2}\phi^{h}\frac{\delta_1}{\sqrt{n}}\end{pmatrix}.
\]
Using properties of the geometric series, it follows that
\[
	\frac{1}{n}\sum_{h=1}^{n}\sum_{t=h+1}^{n} c_{i}\phi^{h}\frac{\delta_1}{\sqrt{n}}=c_i\frac{\delta_1}{\sqrt{n}}\Bigg\{\sum_{h=1}^{n}\frac{n-h}{n}\phi^{h}\Bigg\}=o(1),
\]
implying
\[
	(II)=\frac{1}{n}\sum_{h=1}^{n}\sum_{t=h+1}^{n}\E\Big[\Big(\widehat{Z}_{n,t}-\frac{\delta}{\sqrt{n}}\Big)\Big(\widehat{Z}_{n,t-h}-\frac{\delta}{\sqrt{n}}\Big)^\prime\Big]=o(1).
\]
All in all, \eqref{eq:Omega} follows.
\end{proof}

\singlespacing
\putbib[thebib]
\end{bibunit}

\end{document}